\definecolor{Greenish}{RGB}{34, 139 , 34}
\definecolor{Blueish}{RGB}{39,64, 139}
\newtheorem{theorem}{Theorem}[section]
\newtheorem{lemma}[theorem]{Lemma}
\newtheorem{proposition}[theorem]{Proposition}
\newtheorem{corollary}[theorem]{Corollary}
\theoremstyle{definition}
\numberwithin{equation}{section}
\def\b1{\mathds{1}}
\def\1{\mathds{1}}
\def\Re{\mathrm{Re}}
\def\Im{\mathrm{Im}}
\def\a0{\mathrm{a}_0}
\def\t{\mathrm{t}}
\begin{document}
\title{Dispersive estimates and long-time validity for Bogoliubov dynamics of interacting Bose gases}

\author{Phan Th\`{a}nh Nam \thanks{Department of Mathematics, LMU Munich, Theresienstr. 39, 80333 Munich,
Germany}  \and Simone Rademacher \footnotemark[1] \and Avy Soffer \thanks{Rutgers University, Mathematics department, 110 Frelinghuysen Road, Piscataway, NJ 08854, US}}

\date{\today}

\maketitle

\begin{abstract} 
We consider the Bogoliubov approximation for the many-body quantum dynamics of weakly interacting Bose gases and establish a uniform-in-time validity of the Bogoliubov theory. The proof relies on a detailed analysis of the dispersive behavior of the symplectic Bogoliubov dynamics, which allows for a rigorous derivation of the Bogoliubov theory as an effective description of quantum fluctuations around the Bose--Einstein condensate on all time scales.
\end{abstract}

\section{Introduction}

As theoretically predicted by Bose \cite{Bose} and Einstein \cite{Einstein} and later experimentally observed \cite{Wieman,Ketterle}, trapped Bose gases at low temperatures undergo a peculiar phase transition and form a Bose-Einstein condensate. In this state, most of the interacting bosons occupy the same quantum state, known as the condensate.

Typical experimental setups study the time evolution of a condensate formed from an initially trapped, cold Bose gas after the trap is released. While the dynamics of the condensate is well understood, the dynamics of the particles outside the condensate, known as quantum depletion and theoretically predicted by Bogoliubov theory \cite{Bogo}, have also become accessible to experiments \cite{Ketterle2} in recent years.

This paper is dedicated to the mathematical analysis of the dynamics in Bose–Einstein condensates based on the first principles of quantum physics, specifically many-body quantum systems. In the large particle limit, we mathematically prove that Bogoliubov dynamics effectively describes the quantum depletion for all times, under suitable assumptions on the initial state. Our results complement recent discussions in the physics literature concerning the survival of quantum depletion after the trap is released (see, for example, \cite{Ketterle3,QPS} and \cite{Ross}) by providing a mathematical framework for the long-time validity of Bogoliubov theory. 

\subsection{Mathematical setting}

From the first principles of quantum physics, the dynamics of a Bose gas of $N$ particles in $\mathbb{R}^3$ is typically described by the Schr\"odinger equation 
\begin{align}
\label{eq:Schroe}
i \partial_t \psi_{N,t} = H_N \psi_{N,t} 
\end{align}
where $H_N$ is the Hamiltonian acting on $L^2_s(\mathbb{R}^{3N})$,  the symmetric subspace of $L^2(\mathbb{R}^{3N})$ consisting of wave functions that are invariant under all permutations of the particle coordinates in $\mathbb{R}^3$. In the present paper, we are interested in weakly interacting bosons with mean-field type pairwise interactions, where
\begin{align}
H_N = \sum_{i=1}^N (-\Delta_i) + \frac{1}{N} \sum_{1\leq i<j\leq N} v (x_i -x_j)  \;  
\end{align}
with a regular even potential $v:\mathbb{R}^3 \to \mathbb{R}$. 

At very low temperatures, Bose gasses undergo a remarkable phase transition from an ordinary gas to a novel state of matter, in which the majority of the interacting atoms occupy a single quantum state. Mathematically, since the interaction is sufficiently weak, it is natural to focus on the ansatz where the particles are essentially independent, leading to the consideration of factorized states of the form
\begin{align}
\psi_{N,t} \approx \varphi_t^{\otimes N}, \quad \text{with} \quad \varphi_t \in L^2(\mathbb{R}^3),
\end{align}
which are often referred to as pure condensates. Although the dynamics does not preserve the exact factorization property, it is well known that the majority of the $N$ interacting particles remain in the same one-particle state for all positive times $t>0$. In fact, due to the mean-field scaling of the interaction potential, the evolution of the condensate $\varphi_t \in L^2(\mathbb{R}^3)$  is governed by a nonlinear Schr\"odinger equation of the form
\begin{align}
\label{def:Hartree}
i \partial_t \varphi_t = h_H(t) \varphi_t, \quad h_H(t) = \big( - \Delta + v * |\varphi_t|^2 \big),
\end{align}
with initial data $\varphi_0 \in H^2(\mathbb{R}^3)$ normalized by  $\| \varphi_0 \|_{L^2} = 1$ (in this paper we write  $\| f \|_{L^p ( \mathbb{R}^3)} = \| f \|_{L^p}$ for any $p \geq 1.$). The derivation of the Hartree equation \eqref{def:Hartree} from the many-body dynamics \eqref{eq:Schroe} for the condensate's effective description in trace norm topology in the large particle limit, $N \rightarrow \infty$, is widely studied in the literature under suitable assumptions on the potential \cite{GV3,GV4,GV2,H,S}, singular potentials \cite{BGM,EY,RS,KP,P}, for factorized initial data \cite{RS} and derived with optimal rate of convergence $\mathcal{O} (N^{-1})$ in \cite{ES,CLS,CL,L}, and  {remarkably even for all times in recent works \cite{CLS,DL}. }

It is well known that, under suitable assumptions on $\varphi_0$ and potential $v$, the condensate $\varphi_t$ disperses in time, i.e. $\| \varphi_t\|_{L^\infty} \to 0$ as $t \to \infty$; see, e.g., \cite{CW,CO,DZ,DHR,GO,GV1,GV2,GM1,HT,O}. In particular, we will recall in Proposition \ref{prop:dispersive} below  a dispersive estimate from \cite{GM1} which is the starting point of our analysis.

In the present paper, we go one step further, focusing on the quantum fluctuations around the Hartree dynamics. For this purpose, following \cite{LNSS,LNS} we factor out the contribution of the condensate by a unitary transformation mapping from the original $N$-body Hilbert space to the Fock space of excited particles. To be precise, as observed in \cite{LNSS}, any $N$-particle function $\psi_N \in L^2_s ( \mathbb{R}^{3N})$ can be uniquely decomposed as 
\begin{align}
\psi_N = \sum_{j=0}^N \varphi_t^{\otimes (N-j)} \otimes_s \xi_t^{(j)}
\end{align}
and mapped on its so-called excitation vector $\xi_{N,t} := \lbrace \xi_t^{(1)}, \dots, \xi_t^{(N)} \rbrace$ by the unitary 
\begin{align}
\label{def:U}
\mathcal{U}_{N,t} : L_s^2( \mathbb{R}^{3N} ) \rightarrow \bigoplus_{n=0}^N L^2_{\perp \varphi_t}( \mathbb{R}^{3})^{\otimes n}, \quad \mathcal{U}_{N,t}\psi_{N,t} \mapsto \lbrace \xi_t^{(0)}, \dots, \xi_t^{(N)} \rbrace 
\end{align}
where $L^2_{\perp \varphi_t}( \mathbb{R}^{3N})$ denotes the orthogonal complement of the solution to the Hartree equation $\varphi_t$ (i.e. the condensate) in $L^2( \mathbb{R}^{3N})$. The fluctuations around the Hartree evolution are described through the fluctuation dynamics 
\begin{align}\label{def:flucdyn}
\mathcal{W}_N (t;s) = \mathcal{U}_{N,t} e^{-i (t-s)H_N} \mathcal{U}_{N,s}^* \; 
\end{align}
that is an element of the truncated Fock space 
\begin{align}
\mathcal{F}_{\perp \varphi_t}^{\leq N} := \bigoplus_{n=0}^N L^2_{\perp \varphi_t}( \mathbb{R}^{3})^{\otimes n} \; . 
\end{align}

For every finite time $t \ge s$, it was proved in \cite{LNS} that, in the limit $N \to \infty$, the fluctuation dynamics $\mathcal{W}_N(t;s)$ converges to the Bogoliubov dynamics, a simplified model that serves as an effective framework for describing second-order corrections to the Bose-Einstein condensate. Our goal is to show that the Bogoliubov dynamics, in a suitable symplectic representation, exhibits a strong dispersive behavior. As a consequence, we establish that its validity as an approximation to the full many-body dynamics in \eqref{def:flucdyn} holds {\em uniformly in time}.

\subsection{Symplectic Bogoliubov dynamics} 

In the physics literature, quantum fluctuations around a Bose-Einstein condensate are described by a \textit{ symplectic Bogoliubov dynamics}, that is a two-parameter family of operators $\Theta_{(t;s)}$ on $ L^2( \mathbb{R}^3) \oplus L^2( \mathbb{R}^3)$ given in terms of two bounded linear operators $\gamma_{(t;s)}, \sigma_{(t;s)}$ on $L^2( \mathbb{R}^3)$ through 
\begin{align}
\label{def:gamma,sigma}
\Theta_{(t;s)} = \begin{pmatrix}
\gamma_{(t;s)} & \overline{\sigma_{(t;s)}} \\
\sigma_{(t;s)} & \overline{\gamma_{(t;s)}} 
\end{pmatrix} , \quad \text{with} \quad \gamma_{(t;t)} (x;y)  = \delta (x-y) , \quad \text{and} \quad \sigma_{(t;t)} = 0 \; . 
\end{align}
The symplectic Bogoliubov dynamics satisfies the differential equation 
\begin{align}
i \partial_s \Theta_{(t;s)} = \mathcal{T}_s \Theta_{(t;s)} , \qquad \text{with} \quad \mathcal{T}_s = \begin{pmatrix}
 H_s & - K_s \\
\overline{K}_s & -  H_s  
\end{pmatrix} \; . \label{def:bogo}
\end{align}
where the operators 
\begin{align}
\label{def:Hs,Ks}
H_s = h_H (s) + \widetilde{K}_{1,s}, \quad \text{and} \quad K_s =  \widetilde{K}_{2,s}
\end{align}
are given in terms of the Hartree Hamiltonian $h_H (s)$ from \eqref{def:Hartree} and the operators $\widetilde{K}_{1,s}, \widetilde{K}_{2,s}$ having kernels $\widetilde{K}_{1,s} (x;y) = (q_s K_1 q_s) (x;y)$ resp. $\widetilde{K}_{2,s}(x;y) = (\overline{q}_s K_{2,s} q_s) (x;y)$ that are through the projection $q_s = 1- \vert \varphi_s \rangle \langle \varphi_s \vert$ orthogonal to the condensate's $\varphi_s$ dynamics \eqref{def:Hartree} and given by 
\begin{align}
\label{def:K} 
K_{1,s} (x;y) = v(x-y) \varphi_s (x) \overline{\varphi}_s (y) , \quad K_{2,s} (x;y) = v(x-y) \varphi_s (x) \varphi_s (y) \; . 
\end{align}

As shown in \cite{NM1,R}, the Bogoliubov dynamics can be characterized by its explicit action on creation and annihilation operators. More precisely,  the dynamics $\mathcal{W}_{2,t}$ acts on the Fock space operator  
\begin{align}
A(f,g) = a^*(f) + a(\overline{g}), \quad f,g \in L^2(\mathbb{R}^3),
\end{align}
according to  
\begin{align}
\label{eq:action-bogo}
\mathcal{W}_2^*(t;s) \, A(f,g) \, \mathcal{W}_2(t;s) = A\big( \Theta(t;s)(f,g) \big),
\end{align}
where $\Theta(t;s)$ denotes the symplectic Bogoliubov dynamics defined in \eqref{def:bogo}. This identity provides a direct way to compute the expectations of certain operators with respect to quasi-free states in terms of the operators $\gamma_{(t;s)}$ and $\sigma_{(t;s)}$ introduced in \eqref{def:bogo}, thus establishing a clear link between the symplectic structure of the one-body operators in \eqref{def:bogo} and the Bogoliubov dynamics in Fock space.

Our first result shows that the symplectic Bogoliubov dynamics satisfies dispersive estimates. To formulate its dispersive behavior, we define for an operator $\sigma_{(t;s)}$ on $L^2( \mathbb{R}^3)$ the norm 
\begin{align}
\label{def:norm}
\| \sigma_{(t;s)} \|_{L^\infty \times L^2}^2 := \sup_{x \in \mathbb{R}^3} \| \sigma_{(t;s)} (x, \cdot ) \|_{L^2}^2 \; . 
\end{align}

\begin{theorem}[Dispersive estimate for  Bogoliubov dynamics]
\label{thm:bogo-disp}
Let $v\in C_0^2( \mathbb{R}^3)$ be a nonnegative, radially symmetric and decreasing function. Let $\varphi_t$ be the ($L^2$-normalized) solution to the Hartree equation \eqref{def:Hartree} with initial data $\varphi_0 \in W^{1,\ell} ( \mathbb{R}^3)$ for sufficiently large $\ell>0$. Then   for all $0 \leq s \leq t$, we have 
\begin{align}
\| \sigma_{(t;s)} \|_{L^\infty \times L^2} \leq \frac{C}{(\vert t-s\vert +1 )^{3/2}} 
 \end{align}
 with a constant $C>0$ independent of $t,s$. 
  
\end{theorem}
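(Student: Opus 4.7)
The overall plan is to derive the bound from a Duhamel identity for $\sigma_{(t;s)}$, treating the free Schr\"odinger flow as the unperturbed evolution and using the dispersive decay of the condensate $\varphi_\tau$ (Proposition~\ref{prop:dispersive}) to tame every perturbative term.

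First I would split $\mathcal{T}_s = \mathcal{T}^{(0)} + \mathcal{V}_s$ with $\mathcal{T}^{(0)} = \mathrm{diag}(-\Delta,\Delta)$, read off the lower-left entry of \eqref{def:bogo} to obtain $i\partial_s\sigma_{(t;s)} = \overline{K}_s \gamma_{(t;s)} - H_s \sigma_{(t;s)}$, and integrate Duhamel's formula backwards from the terminal condition $\sigma_{(t;t)}=0$ to arrive at
\begin{equation*}
\sigma_{(t;s)} = i\int_s^t e^{i(\tau-s)\Delta}\Big[\overline{K}_\tau\gamma_{(t;\tau)} - (V_\tau+\widetilde{K}_{1,\tau})\sigma_{(t;\tau)}\Big] d\tau, \qquad V_\tau = v*|\varphi_\tau|^2,
\end{equation*}
with $\Delta$ acting on the first (left) kernel variable. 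Applying the scalar $3$D dispersive estimate $\|e^{it\Delta}f\|_{L^\infty}\le C|t|^{-3/2}\|f\|_{L^1}$ pointwise in the passive kernel variable $y$ yields the ``$y$-valued'' version $\|e^{it\Delta_x} F\|_{L^\infty_x L^2_y}\le C|t|^{-3/2}\|F\|_{L^1_x L^2_y}$, which is precisely calibrated to the norm of Theorem~\ref{thm:bogo-disp}.

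Next I would bound each piece of the integrand in $L^1_x L^2_y$ using the explicit kernels in \eqref{def:K} together with $\|\varphi_\tau\|_{L^\infty}\lesssim (1+\tau)^{-3/2}$. The source kernel $\overline{K}_\tau\gamma$ factors as $\overline{\varphi_\tau(x)}\int v(x-z)\overline{\varphi_\tau(z)}\gamma(z;y)\,dz$: one copy of $\overline{\varphi_\tau(x)}$ is absorbed into the $L^p_x$ integration, the $z$-convolution is controlled by Young's inequality (using $v\in L^1\cap L^\infty$), and the remaining $L^2_y$ norm only sees $\|\gamma\|_{\mathrm{op}}$. Keeping $\gamma$ in operator norm is crucial because $\gamma$ is a perturbation of the identity and is not Hilbert--Schmidt --- this is precisely the reason the statement uses the mixed norm $L^\infty\times L^2$. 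The multiplier $V_\tau$ obeys $\|V_\tau\|_{L^p}\lesssim (1+\tau)^{-3(1-1/p)}$ by interpolating $\|\varphi_\tau\|_{L^2}=1$ against $\|\varphi_\tau\|_{L^\infty}$, and $\widetilde{K}_{1,\tau}$ factorizes analogously, each carrying at least one extra factor $\|\varphi_\tau\|_{L^\infty}$ of decay. Setting $M(s) := \|\sigma_{(t;s)}\|_{L^\infty\times L^2}$, the Duhamel estimate then produces a nonlinear integral inequality of the schematic form
\begin{equation*}
M(s) \le C\int_s^t \frac{1}{(\tau-s)^{3/2}}\left[\frac{1}{(1+\tau)^{3/2}} + \frac{M(\tau)}{(1+\tau)^\alpha}\right] d\tau
\end{equation*}
for some $\alpha>0$, which a Gronwall/bootstrap argument should close to produce $M(s)\lesssim(|t-s|+1)^{-3/2}$. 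Uniform boundedness of $\|\gamma_{(t;s)}\|_{\mathrm{op}}$ and $\|\sigma_{(t;s)}\|_{\mathrm{op}}$ is a prerequisite; it follows from the symplectic structure of $\Theta_{(t;s)}$ (one relates $\|\gamma\|_{\mathrm{op}}^2$ to $1+\|\sigma\|_{\mathrm{op}}^2$) together with a standard Bogoliubov energy estimate.

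The main obstacle is the non-integrable singularity $(\tau-s)^{-3/2}$ at $\tau=s$ in the Duhamel integral. It has to be regularized either by interpolating the dispersive estimate with the $L^2$--isometry of $e^{it\Delta}$ to obtain $\|e^{it\Delta}\|_{L^{p'}\to L^p}\lesssim|t|^{-3(1/p-1/p')/2}$ for $p$ slightly greater than $2$, or by splitting the range of integration and treating the short-time portion $\tau\in[s,s+1]$ by a direct $L^2$--unitarity bound (exploiting $\sigma_{(t;t)}=0$) while reserving the full dispersive decay for $\tau\ge s+1$. This short/long-time split is also the reason the bound is naturally stated with $|t-s|+1$ rather than $|t-s|$.
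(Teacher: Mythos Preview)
Your skeleton---Duhamel against the free Schr\"odinger flow, the $L^1\to L^\infty$ dispersive estimate in the first kernel variable, and a short/long-time split---matches the paper's. But two steps fail as written.

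The decisive one is the direction of the dispersive factor. With your propagator $e^{i(\tau-s)\Delta}$ the source contribution from $\overline K_\tau\gamma_{(t;\tau)}$ is $\int_s^t(\tau-s)^{-3/2}(1+\tau)^{-3/2}\,d\tau$; for $s=0$ this is $O(1)$ as $t\to\infty$ and never produces $(1+t)^{-3/2}$, so the argument cannot close regardless of how you treat the self-interacting term. The paper's Duhamel carries the free propagator as $e^{-i(t-\tau)\Delta}$ (singularity at $\tau=t$, not $\tau=s$), so the relevant convolution is $\int_s^{t-1}|t-\tau|^{-3/2}(1+\tau)^{-3/2}\,d\tau\lesssim(1+|t-s|)^{-3/2}$ by the usual splitting at $(s+t)/2$. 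Separately, your Gronwall on $M(\tau)=\|\sigma_{(t;\tau)}\|_{L^\infty\times L^2}$ requires $\alpha>0$, but the only H\"older pairing that retains $M(\tau)$ in the $L^1_xL^2_y$ bound of $V_\tau\sigma_{(t;\tau)}$ is with $\|V_\tau\|_{L^1}=\|v\|_{L^1}\|\varphi_\tau\|_{L^2}^2=\|v\|_{L^1}$, which carries no $\tau$-decay. The paper does not iterate on the mixed norm at all: it first proves uniform Hilbert--Schmidt bounds on $\sigma_{(t;\tau)}$ and on $\eta_{(t;\tau)}=\gamma_{(t;\tau)}-\mathds{1}$ (Lemma~\ref{lemma:gamma,sigma}), pairs these against $\|V_\tau\|_{L^2}\lesssim(1+\tau)^{-3/2}$ and $\|\widetilde K_{j,\tau}\|_{\rm HS}\lesssim(1+\tau)^{-3/2}$, and obtains the $L^\infty\times L^2$ bound in a single pass. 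Finally, for the short-time window near the singularity, $L^2$-unitarity alone cannot deliver an $L^\infty_x$ bound; the paper uses the Sobolev embedding $H^2(\mathbb R^3)\hookrightarrow L^\infty(\mathbb R^3)$, and this is precisely why Lemma~\ref{lemma:gamma,sigma} also establishes uniform-in-time bounds on $\|\nabla_1\sigma_{(t;\tau)}\|_{\rm HS}$ and $\|\Delta_1\sigma_{(t;\tau)}\|_{\rm HS}$.
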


On the one hand, the starting point of our proof of Theorem \ref{thm:bogo-disp} is the dispersive properties of the Hartree equation established in \cite{GM1}; in particular, the assumptions on the interaction potential $v$ and the initial datum $\varphi$ are taken from \cite[Proposition 3.1]{GM1} (see also Proposition \ref{prop:dispersive} below). 

On the other hand, the dispersive properties of the symplectic Bogoliubov dynamics, expressed in terms of the norm $\|\sigma_{(t;s)}\|_{L^\infty \times L^2}$, seem to appear here for the first time in Theorem \ref{thm:bogo-disp}. We refer to the earlier works \cite{GM1, NM1} for analysis of the long-time behaviour of $\gamma{(t;s)}$ and $\sigma_{(t;s)}$, although using norms different from those in \eqref{def:norm} (in particular, in \cite{GM1, NM1}, the norm $\|\sigma_{(t;s)}\|_{L^2 \times L^2}$ is shown to grow only logarithmically in time -- this will be improved in our Lemma \ref{lemma:gamma,sigma} below).

\subsection{Many-body Bogoliubov dynamics}

Next we apply the previous dispersive result to show that  the fluctuation dynamics is in the large particle limit $N \rightarrow \infty$ well approximated by a \textit{many-body Bogoliubov dynamics} $\mathcal{W}_2(t;0)$: 
A \textit{many-body Bogoliubov dynamics} is unitary map on the full bosonic Fock space $\mathcal{F} := \bigoplus_{n \geq 0} L_s^2( \mathbb{R}^{3N})^{\otimes_s^n}$ with quadratic generator, i.e. $\mathcal{W}_2(t;0)$ that satisfies 
\begin{align}
\label{def:bogo-asymp}
i \partial_t \mathcal{W}_2 (t;0) = \mathbb{H}_t \mathcal{W}_2 (t;0) 
\end{align}
with $\mathcal{W}_2 (0;0) =1$ and 
\begin{align}
\label{def:H}
\mathbb{H}_t  = d\Gamma( H_t) + \frac{1}{2} \int dxdy \; \big[ K_{t} (x;y) a_x^*a_y^* + {\rm h.c.} \big] \, 
\end{align}
where $a_x^*,a_x$ denote the point-wise creation and annihilation operators on the bosonic Fock space. In our setting, we have $H_t$ and $K_t$ are given by \eqref{def:Hs,Ks}.

In our second result, we make the physicists' heuristics mathematically rigorous and prove that the Bogoliubov dynamics serves as an effective description for quantum fluctuations in Bose-Einstein condensates for all times.

\begin{theorem}[Uniform-in-time validity of Bogoliubov approximation] 
\label{thm:norm}
Let $v \in C_0^2 ( \mathbb{R}^3)$ be non-negative, spherically symmetric, decreasing and $\varphi_t$, $t \in \mathbb{R}$, denote the solution to the Hartree equation \eqref{def:Hartree} with initial data $\varphi_0  \in W^{\ell,1}( \mathbb{R}^3)$ for sufficiently large $\ell >0$. Moreover, assume that $\psi_{N,t}$ denotes the solution to the Schr\"odinger equation with initial data $\psi_{N,0} = \varphi_0^{\otimes N}$. 
Let $\mathcal{U}_{N,t}$ be given by \eqref{def:U} and $\mathcal{W}_2(t;0)$ denote the Bogoliubov dynamics defined in \eqref{def:bogo-asymp}. Then, for some $C>0$, we have 
\begin{align}
\| \mathcal{U}_{N,t} \psi_{N,t} - \mathcal{W}_2 (t;0) \Omega \|^2 \leq   \frac{C}{N} \quad \text{for all} \quad t \in \mathbb{R} \; . 
\end{align}
\end{theorem}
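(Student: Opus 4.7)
The plan is to close a Gronwall--Duhamel inequality that compares $\xi_{N,t} := \mathcal{U}_{N,t}\psi_{N,t}$ with the Bogoliubov vector $\Omega_t := \mathcal{W}_2(t;0)\Omega$, using the dispersive estimate of Theorem~\ref{thm:bogo-disp} to render the resulting time integral uniformly convergent in $t$. First, differentiating the squared norm and using that $\xi_{N,t}$ satisfies a Schr\"odinger equation with the standard fluctuation generator $\mathcal{L}_{N,t}$ (obtained by conjugating $H_N$ by $\mathcal{U}_{N,t}$ and accounting for $i\partial_t \mathcal{U}_{N,t}$), Cauchy--Schwarz together with the matching initial datum $\xi_{N,0} = \Omega_0 = \Omega$ yields
\begin{align*}
\|\xi_{N,t} - \Omega_t\| \leq \int_0^t \|(\mathcal{L}_{N,s} - \mathbb{H}_s)\Omega_s\|\, ds ,
\end{align*}
so it suffices to show that the right-hand side is $O(N^{-1/2})$ uniformly in $t$.

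I would then expand $\mathcal{L}_{N,s} - \mathbb{H}_s = N^{-1/2}\mathcal{C}_s + N^{-1}\mathcal{Q}_s + \mathcal{E}_{N,s}$, where $\mathcal{C}_s$ and $\mathcal{Q}_s$ are the cubic and quartic pieces in creation/annihilation operators (the scalar part contributes only a global phase, and the linear part vanishes by virtue of the Hartree equation~\eqref{def:Hartree}), while $\mathcal{E}_{N,s}$ collects the truncation corrections from $\mathcal{F}^{\leq N}_{\perp\varphi_s}$, bounded in operator norm by $C N^{-1}(\mathcal{N}+1)^2$. A prerequisite is a uniform moment bound $\langle \Omega_s, (\mathcal{N}+1)^k \Omega_s\rangle \leq C_k$ for every $k \geq 1$: this follows by induction from the Bogoliubov-transformation identity~\eqref{eq:action-bogo}, which reduces all moments to traces of polynomials in $\gamma_{(s;0)}$ and $\sigma_{(s;0)}$; the base case, namely the uniform boundedness of $\|\sigma_{(s;0)}\|_{L^2\times L^2}$, is precisely the improved estimate announced in Lemma~\ref{lemma:gamma,sigma}, itself a consequence of Theorem~\ref{thm:bogo-disp}.

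For the cubic part, a standard Fock-space estimate exploiting the compact support of $v$ gives
\begin{align*}
\|\mathcal{C}_s \Omega_s\| \leq C\|v\|_{L^2}\|\varphi_s\|_{L^\infty}\|(\mathcal{N}+1)^{3/2}\Omega_s\| ,
\end{align*}
which, by the moment bound and the dispersive decay $\|\varphi_s\|_{L^\infty} \leq C(1+s)^{-3/2}$ from Proposition~\ref{prop:dispersive}, is integrable in $s$. The truncation error $\mathcal{E}_{N,s}$ is estimated analogously, directly using the moment bound.

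The main obstacle is the quartic remainder $N^{-1}\mathcal{Q}_s$: a direct moment bound yields only $\|\mathcal{Q}_s \Omega_s\| \leq C$, which after time integration produces the unacceptable growth $Ct/N$. To overcome this, I would conjugate $\mathcal{Q}_s$ by $\mathcal{W}_2(s;0)$ using~\eqref{eq:action-bogo}: the identity $a(f)\Omega_s = \mathcal{W}_2(s;0)\, a^*\!\left(\overline{\sigma_{(s;0)}\overline{f}}\right)\Omega$ turns every annihilation operator acting on $\Omega_s$ into a creation operator carrying an explicit factor of $\sigma_{(s;0)}$. A Wick-type expansion of $\mathcal{Q}_s \Omega_s$ then produces only terms integrated against $v$ together with at least one factor of $\sigma_{(s;0)}$ and powers of $\gamma_{(s;0)}$; controlling the $\sigma_{(s;0)}$ kernel in the $L^\infty\times L^2$ norm via Theorem~\ref{thm:bogo-disp} yields $\|\mathcal{Q}_s \Omega_s\| \leq C(1+s)^{-3/2}$, which is integrable in $s$. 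Assembling all pieces gives $\int_0^\infty\|(\mathcal{L}_{N,s} - \mathbb{H}_s)\Omega_s\|\, ds \leq C N^{-1/2}$ uniformly in $t$, and squaring yields the claimed bound $C/N$.
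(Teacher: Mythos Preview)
Your proposal is essentially correct and follows the same core strategy as the paper: both reduce via Duhamel to controlling the generator difference acting on $\Omega_s$, observe that all pieces except the quartic remainder are rendered integrable by the Hartree dispersion $\|\varphi_s\|_{L^\infty}\le C(1+s)^{-3/2}$, and then invoke the $L^\infty\times L^2$ decay of $\sigma_{(s;0)}$ from Theorem~\ref{thm:bogo-disp} to make the quartic contribution integrable as well.

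The organizational difference is this. The paper differentiates $\|\xi_{N,t}-\Omega_t\|^2$ (rather than $\|\xi_{N,t}-\Omega_t\|$) and, for the quartic piece $\mathcal R^{(4)}_{N,s}$, exploits its self-adjointness to write $\Im\langle\xi_{N,s},\mathcal R^{(4)}\Omega_s\rangle=\Im\langle\xi_{N,s}-\Omega_s,\mathcal R^{(4)}\Omega_s\rangle$; a weighted Cauchy--Schwarz then puts one factor $\|(\mathcal N+1)(\xi_{N,s}-\Omega_s)\|$ (uniformly bounded by the moment lemmas for \emph{both} evolutions) against the four-point quantity $\big(\int|v(x-y)|\,\|a_xa_y\Omega_s\|^2\big)^{1/2}$, which is computed explicitly via the Bogoliubov action and bounded by $C(1+s)^{-3/2}$ using Theorem~\ref{thm:bogo-disp}. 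Your route instead asks for the direct norm bound $\|\mathcal Q_s\Omega_s\|\le C(1+s)^{-3/2}$, which is correct but forces the full eight-point Wick expansion of $\langle\Omega_s,\mathcal Q_s^2\Omega_s\rangle$ --- more combinatorics, same ingredients. The paper's splitting is cleaner here; your version is more systematic and avoids needing moment control on the fluctuation dynamics $\mathcal W_N$.

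One point you should tighten: the sentence ``$\mathcal E_{N,s}$ \ldots\ bounded in operator norm by $CN^{-1}(\mathcal N+1)^2$'' followed by ``estimated analogously, directly using the moment bound'' is not sufficient as stated. A time-independent bound $CN^{-1}$ on $\|\mathcal E_{N,s}\Omega_s\|$ would produce $\int_0^t\|\mathcal E_{N,s}\Omega_s\|\,ds\le Ct/N$, destroying uniformity in $t$. In fact every truncation or subleading remainder (the $b^*b^*-a^*a^*$ correction in $\widetilde{\mathbb H}_s-\mathbb H_s$, and the pieces $\mathcal R^{(1)}_{N,s},\mathcal R^{(2)}_{N,s}$ in the paper's notation) carries an explicit factor of $\|K_{2,s}\|_{\rm HS}$ or $\|\varphi_s\|_{L^\infty}^2$, hence decays like $(1+s)^{-3/2}$ or better; you should say so explicitly rather than hide it behind ``moment bound''. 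The paper also handles the $\mathds 1_{\mathcal N\le N}$ projections carefully to make sense of $\mathcal L_{N,s}$ acting on the untruncated vector $\Omega_s$; your Duhamel identity needs the same care.
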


The validity of the Bogoliubov approximation for mean-field bosons in the large particle limit on time scales $t= o(\ln (N))$ is well known and widely studied in the literature under suitable assumptions on the initial data and the potential, see for example \cite{BPPS,LNS,NM1,NM2,MPP,GM1,GM2,GMM1,GMM2}. Typically, the restriction $t \ll \ln (N)$ comes from the use of a Gr\"onwall-type estimate, leading to error bounds of the form $e^{Ct} N^{-1}$. In general, these bounds can be improved by using dispersive estimates for the Hartree dynamics, and in fact  the Bogoliubov approximation  has been justified for times scales $t \ll \sqrt{N}$; see, e.g. \cite{GM1,GM2,NM1}.

Thus, our Theorem \ref{thm:norm} significantly strengthens the existing results and validates Bogoliubov's approximation for all times $t \in \mathbb{R}$. At this point, let us remark that even if one could show that the difference between the $N$-body Hamiltonian $H_N$ and the Bogoliubov generator $\mathbb{H}_t$ in \eqref{def:H} is bounded by $\mathcal{N}^2/N$ due to the mean-field scaling, then a direct application of such a simple bound—as done in some previous works, see, e.g., \cite{NM1}—would typically lead to an error of order $t/N$ in the norm approximation, which is meaningful only for times $t < N$. Therefore, the norm approximation in Theorem \ref{thm:norm} goes substantially beyond a simple combination of the dispersive estimates for the Hartree dynamics and the mean-field scaling. In particular, the proof of Theorem \ref{thm:norm} relies crucially on the dispersive estimate for the Bogoliubov dynamics established in Theorem \ref{thm:bogo-disp}.

Moreover, if we wait for a sufficiently long time, the dispersive behavior of the many-body Bogoliubov dynamics becomes dominant, implying that the Bogoliubov dynamics itself can be approximated by the free evolution $e^{-it d\Gamma(\Delta)}$. As an immediate consequence of Theorem \ref{thm:norm}, we thus obtain the following norm approximation of the fluctuation dynamics by the free evolution of a quasi-free state, which remains valid on time scales of order $o(N^2)$.

\begin{corollary}
\label{cor:largetimes}
Under the same assumptions of Theorem \ref{thm:norm}, we have for some $C_1, C_2 >0$
\begin{align}
\| \mathcal{U}_{N,t} \psi_{N,t} - e^{-i(t-t_0)d\Gamma ( \Delta )} \mathcal{W}_2 (t_0,0)\Omega \|^2 \leq \frac{C_1}{N} \quad \text{for all} \quad t \geq t_0 \geq C_2 N^2 \; . 
\end{align}
\end{corollary}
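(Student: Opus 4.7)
The plan is to combine Theorem~\ref{thm:norm} with a Duhamel comparison between the many-body Bogoliubov dynamics $\mathcal{W}_2(t;t_0)$ and the free Fock evolution $e^{-i(t-t_0)d\Gamma(\Delta)}$ on the interval $[t_0,t]$. By the triangle inequality,
\begin{align*}
\|\mathcal{U}_{N,t}\psi_{N,t}-e^{-i(t-t_0)d\Gamma(\Delta)}\mathcal{W}_2(t_0;0)\Omega\|
&\le \|\mathcal{U}_{N,t}\psi_{N,t}-\mathcal{W}_2(t;0)\Omega\|\\
&\quad +\|\mathcal{W}_2(t;0)\Omega-e^{-i(t-t_0)d\Gamma(\Delta)}\mathcal{W}_2(t_0;0)\Omega\|,
\end{align*}
and Theorem~\ref{thm:norm} already bounds the first term by $\sqrt{C/N}$ for every $t\in\mathbb{R}$. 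It therefore suffices to show that the second term is of order $1/N$ whenever $t_0\ge C_2N^2$.

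For this second term I would invoke the cocycle property $\mathcal{W}_2(t;0)=\mathcal{W}_2(t;t_0)\mathcal{W}_2(t_0;0)$, set $\Psi_{t_0}:=\mathcal{W}_2(t_0;0)\Omega$, and apply Duhamel's formula
$$\mathcal{W}_2(t;t_0)-e^{-i(t-t_0)d\Gamma(\Delta)}=-i\int_{t_0}^{t}\mathcal{W}_2(t;s)\bigl[\mathbb{H}_s-d\Gamma(-\Delta)\bigr]e^{-i(s-t_0)d\Gamma(\Delta)}\,ds.$$
The generator difference consists of the one-body multiplier $d\Gamma(v*|\varphi_s|^2)$, the exchange term $d\Gamma(\widetilde{K}_{1,s})$, and the pairing $\tfrac12\int K_{2,s}(x;y)a_x^*a_y^*\,dxdy+\mathrm{h.c.}$ The dispersive estimate $\|\varphi_s\|_{L^\infty}\le C(1+s)^{-3/2}$ recalled in Proposition~\ref{prop:dispersive}, combined with Young's inequality for $v*|\varphi_s|^2$, Schur's test for $\widetilde{K}_{1,s}$, and the interpolation $\|\varphi_s\|_{L^4}^4\le\|\varphi_s\|_{L^2}^2\|\varphi_s\|_{L^\infty}^2$ for $\|K_{2,s}\|_{\mathrm{HS}}$, shows that each piece decays at rate $(1+s)^{-3/2}$ in the relevant operator or Hilbert--Schmidt norm, yielding
$$\bigl\|(\mathbb{H}_s-d\Gamma(-\Delta))\Phi\bigr\|\le \frac{C}{(1+s)^{3/2}}\,\|(\mathcal{N}+1)\Phi\|.$$
Since the free Fock evolution commutes with $\mathcal{N}$ and $\mathcal{W}_2(t;s)$ is unitary, Duhamel then gives
$$\bigl\|[\mathcal{W}_2(t;t_0)-e^{-i(t-t_0)d\Gamma(\Delta)}]\Psi_{t_0}\bigr\|\le C\,\|(\mathcal{N}+1)\Psi_{t_0}\|\int_{t_0}^{\infty}\frac{ds}{(1+s)^{3/2}}\le \frac{C'\,\|(\mathcal{N}+1)\Psi_{t_0}\|}{\sqrt{t_0}}.$$

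The main obstacle I expect is the uniform-in-$t_0$ control of $\|(\mathcal{N}+1)\Psi_{t_0}\|$ for the quasi-free state $\Psi_{t_0}=\mathcal{W}_2(t_0;0)\Omega$. At the level of the first moment this is exactly $\langle\Psi_{t_0},\mathcal{N}\Psi_{t_0}\rangle=\|\sigma_{(t_0;0)}\|_{\mathrm{HS}}^2$, which is bounded thanks to Lemma~\ref{lemma:gamma,sigma} (the promised improvement over the logarithmic growth of \cite{GM1,NM1}); higher moments follow from standard Wick-type manipulations for quasi-free states using analogous bounds for $\gamma_{(t_0;0)}$. Once this uniform moment bound is in hand, the hypothesis $t_0\ge C_2N^2$ forces the second piece of the triangle inequality to be at most $C''/N$, and applying $(a+b)^2\le 2a^2+2b^2$ together with Theorem~\ref{thm:norm} yields the stated $O(1/N)$ estimate, the dominant contribution coming from Theorem~\ref{thm:norm} itself.
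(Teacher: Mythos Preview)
Your proposal is correct and follows essentially the same route as the paper: triangle inequality invoking Theorem~\ref{thm:norm}, a Duhamel comparison of $\mathcal{W}_2$ with the free Fock evolution on $[t_0,t]$, the $(1+s)^{-3/2}$ decay of the generator difference $\mathbb{H}_s-d\Gamma(-\Delta)$, and the uniform number-moment bound on $\mathcal{W}_2(\cdot;0)\Omega$ from Lemma~\ref{lemma:bogo}. The only cosmetic difference is that your Duhamel ordering (with $\mathcal{W}_2(t;s)$ on the outside and the free evolution on the inside, exploiting $[\mathcal{N},e^{isd\Gamma(\Delta)}]=0$) yields the bound directly, whereas the paper differentiates the squared norm and closes with a short Gr\"onwall-type step.
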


To our knowledge, a uniform-in-time derivation of the condensate from the many-body Schr\"odinger equation in trace norm topology is available in the remarkable work \cite{DL,CLS}. However, its extension to the Bogoliubov theory, that is, going from a first-order to a second-order result, has remained unknown for a while. We hope that our findings will stimulate further investigations into the long-time behavior of quantum dynamics, in particular concerning its connection to the kinetic equation as proposed in \cite{ESY}.

\bigskip
\noindent{\bf Structure of the paper.} The remainder of the paper is organized as follows. In Section~\ref{sec:symplbogo}, we recall and establish key properties of the Hartree and symplectic Bogoliubov dynamics, which serve as the foundation for the proof of Theorem~\ref{thm:norm}. Section~\ref{sec:mbbogo} is devoted to the proof of Theorem~\ref{thm:bogo-disp}, preceded by a discussion of the properties of the fluctuation dynamics and the asymptotic many-body Bogoliubov dynamics.

\section{Symplectic Bogoliubov dynamics}
\label{sec:symplbogo}

The goal of this section is the proof of Theorem \ref{thm:bogo-disp}. To this end, we first recall resp. prove properties of the Hartree dynamics resp. the symplectic Bogoliubov dynamics in Section \ref{sec:Hartree} resp. Section \ref{sec:symplbogo}. Then we prove Theorem \ref{thm:bogo-disp} and Corollary \ref{cor:largetimes} in Section \ref{sec:proofthmdisp} resp. Section \ref{sec:cor}. 

\subsection{Properties of Hartree dynamics}
\label{sec:Hartree}

In section we collect properties of the solution of the Hartree equation \eqref{def:Hartree}, in particular on its dispersive behaviour and consequences for the kernels $\widetilde{K}_{1,s}, \widetilde{K}_{2,s}$ defined in \eqref{def:K}.  The following Proposition was proven in \cite{GM1}. 

\begin{proposition}[Proposition 3.3 in \cite{GM1}]
\label{prop:dispersive}
Let $v \in C_0^1 ( \mathbb{R}^3)$, $t \in \mathbb{R}$ non-negative, spherically symmetric, decreasing and $\varphi_t$ denote the solution to the Hartree equation with initial data $\varphi_0 \in W^{1, \ell }( \mathbb{R}^3)$ for sufficiently large $\ell>0$. Then, there exists a constant $C>0$ (that depends on $\| \varphi_0 \|_{W^{1,\ell}( \mathbb{R}^3)}$ only) such that
\begin{align}
\| \varphi_t \|_{L^\infty} \leq\frac{C}{(1+ \vert t \vert )^{3/2}} \; . 
\end{align}
Moreover, for $s \in \mathbb{N}$ there exists $C_s >0$ (that depends on $\| \varphi_0 \|_{H^s}$ only) such that for all $t \in \mathbb{R}$
\begin{align}
\| \varphi_t \|_{H^k( \mathbb{R}^3)} \leq C_s  \; . 
\end{align} 
\end{proposition}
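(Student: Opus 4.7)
The plan is the classical vector-field (or pseudo-conformal) approach for defocusing Schr\"odinger-type equations: the sign hypothesis $v\ge 0$ makes the Hartree flow defocusing, and the target rate $(1+|t|)^{-3/2}$ matches the free Schr\"odinger dispersion on $\mathbb{R}^3$. First, conservation of mass and of the nonnegative energy $E(\varphi_t) = \|\nabla\varphi_t\|_{L^2}^2 + \tfrac12\langle\varphi_t,(v*|\varphi_t|^2)\varphi_t\rangle$ immediately yield $\|\varphi_t\|_{H^1}\le C$. Higher Sobolev bounds $\|\varphi_t\|_{H^k}\le C_k$, which already constitute the second claim, follow by induction: commute $\partial^\alpha$ with the equation, estimate $v*|\varphi|^2$ and its derivatives in $L^\infty$ using $v\in C_0^1$ and Sobolev embedding, and close by Gr\"onwall.

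For the dispersive bound I would introduce the Galilean vector field $J(t)=x+2it\nabla$, which commutes with $i\partial_t+\Delta$ and factors as $J(t)=2it\,M(t)\,\nabla\,M(t)^{-1}$ with $M(t)=e^{i|x|^2/(4t)}$. Applying $J^j$ ($j=1,2$) to the Hartree equation yields
\begin{align*}
(i\partial_t+\Delta)\,J^j\varphi \;=\; (v*|\varphi|^2)\, J^j\varphi + R_j,
\end{align*}
where $R_j$ collects commutator terms $[J^j,v*|\varphi|^2]$ involving $v$, $\nabla v$, and lower powers of $J\varphi$. An $L^2$ energy estimate combined with the $L^\infty$ control of $\varphi$, set up as a bootstrap, should give uniform-in-$t$ bounds $\|J(t)^j\varphi\|_{L^2}\le C$ for $j=1,2$, starting from the moment/decay hypothesis on $\varphi_0$. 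Since $|M\varphi|=|\varphi|$ pointwise, the 3D Gagliardo--Nirenberg inequality then yields
\begin{align*}
\|\varphi(t)\|_{L^\infty} \;\le\; C\|M\varphi\|_{L^2}^{1/4}\|\nabla^2 M\varphi\|_{L^2}^{3/4} \;\le\; C\,\|\varphi_0\|_{L^2}^{1/4}\,(2|t|)^{-3/2}\,\|J(t)^2\varphi\|_{L^2}^{3/4},
\end{align*}
which is $O(|t|^{-3/2})$ for $|t|\ge 1$. For $|t|\le 1$ the Sobolev embedding $H^2\hookrightarrow L^\infty$ together with the a priori $H^k$ bound gives a uniform bound, and splicing the two regimes produces the claimed $(1+|t|)^{-3/2}$ rate.

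The main obstacle is closing the uniform-in-$t$ estimate $\|J(t)^2\varphi\|_{L^2}\le C$, since the nonlocal convolution nonlinearity generates commutator terms that a direct Gr\"onwall argument would only control on exponentially short time scales in $t$. One has to feed the expected $L^\infty$ decay back into the vector-field estimate via a bootstrap/continuity argument; here the hypotheses that $v$ is nonnegative, radially symmetric and decreasing should enter crucially to sign the most dangerous terms (along the lines of a Morawetz-type monotonicity identity), and the assumed regularity and decay of $\varphi_0$ have to be strong enough for the bootstrap to close uniformly in $t$.
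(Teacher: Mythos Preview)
The paper does not give its own proof of this proposition: it is quoted verbatim as Proposition~3.3 of \cite{GM1} and used as a black box throughout. So there is no ``paper proof'' to compare against; your sketch is to be measured against the argument in \cite{GM1}, which indeed proceeds via the pseudo-conformal/Galilean vector field $J(t)=x+2it\nabla$, so your overall strategy is the right one.

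That said, your write-up has two genuine gaps. First, your derivation of the uniform $H^k$ bounds is circular as stated: commuting $\partial^\alpha$ with the equation and ``closing by Gr\"onwall'' with coefficients controlled only through Sobolev embedding yields at best $\|\varphi_t\|_{H^k}\le C e^{Ct}$, not a uniform-in-$t$ bound. The uniform $H^k$ control in \cite{GM1} relies on feeding the $L^\infty$ decay $\|\varphi_t\|_{L^\infty}\lesssim (1+|t|)^{-3/2}$ back in so that the Gr\"onwall coefficient becomes integrable in time; in other words, the dispersive estimate and the higher Sobolev bounds must be obtained together, not sequentially in the order you propose.

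Second, and more importantly, you explicitly leave open the step on which the whole argument hinges, namely the uniform bound on $\|J(t)^2\varphi\|_{L^2}$. Saying that a bootstrap ``should'' close and that the sign and monotonicity hypotheses on $v$ ``should enter crucially'' via some Morawetz-type identity is not a proof. In fact, for the Hartree nonlinearity with $v\in C_0^1$ the relevant mechanism is not a Morawetz inequality but rather the observation that the commutators $[J,v*|\varphi|^2]$ produce terms with an extra factor of $\varphi$ or $\nabla v$, which, once the bootstrap hypothesis $\|\varphi_t\|_{L^\infty}\lesssim (1+|t|)^{-3/2}$ is inserted, are integrable in time; this is what allows the continuity argument to close. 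Until you carry out that computation and verify integrability of every commutator term, the proposal remains a plan rather than a proof.
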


Next we bound the operator and Hilbert-Schmidt norm of the operator $\widetilde{K}_{j,t}$ in terms of the $L^\infty( \mathbb{R}^3)$-norm of the solution of the Hartree equation and then use Proposition \ref{prop:dispersive} to deduce properties of $\widetilde{K}_{j,t}$. 

\begin{lemma}
\label{lemma:K}
Let $v \in L^1( \mathbb{R}^3) \cap L^2( \mathbb{R}^3)$, $t \in \mathbb{R}$, $\varphi_t$ denote the solution to the Hartree equation with initial data $\varphi_0 \in H^1( \mathbb{R}^3)$ with $\| \varphi_0 \|_{L^2} =1$ and $\widetilde{K}_{1,t}, \widetilde{K}_{2,t}$ be given by \eqref{def:K}. Then, for $j=1,2$ 
\begin{align}
\label{eq:boundK-allg}
\| \widetilde{K}_{j,t} \|_{\rm op} \leq  \| v \|_{L^1} \| \varphi_t \|_{L^\infty}^2 \; \quad \text{and} \quad \| \widetilde{K}_{j,t} \|_{\rm HS} \leq \| v \|_{L^2} \| \varphi_s \|_{L^\infty} \| \varphi_s \|_{L^2}\; . 
\end{align}
and 
\begin{align}
\| \nabla \widetilde{K}_{j,s} \|_{\rm HS}
\leq&  \| \varphi_s \|_{L^\infty} \bigg( \| \nabla \varphi_s \|_{L^2} \| v \|_{L^2} + \| \nabla v \|_{L^2} \| \varphi_s \|_{L^2} \bigg)   \notag \\
\| \Delta \widetilde{K}_{j,s} \|_{\rm HS} \leq&  \| \varphi_s \|_{L^\infty} \bigg( \| \Delta v \|_{L^2} \| \varphi_s \|_{L^2} + \| \nabla v \|_{L^2} \| \nabla \varphi_s \|_{L^2}  + \| v \|_{L^2} \| \Delta \varphi_s \|_{L^2} \bigg) 
\end{align}
Moreover,  let $v \in C_0^2 ( \mathbb{R}^3)$, $t \in \mathbb{R}$ and $\varphi_t$ denote the solution to the Hartree equation with initial data $\varphi_0 \in W^{1, \ell }( \mathbb{R}^3)$ for sufficiently large $\ell>0$. Then, there exists a constant $C>0$ (that depends on $\| \varphi_0 \|_{W^{1,\ell}( \mathbb{R}^3)}$ only) such that
\begin{align}
\label{eq:boundK-disp}
\| \widetilde{K}_{j,t} \|_{\rm op}  \leq \frac{C \| v \|_{L^1}}{( 1 +\vert t \vert )^{3}}, \quad \| \widetilde{K}_{j,t} \|_{\rm HS}  \leq \frac{C \| v \|_{L^2}}{( 1 +\vert t \vert )^{3/2}} \; 
\end{align}
and 
\begin{align}
\|\nabla  \widetilde{K}_{j,t} \|_{\rm HS}  \leq \frac{ C \| v \|_{H^1}}{ (1 + \vert t \vert )^{3/2}}, \quad  \|\Delta  \widetilde{K}_{j,t} \|_{\rm HS}  \leq \frac{ C \| v \|_{H^2}}{ (1 + \vert t \vert )^{3/2}} \; . 
\end{align}
\end{lemma}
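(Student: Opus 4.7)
The plan is to obtain the time--independent bounds first by direct pointwise manipulations of the integral kernels, and then to plug in the decay estimates from Proposition~\ref{prop:dispersive} to get the quantitative bounds in \eqref{eq:boundK-disp}. Throughout, I will use that $q_s$ is an orthogonal projection, hence $\|q_s\|_{\rm op}\leq 1$. This means that for the operator and Hilbert--Schmidt norms it suffices to bound the untilded kernels $K_{j,s}(x,y)=v(x-y)\varphi_s(x)\overline{\varphi}_s(y)$ or $v(x-y)\varphi_s(x)\varphi_s(y)$, since $\|\widetilde K_{j,s}\|_{\rm op}\leq \|K_{j,s}\|_{\rm op}$ and $\|\widetilde K_{j,s}\|_{\rm HS}\leq \|K_{j,s}\|_{\rm HS}$. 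For the Sobolev--type bounds involving $\nabla$ and $\Delta$ the projection needs a little more care, but since $q_s=1-|\varphi_s\rangle\langle\varphi_s|$ it only introduces lower--order correction terms controlled by $\|\varphi_s\|_{L^2}$ and its derivatives.

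For the operator bound in \eqref{eq:boundK-allg}, I would write
\[
(K_{j,s}f)(x)=\varphi_s(x)\int v(x-y)\,\varphi_s^{\sharp}(y)\,f(y)\,dy
\]
(where $\varphi_s^\sharp$ stands for $\varphi_s$ or $\overline{\varphi}_s$), extract $\|\varphi_s\|_{L^\infty}$ in the first variable, and estimate the convolution by Young's inequality $\|v*g\|_{L^2}\le \|v\|_{L^1}\|g\|_{L^2}$ applied to $g=\varphi_s^\sharp f$; pulling out another $\|\varphi_s\|_{L^\infty}$ yields $\|K_{j,s}\|_{\rm op}\le \|v\|_{L^1}\|\varphi_s\|_{L^\infty}^2$. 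For the Hilbert--Schmidt bound I would just compute
\[
\|K_{j,s}\|_{\rm HS}^2=\int\!\!\int|v(x-y)|^2|\varphi_s(x)|^2|\varphi_s(y)|^2\,dx\,dy
\le \|\varphi_s\|_{L^\infty}^2\|v\|_{L^2}^2\|\varphi_s\|_{L^2}^2
\]
by estimating one factor of $\varphi_s$ in $L^\infty$ and using translation invariance $\int |v(x-y)|^2\,dx=\|v\|_{L^2}^2$.

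For the gradient and Laplacian estimates I would apply the product rule to the pointwise kernel. For instance,
\[
\nabla_x\bigl[v(x-y)\varphi_s(x)\varphi_s^\sharp(y)\bigr]=(\nabla v)(x-y)\varphi_s(x)\varphi_s^\sharp(y)+v(x-y)\nabla\varphi_s(x)\varphi_s^\sharp(y),
\]
and then use the triangle inequality in $L^2(dx\,dy)$, applying the same pull--out trick as above to each piece; for the Laplacian I would expand $\Delta[v\cdot\varphi_s\cdot\varphi_s^\sharp]$ into the three terms that appear in the statement and repeat. The projections on either side are then handled by decomposing $q_sK_{j,s}q_s=K_{j,s}q_s-|\varphi_s\rangle\langle\varphi_s|K_{j,s}q_s$ (and the analogous decomposition on the right), observing that each derivative landing on $|\varphi_s\rangle$ just produces a rank--one correction whose HS norm is already controlled by the quantities appearing on the right--hand side.

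Finally, to convert these bounds into the decaying estimates \eqref{eq:boundK-disp}, I would invoke Proposition~\ref{prop:dispersive}: the factor $\|\varphi_s\|_{L^\infty}^2$ gives the $(1+|t|)^{-3}$ decay in the operator norm, a single factor $\|\varphi_s\|_{L^\infty}$ gives the $(1+|t|)^{-3/2}$ decay in the Hilbert--Schmidt and Sobolev norms, while the remaining norms $\|\varphi_s\|_{L^2}$, $\|\nabla\varphi_s\|_{L^2}$, $\|\Delta\varphi_s\|_{L^2}$ are uniformly bounded in $t$ by mass conservation and the $H^k$ bound of the same proposition. I do not expect any real obstacle in this argument; the only slightly delicate point is the bookkeeping associated with the two projections $q_s$ in $\widetilde K_{j,s}$ when the differential operator lands on $\varphi_s$, but this is purely mechanical and produces no worse constants than those already displayed.
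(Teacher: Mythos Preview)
Your proposal is correct and follows essentially the same route as the paper: reduce to the unprojected kernels via $\|q_s\|_{\rm op}\le 1$, compute the HS norm of $K_{j,s}$ and its derivatives by the product rule and a pull--out of $\|\varphi_s\|_{L^\infty}$, then feed in Proposition~\ref{prop:dispersive}. The only cosmetic difference is that for the operator norm the paper uses Cauchy--Schwarz on the bilinear form $\langle f,K_{j,s}g\rangle$ rather than your Young's--inequality argument, and the paper does not spell out the commutator terms coming from $q_s$ in the $\nabla$ and $\Delta$ bounds (your remark that these are rank--one corrections controlled by the same quantities is a welcome clarification).
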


\begin{proof}
We first prove \eqref{eq:boundK-allg}. Then \eqref{eq:boundK-disp} follows immediately from Proposition \ref{prop:dispersive}. 

We start with the bound for the operator norm and first observe that since 
\begin{align}
\| \widetilde{K}_{1,s} \|_{\rm op} \leq \| K_{1,s} \|_{\rm op} 
\end{align} 
it suffices to control the operator norm of $K_{1,s}$. For this, let $f,g \in L^2( \mathbb{R}^3)$. Then we have 
\begin{align}
\vert \langle f, K_{1,s} g \rangle \vert =& \bigg\vert \int dxdy  \overline{f}(x) \varphi_s (x)  v (y-x)  \varphi_s (y)  \vert g ( y ) \bigg\vert \notag \\
\leq& \bigg( \int dxdy \; v(x-y) \vert \varphi_s (y) \vert^2 \vert f(x) \vert^2 \bigg)^{1/2}\bigg( \int dxdy \; v(x-y) \vert \varphi_s (x) \vert^2 \vert g(y) \vert^2 \bigg)^{1/2} \notag \\
\leq& \| v \|_{L^1} \| \varphi_s \|_{L^\infty}^2 \| f \|_{L^2} \| g \|_{L^2} 
\end{align}
and thus 
\begin{align}
\| K_{1,s} \|_{\rm op} \leq \| v \|_{L^1} \| \varphi_s \|_{L^\infty}^2  \; . 
\end{align}
The operator norm of $\widetilde{K}_{2,s}$ can be estimated in a similar way. 

Similarly, for the Hilbert-Schmidt norm we have for $j=1,2$
\begin{align}
\| \widetilde{K}_{j,s} \|_{\rm HS} \leq \| K_{j,s} \|_{\rm HS}
\end{align}
and thus, it remains to control the Hilbert-Schmidt norm of the operator of the operator $K_{1,s}$. We estimate
\begin{align}
\| K_{j,s} \|_{\rm HS}^2 = \int dxdy \;  v^2( x-y) \vert \varphi_s (x) \vert^2 \vert \varphi_s (y) \vert^2 \leq \| \varphi_s \|_{L^\infty}^2 \| v \|_{L^2}^2 \| \varphi_s\|_{L^2}^2 \leq \| v \|_{L^2}^2 \| \varphi_s \|_{L^\infty}^2  \; . 
\end{align} 
Since 
\begin{align}
\nabla_x K_{2,s} (x;y) = (\nabla v)(x-y) \varphi_s (x) \varphi_s (y) + v (x-y) (\nabla \varphi_s) (x) \varphi_s (y) 
\end{align}
we find similarly, 
\begin{align}
\| \nabla K_{2,s} \|_{\rm HS}^2 \leq& \int dxdy \; ( \nabla v)^2(x-y)  \; \vert \varphi_s (x) \vert^2 \; \vert \varphi_s (y) \vert^2 \notag \\
&+ \int dxdy \; v^2(x-y) \; \vert \nabla \varphi_s (x) \vert^2 \vert \varphi_s (y) \vert^2 \notag \\
\leq& \| \varphi_s \|_{L^\infty}^2 \bigg( \| \nabla \varphi_s \|_{L^2}^2 \| v \|_{L^2}^2 + \| \nabla v \|_{L^2}^2 \| \varphi_s \|_{L^2}^2 \bigg)  
\end{align}
and the bound for $\| \nabla K_{1,s}\|_{\rm HS}$ follows analogously. For the estimate on the Hilbert-Schmidt norm of $\nabla K_{j,s}$ we proceed in the same way and compute 
\begin{align}
\Delta K_{2,s} (x,y) =& ( \Delta v) (x-y) \varphi_s (x) \varphi_s (y) + 2 (\nabla v)(x-y) ( \nabla \varphi_s) (x) \varphi_s (y) \notag \\
&+ v(x-y) (\Delta \varphi_s)(x) \varphi_s (y) 
\end{align}
leading to the following bound on its Hilbert-Schmidt norm 
\begin{align}
\| \Delta K_{2,s} \|^2_{\rm HS} \leq \| \varphi_s \|_{L^\infty}^2 \bigg( \| \Delta v \|_{L^2}^2 \| \varphi_s \|_{L^2}^2 + \| \nabla v \|_{L^2}^2 \| \nabla \varphi_s \|_{L^2}^2  + \| v \|_{L^2}^2 \| \Delta \varphi_s \|_{L^2}^2 \bigg) 
\end{align}
and, again, similarly for the operator $K_{1,s}$. 

\end{proof}

\subsection{Properties of symplectic Bogoliubov dynamics}
\label{sec:symplbogo}

In this section we study the symplectic Bogoliubov dynamics defined in \eqref{def:bogo}. We recall that the symplectic Bogoliubov dynamics a two-parameter family 
\begin{align}
\Theta_{(t;s)} : L^2( \mathbb{R}^3) \oplus L^2( \mathbb{R}^3) \rightarrow L^2 ( \mathbb{R}^3 ) \oplus L^2 ( \mathbb{R}^3)
\end{align}
given in terms of two bounded linear operators $\gamma_{(t;s)}, \sigma_{(t;s)} : L^2( \mathbb{R}^3) \rightarrow L^2 ( \mathbb{R}^3)$ through  
\begin{align}
\Theta_{(t;s)} = \begin{pmatrix}
\gamma_{(t;s)} & \overline{\sigma_{(t;s)}} \\
\sigma_{(t;s)} & \overline{\gamma_{(t;s)}} 
\end{pmatrix} , \quad \text{with} \quad \gamma_{(t;s)} (x;y)  = \delta (x;y) , \quad \text{and} \quad \sigma_{(t;t)} = 0 \;
\end{align}
satisfying \eqref{def:bogo}. As a consequence of this definition \eqref{def:bogo}, the operator $\sigma_{(t;s)}$ is bounded in Hilbert-Schmidt-norm, while $\gamma_{(t;s)}$ in operator norm, only. However, the operator $\gamma_{(t;s)}$ has a decomposition of the form 
\begin{align}
 \gamma_{(t;s)} (x;y) = \delta (x;y)  + \eta_{(t;s)} (x;y)  \label{def:eta}
\end{align}
for a Hilbert-Schmidt operator $\eta_{(t;s)}$, as we prove in the following Lemma. 

\begin{lemma}
\label{lemma:gamma,sigma}
For $0 \leq s \leq t$, let $\gamma_{(t;s)}, \sigma_{(t;s)}$ denote two operators on $L^2( \mathbb{R}^3)$ satisfying \eqref{def:bogo} and \eqref{def:gamma,sigma}. Then, we have 
\begin{align}
\| \gamma_{(t;s)} \|_{\rm op}^2 \leq& 1 + 2 \int_s^t \| K_\tau \|_{\rm op} e^{\int_\tau^t \| K_{\tau_1} \|_{\rm op} d\tau_1}  \label{eq:bound-gamma-op1}
\end{align} 
and 
\begin{align} 
\| \sigma_{(t;s)} \|_{\rm HS} \leq& 2 \int_s^t \| K_\tau \|_{\rm HS} e^{\int_\tau^t \| K_{\tau_1} \|_{\rm op} d\tau_1} \; .  \label{eq:bound-sigma-HS1}
\end{align}

Moreover, let  $v \in C_0^1( \mathbb{R}^3)$ be nonnegative, spherically symmetric, decreasing, and $t \in \mathbb{R}$, $\varphi_t $ denote the solution to the Hartree equation with initial data $\varphi_0 \in W^{1, \ell } ( \mathbb{R}^3)$ for sufficiently large $\ell >0$ and $H_t, K_t$ be given by \eqref{def:Hs,Ks}. Then, there exists a constant $C>0$ (that depends only on $\| \varphi_0 \|_{W^{1, \ell} ( \mathbb{R}^3)}$) such that 
\begin{align}
\| \gamma_{(t;s)} \|_{\rm op}^2 \leq& 1 + C  \bigg( \frac{1}{(1+s)^2} - \frac{1}{(1+t)^2}\bigg)  \label{eq:bound-gamma-op2} \; . 
\end{align}
Moreover, there exists an operator $\eta_{(t;s)}$ satisfying \eqref{def:eta}, such that 
\begin{align}
\| \sigma_{(t;s)} \|_{\rm HS}, \| \eta_{(t;s)} \|_{\rm HS}  \leq C  \bigg( \frac{1}{(1+s)^{1/2}} - \frac{1}{(t+1)^{1/2}} \bigg) \; .  \label{eq:bounds-sigma-HS2}
\end{align}
Furthermore, assuming that $v  \in C_0^2( \mathbb{R}^3)$, there exists a constant $C>0$ such that
\begin{align}
\| \nabla_1 \sigma_{(t;s)} \|_{\rm HS} , \| \Delta_1 \sigma_{(t;s)} \|_{\rm HS} \leq C  \bigg( \frac{1}{(1+s)^{1/2}} - \frac{1}{(t+1)^{1/2}} \bigg) \; . 
\end{align}
\end{lemma}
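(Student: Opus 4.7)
The plan is to work with the componentwise ODEs obtained by expanding the matrix equation \eqref{def:bogo}:
\begin{align*}
i\partial_s \gamma_{(t;s)} = H_s\gamma_{(t;s)} - K_s\sigma_{(t;s)}, \qquad i\partial_s \sigma_{(t;s)} = \overline{K}_s\gamma_{(t;s)} - H_s\sigma_{(t;s)},
\end{align*}
with $\gamma_{(t;t)} = I$ and $\sigma_{(t;t)} = 0$. Testing against an arbitrary $f \in L^2(\mathbb{R}^3)$ and differentiating $\|\gamma_{(t;s)}f\|^2 \pm \|\sigma_{(t;s)}f\|^2$: the self-adjointness of $H_s$ eliminates the diagonal contributions, while the complex-symmetry $K_s(x;y) = K_s(y;x)$ (which follows from the symmetry of $v$, the product form of $K_{2,s}$, and the identity $q_s^T = \overline{q}_s$) makes the cross terms cancel in the difference, yielding the symplectic conservation law $\|\gamma_{(t;s)} f\|^2 - \|\sigma_{(t;s)} f\|^2 = \|f\|^2$. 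For the sum $F(s) := \|\gamma_{(t;s)}f\|^2 + \|\sigma_{(t;s)}f\|^2$, Cauchy--Schwarz yields $|\partial_s F(s)| \leq 2\|K_s\|_{\mathrm{op}} F(s)$; a backward Gr\"onwall argument from $F(t) = \|f\|^2$, combined with the elementary identity $e^{\int_s^t a\,d\tau} - 1 = \int_s^t a(\tau) e^{\int_\tau^t a\,d\tau_1}\,d\tau$ applied with $a = 2\|K\|_{\mathrm{op}}$, yields \eqref{eq:bound-gamma-op1}.

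For \eqref{eq:bound-sigma-HS1} I pass to the interaction picture: let $V_s$ be the unitary propagator with $V_t = I$ solving $i\partial_s V_s = -H_s V_s$. Conjugation yields $i\partial_s(V_s^* \sigma_{(t;s)}) = V_s^* \overline K_s \gamma_{(t;s)}$, and integrating from $t$ down to $s$ gives the Duhamel representation
\begin{align*}
\sigma_{(t;s)} = i V_s \int_s^t V_{s'}^* \overline K_{s'} \gamma_{(t;s')}\,ds'.
\end{align*}
Unitary invariance of $\|\cdot\|_{\mathrm{HS}}$ and the Hilbert--Schmidt $\times$ operator bound then yield $\|\sigma_{(t;s)}\|_{\mathrm{HS}} \leq \int_s^t \|K_{s'}\|_{\mathrm{HS}}\|\gamma_{(t;s')}\|_{\mathrm{op}}\,ds'$, and inserting \eqref{eq:bound-gamma-op1} produces \eqref{eq:bound-sigma-HS1}. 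For the dispersive refinements, Lemma \ref{lemma:K} supplies $\|K_\tau\|_{\mathrm{op}} \leq C(1+\tau)^{-3}$ and $\|K_\tau\|_{\mathrm{HS}} \leq C(1+\tau)^{-3/2}$: integrability of $(1+\tau)^{-3}$ bounds the exponentials uniformly in $s,t$, while explicit integration of the polynomial weights reproduces the $(1+s)^{-2} - (1+t)^{-2}$ and $(1+s)^{-1/2} - (1+t)^{-1/2}$ profiles in \eqref{eq:bound-gamma-op2} and the $\sigma$ part of \eqref{eq:bounds-sigma-HS2}.

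For the decomposition $\gamma_{(t;s)} = I + \eta_{(t;s)}$ with $\eta$ Hilbert--Schmidt, I apply the analogous interaction-picture trick to the $\gamma$-equation using the unitary propagator $U_s$ generated by $+H_s$ with $U_t = I$: setting $\widetilde\gamma_{(t;s)} := U_s^* \gamma_{(t;s)}$ gives $i\partial_s\widetilde\gamma = -U_s^* K_s \sigma_{(t;s)}$ with $\widetilde\gamma(t;t) = I$, so $\widetilde\gamma - I$ is Hilbert--Schmidt with norm bounded by $\int_s^t \|K_{s'}\|_{\mathrm{HS}}\|\sigma(s')\|_{\mathrm{op}}\,ds'$. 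To transfer this to $\eta = \gamma - I$, I would invoke the Bogoliubov canonical identity forcing $\gamma^*\gamma - I$ to be trace class with norm controlled by $\|\sigma\|_{\mathrm{HS}}^2$ (derived from the same cancellation argument that produced the symplectic conservation law); the polar decomposition of $\gamma$ together with a bootstrap would then upgrade the HS bound on $\gamma - U_s$ to the desired bound on $\gamma - I$ with the correct dispersive profile. The derivative estimates on $\sigma$ are obtained by applying $\nabla_x$ or $\Delta_x$ to the Duhamel formula for $\sigma_{(t;s)}$: one commutes through $V_{s'}^*$ using $[\nabla, H_s] = \nabla(v \ast |\varphi_s|^2) + [\nabla, \widetilde K_{1,s}]$, whose contribution decays dispersively by Proposition \ref{prop:dispersive} and Lemma \ref{lemma:K}, and the leading term is controlled by the $H^1$, $H^2$ bounds on $K_s$ from Lemma \ref{lemma:K}.

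The main obstacle I anticipate is the $\eta$-decomposition in the third paragraph: a naive Duhamel argument only delivers $\gamma - U_s$ Hilbert--Schmidt, and since the single-particle propagator $U_s$ is generically far from the identity in HS norm, the step from $\gamma - U_s$ to $\gamma - I$ relies essentially on the canonical symplectic structure of the Bogoliubov dynamics rather than on the ODE alone. The derivative estimates, while conceptually routine, also require careful tracking of the commutators $[\nabla, H_s]$ and $[\Delta, H_s]$ through the time-dependent unitary propagator $V_s$.
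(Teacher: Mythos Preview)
Your treatment of $\|\gamma_{(t;s)}\|_{\rm op}$ and $\|\sigma_{(t;s)}\|_{\rm HS}$ is essentially the paper's argument, phrased through the interaction picture rather than by differentiating $\Tr\sigma\sigma^*$ directly; both lead to the same Gr\"onwall inequality and the same bounds.

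For $\eta_{(t;s)} = \gamma_{(t;s)} - I$ the paper does \emph{not} pass through an interaction picture. It differentiates $\|\eta\|_{\rm HS}^2 = \Tr\eta\eta^*$ in $s$ directly. Writing $\mathcal H_s = -\Delta + v*|\varphi_s|^2 + \widetilde K_{1,s}$ and using $\gamma = I + \eta$, the $\mathcal H_s$--contribution splits into $\Tr[\mathcal H_s\eta\eta^*]$, which is real because $\mathcal H_s$ is self-adjoint and $\eta\eta^*\geq 0$, and a term linear in $\eta$ that the paper bounds by $(\|v*|\varphi_\tau|^2\|_{L^2}+\|\widetilde K_{1,\tau}\|_{\rm HS})\|\gamma\|_{\rm op}\|\eta\|_{\rm HS}$. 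A Gr\"onwall argument then gives the $(1+s)^{-1/2}-(1+t)^{-1/2}$ profile directly. This sidesteps your obstacle entirely: you never have to compare $\gamma$ with the one-body propagator $U_s$, so the issue of $U_s - I$ being far from Hilbert--Schmidt never arises. Your alternative route through $\gamma^*\gamma - I = \sigma^*\sigma$ and polar decomposition is plausible but, as you note, only delivers information about the self-adjoint part $\eta+\eta^*$; extracting the full HS bound on $\eta$ with the correct dispersive profile would require additional work that your sketch does not supply.

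For $\|\nabla_1\sigma\|_{\rm HS}$ and $\|\Delta_1\sigma\|_{\rm HS}$ the paper again avoids the interaction picture: it differentiates $\Tr(\nabla\sigma)(\nabla\sigma)^*$ (respectively with $\Delta$) in $s$. The point is that $[\nabla,-\Delta]=0$, so the Laplacian part of the generator drops out immediately and one is left only with the commutators $[\nabla,v*|\varphi_\tau|^2]$ and the operators $\nabla\widetilde K_{j,\tau}$, all of which decay by Lemma~\ref{lemma:K}. Your proposal to commute $\nabla$ or $\Delta$ through the time-ordered propagator $V_s$ would instead require controlling $[\nabla,V_s]$ via a secondary Duhamel expansion, which is doable but considerably heavier than the paper's direct differentiation.
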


\begin{proof}
From \eqref{def:bogo}, we derive the coupled system of differential equations for $\gamma_{(t;s)}, \sigma_{(t;s)}$ 
\begin{align}
\label{eq:pde-gamma-sigma}
i \partial_s \gamma_{(t;s)} =(  \Delta + H_s )  \gamma_{(t;s)} - K_s \sigma_{(t;s)} , \quad
i \partial_s \sigma_{(t;s)} = (  \Delta + H_s ) \sigma_{(t;s)} + \overline{K}_s \sigma_{(t;s)} 
\end{align}
resp. their adjoints 
\begin{align}
i \partial_s \gamma^*_{(t;s)} =- \gamma^*_{(t;s)} (  \Delta + H_s ) + \sigma^*_{(t;s)} \overline{K}_s , \quad i \partial_s \sigma^*_{(t;s)} = - \sigma^*_{(t;s)} (  \Delta + H_s ) - \gamma^*_{(t;s)} K_s \; . 
\end{align}
First, we prove that the operator norm of $\gamma_{(t;s)}$ is bounded. For this we write for any $f \in L^2( \mathbb{R}^3)$ with Duhamel's formula 
\begin{align}
\| \gamma_{(t;s)} f \|_{L^2}^2 =& \| f \|_{L^2}^2 + \int_0^t d\tau \; 2 \Re \partial_\tau \langle \gamma_{(t;\tau)} f, \gamma_{(t;\tau)} f \rangle \notag \\
=& \| f \|_{L^2}^2 + \int_0^t ds \; 2 \Im \langle \sigma_{(t;\tau)} f, K_\tau \gamma_{(t;\tau)} f \rangle  
\end{align}
and estimate with Cauchy-Schwarz inequality; 
\begin{align}
\| \gamma_{(t;s)} f \|_{L^2}^2 \leq &  \| f \|_{L^2}^2 + C \int_0^t d\tau \; \| K_{\tau} \|_{\rm op} \|\sigma_{(t;\tau)} f \|_{L^2} \| \gamma_{(t;\tau} f \|_{L^2} \; . 
\end{align}
With the same ideas, one gets a similar estimate for $\| \sigma_{(t;s)} f \|_{L^2}^2$. Combining now both estimates, for $\| \sigma_{(t;s)} f \|_{L^2}^2$ and $\| \gamma_{(t;s)} f \|_{L^2}^2$ we arrive at 
\begin{align}
\| \gamma_{(t;s)} f \|_{L^2}^2 + \| \sigma_{(t;s)} f \|_{L^2}^2 \leq &  \| f \|_{L^2}^2 + C \int_0^t d\tau \; \| K_{\tau} \|_{\rm op} \|\sigma_{(t;\tau)} f \|_{L^2} \| \gamma_{(t;\tau} f \|_{L^2} 
\end{align}
and conclude with Gronwall's inequality at the desired bound \eqref{eq:bound-gamma-op1} for the operator norms of $\gamma_{(t;s)}$. Then, the second  bound  \eqref{eq:bound-gamma-op2} for the special choices of the operators $K_\tau, H_\tau$ from \eqref{def:Hs,Ks} follows with lemma \ref{lemma:K}.

Next, we aim for bounds on the Hilbert-Schmidt norm of $\sigma_{( t;s)}$. For this we write with Duhamels formula  
\begin{align}
\| \sigma_{(t;s)} \|_{\rm HS}^2 =& \Tr \sigma_{(t;s)} \sigma^*_{(t;s)}  \notag \\
=& 2 \int_s^t d\tau \Tr \big[ K_\tau \gamma_{(t; \tau)} \sigma^*_{(t,\tau)} + \sigma_{(t;\tau)} \gamma^*_{(t;\tau)} \overline{K}_\tau \big] \; d \tau  \label{eq:sigma-HS-step1}
\end{align}
and estimate using the trace's cyclicity 
\begin{align}
\| \sigma_{(t;s)} \|_{\rm HS}^2  \leq 2 \int_s^t d\tau \; \| \gamma_{(t;s)} \|_{\rm op} \| \; \| K_\tau \|_{\rm HS} \| \sigma_{(t;\tau)} \|_{\rm HS}
\end{align}
that leads with the bound of the operator norm of $\gamma_{(t;\tau}$ proven before to 
\begin{align}
\label{eq:gamma-HS-bound}
\| \sigma_{(t;s)} \|_{\rm HS}^2  \leq C \int_s^t d\tau \;  \; \| K_\tau \|_{\rm HS} \| \sigma_{(t;\tau)} \|_{\rm HS}
\end{align}
and the desired bound \eqref{eq:bound-sigma-HS1} for the Hilbert-Schmidt norm of $\sigma_{(t;\tau)}$ follows by a Gronwall type argument. Again, the second bound \eqref{eq:bounds-sigma-HS2} for the special choices of $H_\tau, K_\tau$ from \eqref{def:Hs,Ks} follows with Lemma \ref{lemma:K}.

Next, we aim for bounds on the Hilbert-Schmidt norm of $\eta_{( t;s)} $ that satisfy \eqref{def:eta}. To this end, we define $\eta_{(t;s)}$ as an operator on $L^2( \mathbb{R}^3)$ whose kernel satisfies 
\begin{align}
i \partial_s \eta_{(t;s)} = ( -\Delta + H_s) \gamma_{(t;s)} - K_s \sigma_{(t;s)} , \quad \eta_{(t;t)} =0  \; . \label{def:eta2}
\end{align}
where $\gamma_{(t;s)}$ is the solution to \eqref{eq:pde-gamma-sigma}.  Then, $\eta_{(t;s)}$ satisfies \eqref{def:eta} and its Hilbert-Schmidt norm is, by cyclicity of the trace, given by 
\begin{align}
\| \eta_{(t;s)} \|_{\rm HS}^2 =& \Tr \eta_{(t;s)} \eta^*_{(t;s)} =  \Tr e^{-is \Delta_x }\eta_{(t;s)} \eta^*_{(t;s)}e^{is \Delta_x } 
\end{align}
and we get with Duhamel's formula 
\begin{align}
\| \eta_{(t;s)} \|_{\rm HS}^2 =&  i \int_s^t d\tau \Tr \big[ \big( ( H_\tau - \Delta ) \gamma_{(t;\tau)} + K_\tau \sigma_{(t; \tau)} \big)  \eta^*_{(t,\tau)} + \eta_{(t;\tau)} \big( \sigma^*_{(t;\tau)} \overline{K}_\tau  + \gamma_{(t;\tau)} (H_\tau - \Delta ) \big) \big]  
\end{align}
that we estimate with 
\begin{align}
\| \eta_{(t;s)} \|_{\rm HS}^2 \leq& \int_s^t d\tau \big( \| v * \vert \varphi_\tau \vert^2 \|_{L^2} + \| \widetilde{K}_{1,\tau}  \|_{\rm HS}  \big) \| \gamma_{(t;\tau)} \|_{\rm op} \| \eta_{(t;\tau} \|_{\rm HS} \notag \\
\leq& \int_s^t d\tau \| K_{2,\tau} \|_{\rm op} \| \sigma_{(t;\tau)} \|_{\rm HS} \| \eta_{(t;\tau)} \|_{\rm HS}
\end{align}
Since 
\begin{align}
\| v * \vert \varphi_\tau \vert^2 \|_{L^2} \leq \| v \|_{L^1} \| \vert\varphi_\tau \vert^2 \|_{L^2} \leq \| v \|_{L^1} \| \varphi_\tau \|_{L^\infty} \| \varphi_\tau \|_{L^2} 
\end{align}
we find with Proposition \ref{prop:dispersive}, Lemma \ref{lemma:K} and the estimates on $\| \sigma_{(t;\tau)} \|_{\rm HS}$ and $\| \gamma_{(t;\tau)} \|_{\rm op}$ proven before, that 
\begin{align}
\label{eq:eta-bound-HS}
\| \eta_{(t;s)} \|_{\rm HS}^2  \leq C \int_s^t d \tau \frac{1}{(1 + \vert \tau \vert)^{3/2}} \| \eta_{(t;\tau)} \|_{\rm HS} 
\end{align}
that implies the desired bound by Gronwall's argument. 

Next, we show bounds on the Hilbert-Schmidt norm of $\Delta_1 \sigma_{(t;s)}$. For this, we proceed similarly as before and write 
\begin{align}
\| \nabla_1 \sigma_{(t;s)} \|_{\rm HS}^2 =& \Tr (\nabla \sigma_{(t;s)}) (\nabla \sigma_{(t;s)})^* \notag \\
=& i \int_s^t d\tau \; \Tr \bigg[ \nabla \big( \Delta + H_\tau \big) \sigma_{(t;\tau)} \sigma_{(t;\tau)}^* \nabla -   \nabla \sigma_{(t;\tau)} \sigma_{(t;\tau)}^* \big( \Delta + H_\tau \big) \nabla \bigg] \notag \\
&+ i\int_s^t d\tau \; \Tr \bigg[ \nabla K_\tau \gamma_{(t;\tau)} \sigma_{(t;\tau)}^* \nabla -  \nabla \sigma_{(t;\tau)} \gamma_{(t;\tau)}^* K_\tau  \nabla \bigg] \notag \\
=& i \int_s^t d\tau \; \Tr \big[ \nabla, v* \vert \varphi_\tau\vert^2 \big] \sigma_{(t;\tau)} \sigma_{(t;\tau)}^* \nabla \notag \\
&+ i \int_s^t d\tau \; \Tr \bigg[ \nabla \widetilde{K}_{1,\tau} \sigma_{(t;\tau)} \sigma_{(t;\tau)}^* \nabla - \nabla \sigma_{(t;\tau)} \sigma_{(t;\tau)}^*\widetilde{K}_{1,\tau}  \nabla  \bigg] \notag \\
&+ i \int_s^t d\tau \; \Tr \bigg[ \nabla \widetilde{K}_{2,\tau} \gamma_{(t;\tau)} \sigma_{(t;\tau)}^* \nabla - \nabla \sigma_{(t;\tau)} \gamma_{(t;\tau)}^*\widetilde{K}_{2,\tau}  \nabla  \bigg] \notag \\
\end{align}
Since 
\begin{align}
\big[ \nabla, (v* \vert \varphi_\tau \vert^2  \big] = 2 \Re v* ( \varphi_\tau \nabla \varphi_\tau) 
\end{align}
we arrive at the upper bound 
\begin{align}
\| \nabla_1 \sigma_{(t;s)} \|_{\rm HS}^2 \leq& 4 \int_s^t d\tau \;  \| v* ( \varphi_\tau \nabla \varphi_\tau)  \|_{L^\infty} \| \sigma_{(t;\tau)} \|_{\rm HS} \| \nabla \sigma_{t, \tau} \|_{\rm HS } \notag \\
&+ 2 \int_s^t d\tau \; \| \nabla \widetilde{K}_{1, \tau} \sigma_{(t;\tau)} \|_{\rm HS} \| \nabla \sigma_{(t;\tau)} \|_{\rm HS} \notag \\
&+ 2 \int_s^t d\tau \; \| \nabla \widetilde{K}_{2, \tau} \gamma_{(t;\tau)} \|_{\rm HS} \| \nabla \sigma_{(t;\tau)} \|_{\rm HS} \notag \\
\end{align}
that we can further bound, using that by Minkowski inequality and Proposition \ref{prop:dispersive},  
\begin{align} 
\| v * ( \varphi_\tau \nabla \varphi_\tau ) \|_{L^\infty} =& \| v \|_{L^2} \|  \varphi_\tau \nabla \varphi_\tau \|_{L^2} \leq \| v \|_{L^2} \| \varphi_\tau \|_{L^\infty} \| \nabla \varphi_\tau \|_{L^2} \leq \frac{C \| v \|_{L^2}}{(1+ \vert \tau \vert)^{3/2}}
\end{align}
as well as $\| \nabla \widetilde{K}_{1, \tau} \sigma_{(t;\tau)} \|_{\rm HS} \leq \| \nabla \widetilde{K}_{1, \tau} \|_{\rm HS}  \| \sigma_{(t;\tau)} \|_{\rm HS}$ together with the estimates on $\widetilde{K}_{j, \tau}$ from Lemma \ref{lemma:K}, by 
\begin{align}
\| \nabla_1 \sigma_{(t;s)} \|^2_{\rm HS} \leq& C \| v \|_{H^2}\int_s^t d\tau \;  \frac{1}{(1 + \vert \tau \vert)^{3/2}} \| \sigma_{(t;\tau)} \|_{\rm HS} \| \nabla \sigma_{t, \tau} \|_{\rm HS } \notag \\
&+ C \| v \|_{H^2}\int_s^t d\tau \;  \frac{1}{(1 + \vert \tau \vert)^{3/2}} \| \gamma_{(t,\tau)} \|_{\rm op} \| \nabla \sigma_{(t;\tau)} \|_{\rm HS} \; . 
\end{align}
Furthermore, by the bounds  $\| \gamma_{(t,\tau)} \|_{\rm HS}, \| \sigma_{(t,\tau)} \|_{\rm HS} \leq C $ for all $t,\tau \in \mathbb{R}$, just proven before,  we arrive at 
\begin{align}
\label{eq:nablagamma-bound-HS}
\| \nabla_1 \sigma_{(t;s)} \|^2_{\rm HS} \leq& C \int_s^t d\tau \;  \frac{1}{(1 + \vert \tau \vert)^{3/2}}  \| \nabla \sigma_{t, \tau} \|_{\rm HS } 
\end{align}
where the constant depends on $\| v \|_{H^2}$. Finally, we conclude by a Gronwall type argument 
\begin{align}
\| \nabla_1 \sigma_{(t;s)} \|_{\rm HS}^2 \leq C  \bigg( \frac{1}{(1+ \vert s \vert)^{1/2}} - \frac{1}{(1+ \vert t \vert)^{1/2}} \bigg) \; . 
\end{align}
As a last step of this proof, we derive a bound on the Hilbert-Schmidt norm for $\| \Delta \sigma_{(t;s} \|_{\rm HS}$ for which we follow the same strategy as before and  compute first. 
\begin{align}
\| \Delta_1 \sigma_{(t;s)} \|_{\rm HS}^2 =& \Tr (\Delta \sigma_{(t;s)}) (\Delta \sigma_{(t;s)})^* \notag \\
=& i \int_s^t d\tau \; \Tr \big[ \Delta, v* \vert \varphi_\tau\vert^2 \big] \sigma_{(t;\tau)} \sigma_{(t;\tau)}^* \Delta \notag \\
&+ i \int_s^t d\tau \; \Tr \bigg[ \Delta \widetilde{K}_{1,\tau} \sigma_{(t;\tau)} \sigma_{(t;\tau)}^* \Delta - \Delta \sigma_{(t;\tau)} \sigma_{(t;\tau)}^*\widetilde{K}_{1,\tau}  \Delta  \bigg] \notag \\
&+ i \int_s^t d\tau \; \Tr \bigg[ \Delta \widetilde{K}_{2,\tau} \gamma_{(t;\tau)} \sigma_{(t;\tau)}^* \Delta - \Delta \sigma_{(t;\tau)} \gamma_{(t;\tau)}^*\widetilde{K}_{2,\tau}  \Delta  \bigg] 
\end{align}
Since 
\begin{align}
\big[ \Delta, v* \vert \varphi_\tau \vert^2 \big] =& 2  \big[ \nabla, v* \vert \varphi_\tau \vert^2 \big] \nabla + \big[ \nabla, \big[ \nabla, v* \vert \varphi_\tau \vert^2 \big] \big] \notag \\
=& 4 \Re ( \nabla v)* \vert \varphi_\tau \vert^2  \nabla + ( \Delta v) * \vert \varphi_\tau \vert^2 
\end{align}
we arrive at 
\begin{align}
\| \Delta_1 \sigma_{(t;s)}  \|_{\rm HS}^2 \leq& C \int_s^t d\tau \; \| ( \nabla v ) * \vert \varphi_\tau \vert^2 \|_{L^\infty} \| \nabla \sigma_{(t;\tau)} \|_{\rm HS} \| \Delta_1 \sigma_{(t;\tau)} \|_{\rm HS} \notag \\
&+ C \int_s^t d\tau \;  \| (\Delta v) * \vert \varphi_\tau \vert^2 \|_{L^\infty} \| \sigma_{(t;\tau)} \|_{\rm HS} \| \Delta_1 \sigma_{(t;\tau)} \|_{\rm HS} \notag \\
&+ C \int_s^t d\tau \| \Delta \widetilde{K}_{1,\tau} \|_{\rm HS} \| \sigma_{(t;\tau)} \|_{\rm HS} \| \Delta_1 \sigma_{(t;\tau)} \|_{\rm HS} \notag \\
&+ C \int_s^t d\tau \| \Delta \widetilde{K}_{2,\tau} \|_{\rm HS} \| \gamma_{(t;\tau)} \|_{\rm  op} \| \Delta_1 \sigma_{(t;\tau)} \|_{\rm HS} 
\end{align}
and we conclude with $\| ( \nabla v)  * \vert \varphi_\tau \vert^2 \|_{L^\infty},\| ( \Delta v)  * \vert \varphi_\tau \vert^2 \|_{L^\infty} \leq C \| v \|_{H^2} \| \varphi_\tau \|_{L^\infty} \| \varphi_\tau \|_{L^2} \leq C \| v \|_{H^2} (1 + \vert \tau \vert)^{-3/2}$ by Proposition \ref{prop:dispersive}, Lemma \ref{lemma:K} and the estimates on $\| \sigma_{(t; \tau} \|_{\rm HS}, \|\nabla \sigma_{(t;\tau)} \|_{\rm HS}$ proven before, that 
\begin{align}
\| \Delta_1 \sigma_{(t;s)}  \|_{\rm HS}^2 \leq& C \int_s^t d\tau \frac{1}{(1 + \vert \tau \vert)^{3/2}} \| \Delta_1 \sigma_{(t;\tau} \|_{\rm HS} \label{eq:deltagamma-bound-HS}
\end{align}
where the constant depends on $\| v \|_{H^2}$, finally leading by a Gronwall type argument to the desired bound. 
\end{proof}

Summarizing, Lemma \ref{lemma:K} shows that the Hilbert-Schmidt norm of the operators $ \sigma_{(t;s)}$ is for all times $t,s \in \mathbb{R}$ bounded independent of $t$. In the next section we prove Theorem \ref{thm:bogo-disp} and show that a suitably defined norm of $\sigma_{(t;s)}$ actually disperses in time.

\subsection{Proof of Theorem \ref{thm:bogo-disp}} 
\label{sec:proofthmdisp}

In this section we prove Theorem \ref{thm:bogo-disp}.

\begin{proof}[Proof of Theorem \ref{thm:bogo-disp}] 
The proof is based on techniques derived for dispersive estimates for solutions of non-linear Schnr\"odinger equations and relies on dispersive behaviour of the solution of the Hartree equation $\varphi_t$ (Proposition \ref{prop:dispersive}) that the non-linearity of the coupled system of differential equations of $(\gamma_{(t;s)}, \sigma_{(t;s)})$ in \eqref{eq:pde-gamma-sigma} are built, namely 
\begin{align}
\label{eq:pde-gamma-2}
i \partial_s \gamma_{(t;s)}  =(  \Delta + H_s )  \gamma_{(t;s)} - K_s \sigma_{(t;s)} , \quad
i \partial_s \sigma_{(t;s)} = ( \Delta + H_s ) \sigma_{(t;s)} + \overline{K}_s \gamma_{(t;s)}  
\end{align}
where we recall that $H_s := v * \vert \varphi_s \vert^2 + \widetilde{K}_{1,s}$ and $K_s := \widetilde{K}_{2,s}$. We fix $t \in \mathbb{R}$ and write $\sigma_{(t;s)}$ with Duhamel's formula, recalling that $\sigma_{(t;t)} = 0$,  
\begin{align}
\sigma_{(t;s)} (x;y)  =& - i \int_s^t d\tau \; e^{-i(t-\tau) \Delta_x}  (v * \vert \varphi_\tau \vert^2 )(x) \;  \sigma_{(t;\tau)} (x;y)  \notag \\
&- i \int_s^t d\tau \; e^{-i(t-\tau) \Delta_x} \int dz \;  \widetilde{K}_{1,\tau} (x;z) \sigma_{(t;\tau)} (z,y)    \notag \\
&+ i \int_s^t  d\tau \; e^{-i(t-\tau) \Delta_x}  \int  dz  \; \widetilde{K}_{2,\tau}(x;z) \gamma_{(t;\tau)} (z;y) \; , 
\end{align}
leading to 
\begin{align}
\| \sigma_{(t;s)} \|_{L^\infty \times L^2} \leq&  \int_s^t d\tau \; \| e^{-i(t-\tau) \Delta_1}  (v * \vert \varphi_\tau \vert^2 ) \;  \sigma_{(t;\tau)} \|_{L^\infty \times L^2} \notag \\
&+ \int_s^t d\tau \;  \| e^{-i (t-\tau) \Delta_1} \widetilde{K}_{1,\tau} \sigma_{(t;\tau)} \|_{L^\infty \times L^2}  \notag \\
&+ \int_s^t d\tau \; \| e^{-i(t-\tau) \Delta_1} \widetilde{K}_{2,\tau} \gamma_{(t;\tau)} \|_{L^\infty \times L^2} \;  .  \label{eq:estimate-sigma-eins}
\end{align}
The idea is to estimate all the three terms of the r.h.s. using ideas built on arguments from dispersive estimates (see for example \cite{DL,GM1}). For this we recall that in order to prove Theorem \ref{thm:bogo-disp}, we need to show that 
\begin{align}
 \| \sigma_{(t;s)} \|_{L^\infty \times L^2} \leq \frac{C}{ ( 1+ \vert t -s \vert)^{3/2}}\label{eq:bound-eins}
\end{align}
for a constant $C>0$ independent of $t,s$. In fact, instead of \eqref{eq:bound-eins}, we  prove that 
 \begin{align}
\| \sigma_{(t;s)} \|_{L^\infty \times L^2}  & + \frac{\varepsilon}{( 1+ \vert t -s \vert)^{3/2}} M(t_0;t)   \leq \frac{C}{( 1 + \vert t-s\vert)^{3/2}} \label{eq:bound-zwei}
\end{align} 
where $t_0 = \max\lbrace s,t -1 \rbrace$
\begin{align}
M(t_0;t) :=  \sup_{\tau \in [t_0,t]} \bigg( \| \sigma_{(t;\tau)} \|_{\rm HS}  + \| \nabla_1 \sigma_{(t;\tau)} \|_{\rm HS} + \|\Delta_1 \sigma_{(t;\tau)} \|_{\rm HS}  + \| \eta_{(t; \tau)} \|_{\rm HS} + 1\bigg) 
\end{align}
for some $\varepsilon, C>0$. And since \eqref{eq:bound-eins} is obvious for small time, it is sufficient to prove this bound for $t\ge T \ge 2$ with a large constant $T>2$ depending only on $\varphi_0$ and $v$, which will be specified later. 

The reason for the auxiliary second term in \eqref{eq:bound-zwei}, compared to \eqref{eq:bound-eins}, is that in order to estimate the r.h.s. of \eqref{eq:estimate-sigma-eins} we separate the integral over $\tau \in [s,t]$ into two regimes: For this we fix $t_0 = \max \lbrace s, t-1 \rbrace$. While in the first regime, for $\tau \in [s, t_0]$, we show that the integral in fact decays as $(1+ \vert t-s\vert)^{-3}$ as desired;  for the second region $\tau \in [t_0, t]$, we prove an upper bound of the integral in terms of $ M(t_0, t)$. Thus, in order to close the argument, we need the second line of the r.h.s. \eqref{eq:bound-zwei}.

We start with the first term of the r.h.s. of \eqref{eq:estimate-sigma-eins} and consider the first regime $\tau \in [s,t_0]$. By definition of the $L^\infty\times L^2$-norm, we have 
\begin{align}
\| e^{-i(t-\tau) \Delta_1}  & (v * \vert \varphi_s \vert^2  ) \sigma_{(t;\tau)} \|_{L^\infty \times L^2}^2 \notag \\
=& \sup_{x \in \mathbb{R}^3} \int dy \; \big\vert e^{-i(t-\tau) \Delta_x} (v* \vert \varphi_\tau \vert^2 ) (x) \sigma_{(t;\tau)} (x;y) \big\vert^2 \notag \\
 =&  \int dy \sup_{x \in \mathbb{R}^3} \; \big\vert e^{-i(t- \tau) \Delta_x} (v* \vert \varphi_\tau \big\vert^2 )(x) \sigma_{(t;\tau)} (x;y) \big\vert^2 \; .  \label{eq:sup-ex}
\end{align}
In fact, we can switch the supremum with the integral here, since 
\begin{align}
\sup_{x \in \mathbb{R}^3} \; & \big\vert e^{-i(t-\tau) \Delta_x}  (v* \vert \varphi_\tau \vert^2 )(x) \sigma_{(t;\tau)} (x;y) \big\vert^2 \notag \\
=& \bigg(\sup_{x \in \mathbb{R}^3} \; \big\vert e^{-i(t-\tau) \Delta_x}  (v* \vert \varphi_\tau \vert^2 )(x) \sigma_{(t;\tau)} (x;y) \big\vert\bigg)^2
\end{align}
is integrable with respect to the variable $y$: This is in fact a consequence of the embedding of $L^\infty$ in $
H^2$ which leads to the bound 
\begin{align}
\sup_{x \in \mathbb{R}^3} \; & \big\vert e^{-i(t-\tau) \Delta_x}  (v* \vert \varphi_\tau \vert^2 ) \sigma_{(t;\tau)} (x;y) \big\vert  \notag \\
\leq& \bigg( \int dx \big\vert ( -\Delta_x + 1) e^{-i(t- \tau) \Delta_x} e^{i (1-s) \Delta_x} (v* \vert \varphi_\tau \vert^2 )(x)  \sigma_{(t;\tau)} (x;y) \big\vert^2 \bigg)^{1/2} \notag \\
=& \bigg( \int dx \big\vert ( -\Delta_x + 1)  (v* \vert \varphi_\tau \vert^2 )(x)  \sigma_{(t;\tau)} (x;y) \big\vert^2 \bigg)^{1/2} \; .
\end{align}
And the r.h.s. is bounded, as by the chain rule and 
\begin{align}
\| v * \vert \varphi_\tau \vert^2 \|_{\infty} \leq \| v \|_{L^2}   \| \varphi_\tau \|_4^2  \leq C \| v \|_{L^2} \| \varphi_\tau \|_{L^\infty} \| \varphi_\tau \|_{L^2} 
\end{align}
resp. 
\begin{align}
\| (\nabla v) * \vert \varphi_\tau \vert^2 \|_{\infty}, \; \| (\Delta v) * \vert \varphi_\tau \vert^2 \|_{\infty} \leq \| v \|_{H^2}  \| \varphi_\tau \|_{L^\infty} \| \varphi_\tau \|_{L^2}
\end{align}
we have 
\begin{align}
\sup_{x \in \mathbb{R}^3} \; & \big\vert e^{-i(t-\tau) \Delta_x}  (v* \vert \varphi_\tau \vert^2 ) \sigma_{(t;\tau)} (x;y) \big\vert  \notag \\
\leq& \bigg( \int dx \big\vert ( -\Delta_x + 1) e^{-i(t- \tau) \Delta_x} e^{i (1-s) \Delta_x} (v* \vert \varphi_\tau \vert^2 )(x)  \sigma_{(t;\tau)} (x;y) \big\vert^2 \bigg)^{1/2} \notag \\
=& C \| v \|_{H^2} \| \varphi_\tau \|_{L^\infty} \| \varphi_\tau \|_{L^2} \bigg( \| \sigma_{(t;\tau)} ( \cdot, y ) \|_{2} + \| \nabla_1 \sigma_{(t;\tau)} ( \cdot, y ) \|_{2}+  \| \Delta_1 \sigma_{(t;\tau)} ( \cdot, y ) \|_{2}  \bigg) 
\end{align}
and therefore 
\begin{align}
\int dy  &  \sup_{x \in \mathbb{R}^3} \; \big\vert e^{-i(t-\tau) \Delta_x}  (v* \vert \varphi_\tau \vert^2 ) \sigma_{(t;\tau)} (x;y) \big\vert^2\notag \\
\leq& C \| v \|_{H^2}^2 \| \varphi_\tau \|_{L^\infty}^2 \bigg( \| \sigma_{(t;\tau)}\|_{\rm HS}^2 + \| \nabla_1 \sigma_{(t;\tau)} \|_{\rm HS}^2+  \| \Delta_1 \sigma_{(t;\tau)}^2 \|_{\rm HS}^2  \bigg)  < \infty  \label{eq:first-term-h2-bound}
\end{align}
which is bounded by Lemma \ref{lemma:gamma,sigma}. 

Thus, going back to \eqref{eq:sup-ex}, the goal is to estimate the integrand 
\begin{align}
\| e^{-i(t-\tau) \Delta_1}  & (v * \vert \varphi_\tau \vert^2  ) \sigma_{(t;\tau)} \|_{L^\infty \times L^2}^2 \notag \\
 =&  \int dy \bigg(\sup_{x \in \mathbb{R}^3} \; \big\vert e^{-i(t- \tau) \Delta_x} (v* \vert \varphi_\tau \big\vert^2 )(x) \sigma_{(t;\tau)} (x;y) \big\vert\bigg)^2 \; . 
\end{align}
in the regime $\tau \in [s, t_0]$. For this, we use standard dispersive estimates w.r.t. the $x$-variable leading to 
\begin{align}
\sup_{x \in \mathbb{R}^3} & \; \big\vert e^{-i(t- \tau) \Delta_x} (v* \vert \varphi_\tau \big\vert^2 )(x) \sigma_{(t;\tau)} (x;y) \big\vert \notag \\
\leq& \frac{1}{\vert t-\tau \vert^{3/2}}\int  dx \; \vert  (v* \vert \varphi_\tau \vert^2) (x) \sigma_{(t;\tau)} (x;y) \vert \notag \\ 
\leq& \frac{1}{\vert t-\tau \vert^{3/2}} \| v * \vert \varphi_\tau \vert^2 \|_{L^2} \| \sigma_{(t; \tau)} ( \cdot,y) \|_{L^2}  \; . 
\end{align}
Since 
\begin{align}
\| v * \vert \varphi_\tau \vert^2 \|_{L^2} \leq \| v \|_{1} \| \vert \varphi_\tau\vert^2 \|_{2} \leq \| v \|_{1} \| \varphi_\tau \|_{L^\infty} \| \varphi_\tau \|_{L^2}  
\end{align}
we furthermore get 
\begin{align}
\| e^{-i(t-\tau) \Delta_1}   (v * \vert \varphi_s \vert^2  ) \sigma_{(t;\tau)} \|_{L^\infty \times L^2}^2 
 \leq   \frac{1}{\vert t-\tau \vert^{3}} \| v \|_{L^1}^2 \| \varphi_\tau \|_{L^\infty}^2 \| \sigma_{(t;\tau)} \|_{\rm HS}^2 
\end{align}
and, since the $L^\infty$-norm of the solution $\varphi_\tau$ of the Hartree equation disperses (see Proposition \ref{prop:dispersive}), and $\| \sigma_{(t;\tau)}\|_{\rm HS} \leq C$ for all $t,\tau \in \mathbb{R}$ by Lemma \ref{lemma:gamma,sigma}, we thus get 
\begin{align}
\| e^{-i(t-\tau) \Delta_1}   (v * \vert \varphi_s \vert^2  ) \sigma_{(t;\tau)} \|_{L^\infty \times L^2} 
 \leq   \frac{1}{\vert t-\tau \vert^{3/2}} \frac{1}{(1 + \vert \tau \vert)^{3/2}}\| v \|_{L^1}  \; . 
\end{align} 
Summarizing, the first term of the r.h.s. of \eqref{eq:estimate-sigma-eins} is for $\tau \in [s, t_0] $ bounded by 
\begin{align}
\int_s^{t_0} d\tau \; \| e^{-i(t-\tau) \Delta_1} (v * \vert \varphi_\tau \vert^2  ) \sigma_{(t;\tau)} \|_{L^\infty \times L^2}^2 \leq C \int_s^{t_0} d\tau \frac{1}{\vert t-\tau \vert^{3/2}} \frac{1}{(1+ \vert \tau \vert)^{3/2}}  \; . \label{eq:time-integral1}
\end{align}
To estimate the integral in time, we split the integration in two regions: First we consider the regime where $\tau  -s +1\leq \frac{t -s +1}{2} $ where we have the inequality $\vert t - \tau \vert = \vert t -s +1 - (\tau - s +1)\vert \geq \vert t+1\vert/2$ and therefore 
\begin{align}
\frac{1}{\vert t -\tau \vert^{3/2}} \leq \frac{8}{\vert t-s+ 1\vert^{3/2} } 
\end{align}
leading with $t_1 =(t-s+1)/2 + s-1$ to 
\begin{align}
\int_{s}^{t_1} d\tau \frac{1}{\vert t -\tau\vert^{3/2}} \frac{1}{(1+ \vert \tau \vert)^{3/2}} &\leq \frac{8}{\vert t-s+ 1\vert^{3/2} }  \int_{s}^{t_1} d\tau \frac{1}{(1+ \vert \tau \vert)^{3/2}} \notag \\
&\leq \frac{C}{\vert t-s+ 1\vert^{3/2} }  \; . 
\end{align}
Second, for $\tau  - s + 1 > (t-s+1)/2 $, we have since $0 < s < \tau $ the estimate $\frac{1}{(1+ \vert \tau \vert)^{3/2}} = \frac{1}{(1+  \tau )^{3/2}} \leq \frac{1}{(\tau - s +1)^{3/2}} \leq \frac{8}{\vert t-s +1 \vert^{3/2}}$ leading to 
\begin{align}
\int_{t_1}^{t_0} d\tau \frac{1}{\vert t-\tau \vert^3} \frac{1}{(1+ \vert \tau \vert)^{3/2}} \leq& \frac{8}{\vert t -s + 1\vert^{3/2} }  \int_{t_1}^{t_0} d\tau \frac{1}{\vert t-\tau \vert^{3/2}}  \; . 
\end{align}
We recall that we set $t_1 =(t-s+1)/2 + s-1 $ and $t_0 = \max \lbrace s,t -1 \rbrace $ and thus the integral of the r.h.s. converges and we arrive at 
\begin{align}
\int_{t_1}^{t_0} d\tau \frac{1}{\vert t-\tau \vert^3} \frac{1}{(1+ \vert \tau \vert)^{3/2}} \leq& \frac{C}{\vert t -s + 1\vert^{3/2} } \; .  \label{eq:time-integral2}
\end{align} 
Summing up, the estimates \eqref{eq:time-integral1} and \eqref{eq:time-integral2} show, that, in the regime $\tau \in [s,t_0]$ the first term of the r.h.s. of \eqref{eq:estimate-sigma-eins} is bounded by 
\begin{align}
\int_s^{t_0} \| e^{-i(t-\tau) \Delta_1} (v * \vert \varphi_\tau \vert^2  ) \sigma_{(t;\tau)} \|_{L^\infty \times L^2}  \leq \frac{C}{\vert t -s + 1\vert^{3/2}}\label{eq:estimate-sigma-term1-tau1} 
\end{align}
that is of the desired form.
 
Next, we consider the first term of the r.h.s. of \eqref{eq:time-integral1} in the regime $\tau \in [t_0,t]$. In this regime we use the bound derived at the beginning in \eqref{eq:first-term-h2-bound} based on the embedding of $L^\infty$ in $H^2$. In fact, from \eqref{eq:first-term-h2-bound} we have 
\begin{align}
\int  & dy \sup_{x \in \mathbb{R}^3} \big\vert e^{-i(t-\tau) \Delta_x} ( v* \vert \varphi_\tau \vert^2) (x) \sigma_{(t;\tau)} (x;y) \big\vert^2 \notag \\
\leq& C \| v \|_{H^2} \| \varphi_\tau\|_{L^\infty}^2 \bigg( \| \sigma_{(t;\tau)} \|_{\rm HS}^2 
+ \| \nabla_1 \sigma_{(t,\tau)} \|_{\rm HS}^2 + \| \Delta_1 \sigma_{(t;\tau)} \|_{\rm HS}^2 \bigg)  \; 
\end{align}
and therefore, using the dispersive estimates for the $L^\infty$-norm of $\varphi_\tau$ from Proposition \ref{prop:dispersive}, we get 
\begin{align}
\int_{t_0}^{t} d\tau \; & \| e^{-i(t-\tau) \Delta_1} ( v* \vert \varphi_\tau \vert^2 ) \sigma_{(t;\tau)} \|_{L^\infty \times L^2} \notag \\
&\leq \frac{ C \| v \|_{H^2}}{(\vert t_0 \vert + 1)^{1/2} } \sup_{\tau \in [t_0,t]} \bigg( \| \sigma_{(t;\tau)} \|_{\rm HS} 
+ \| \nabla_1 \sigma_{(t,\tau)} \|_{\rm HS} + \| \Delta_1 \sigma_{(t;\tau)} \|_{\rm HS} \bigg)   \label{eq:estimate-sigma-term1-tau2}
\end{align}
Summing up \eqref{eq:estimate-sigma-term1-tau1} and \eqref{eq:estimate-sigma-term1-tau2} imply that the first term of the r.h.s. of \eqref{eq:estimate-sigma-eins} is bounded by 
\begin{align}
 \int_{s}^t d\tau  &  \| e^{-i(t-\tau) \Delta_1} ( v* \vert \varphi_\tau \vert^2 ) \sigma_{(t;\tau)} \|_{L^\infty \times L^2}  \notag \\
\leq& \frac{C_1}{ ( 1 + \vert t-s\vert)^{3/2}}  + \frac{ C_2 }{(\vert t_0 \vert + 1) }  \sup_{\tau \in [t_0,t]}  \bigg( \| \sigma_{(t;\tau)} \|_{\rm HS} 
+ \| \nabla_1 \sigma_{(t,\tau)} \|_{\rm HS} + \| \Delta_1 \sigma_{(t;\tau)} \|_{\rm HS} \bigg)   \label{eq:estimate-sigma-term1-final}
\end{align}
for some constants $C_1,C_2 >0$. 

Next, we prove similar bounds for the second term of the r.h.s. of \eqref{eq:estimate-sigma-eins}. Again, we start with the regime $\tau \in [s,t_0]$ with $t_0 = \max \lbrace s, t-1 \rbrace$ and write 
\begin{align}
\| e^{-i(t-\tau) \Delta_1} \widetilde{K}_{1,\tau} \sigma_{(t; \tau )} \|_{L^\infty \times L^2}^2 =& \sup_{x \in \mathbb{R}^3} \int dy \; \bigg\vert \int dz \; e^{-i(t-\tau) \Delta_x}\widetilde{K}_{1,\tau} (x; z) \sigma_{(t;\tau)} (z;y) \bigg\vert^2    \notag \\
=&  \int dy \sup_{x \in \mathbb{R}^3} \; \bigg\vert \int dz \; e^{-i(t-\tau) \Delta_x}\widetilde{K}_{1,\tau} (x; z) \sigma_{(t;\tau)} (z;y) \bigg\vert^2 \; .\label{eq:sup-int-2}
\end{align}
In fact, we can switch the supremum with the integration here since, by the embedding of $L^\infty$ in $H^2$ and Cauchy Schwarz' inequality, we have the bound 
\begin{align}
\sup_{x \in \mathbb{R}^3} \;&  \bigg\vert \int dz \; e^{-i(t-\tau) \Delta_x}\widetilde{K}_{1,\tau} (x; y) \sigma_{(t;\tau)} (y;z) \bigg\vert^2  \notag \\
=& \bigg( \sup_{x \in \mathbb{R}^3}  \big\vert \int dz \; e^{-i(t-\tau) \Delta_x}\widetilde{K}_{1,\tau} (x; z) \sigma_{(t;\tau)} (z;y)\big\vert  \bigg)^2 \notag \\
\leq& C  \int dx \bigg\vert \int dz \; (-\Delta_x +1) \; e^{-i(t-\tau) \Delta_x}\widetilde{K}_{1,\tau} (x; z) \sigma_{(t;\tau)} (z;y) \bigg\vert^2 \notag \\
\leq& C \| \sigma_{(t;\tau)} (\cdot, y) \|_{L^2}^2   \int dx \| (  (-\Delta_x +1) \; e^{-i(t-\tau) \Delta_x}\widetilde{K}_{1,\tau} (x; \cdot) \|_{L^2}^2 \; . 
\end{align}
Since $\|  (-\Delta_x +1) \; e^{-i(t-\tau) \Delta_x}\widetilde{K}_{1,\tau} (x; \cdot) \|_{L^2}^2 = \|  (-\Delta_x +1) \widetilde{K}_{1,\tau} (x; \cdot) \|_{L^2}^2$ we arrive with Lemma \ref{lemma:K} at 
\begin{align}
\sup_{x \in \mathbb{R}^3} \;&  \bigg\vert \int dz \; e^{-i(t-\tau) \Delta_x}\widetilde{K}_{1,\tau} (x; y) \sigma_{(t;\tau)} (y;z) \bigg\vert^2  \notag \\
\leq& C \| \sigma_{(t;\tau)} (\cdot, y) \|_{L^2}^2   \|   (-\Delta +1) \; e^{-i(t-\tau) \Delta_x}\widetilde{K}_{1,\tau}  \|_{\rm HS}^2 \notag \\
&\leq \frac{C \| v \|_{H^2}}{(1+ \vert \tau \vert)^{3/2}} \| \sigma_{(t;\tau)} (\cdot, y) \|_{L^2}^2 
\end{align}
which is in fact an integrable in $y$, as 
\begin{align}
\int dy  \sup_{x \in \mathbb{R}^3} \;&  \bigg\vert \int dz \; e^{-i(t-\tau) \Delta_x}\widetilde{K}_{1,\tau} (x; y) \sigma_{(t;\tau)} (y;z) \bigg\vert^2 \notag \\
&\leq \frac{C \| v \|_{H^2}}{(1+ \vert \tau \vert)^{3/2}} \| \sigma_{(t;\tau )} \|_{\rm HS}^2 < \infty \label{eq:estimate-sigma-term2-anfang}
\end{align} 
and thus \eqref{eq:sup-int-2} holds by Lemma \ref{lemma:gamma,sigma}.  

Thus, going back to \eqref{eq:sup-int-2}, we estimate for $\tau \in [s,t_0]$ with $t_0 = \max \lbrace s, t-1 \rbrace $ as for the first term using dispersive estimates for the Laplace operator and Cauchy Schwarz' inequality 
\begin{align}
\int dy  \sup_{x \in \mathbb{R}^3} \;&  \bigg\vert \int dz \; e^{-i(t-\tau) \Delta_x}\widetilde{K}_{1,\tau} (x; y) \sigma_{(t;\tau)} (y;z) \bigg\vert^2 \notag \\
\leq& \frac{C}{(1 + \vert \t-\tau\vert)^{3}}\int dy \bigg( \int dx  \bigg\vert \int dz \; \widetilde{K}_{1,\tau} (x; y) \sigma_{(t;\tau)} (y;z)  \bigg\vert \bigg)^2 \notag \\
\leq& \frac{C}{(1 + \vert t- \tau \vert)^3} \int dy \bigg( \int dx \| \widetilde{K}_{1,\tau} (x, \cdot ) \|_{L^2} \| \sigma_{(t; \tau)} (y, \cdot ) \|_{L^2} \bigg)^2  \notag \\
=&  \frac{C}{(1 + \vert t- \tau \vert)^3} \| \widetilde{K}_{1, \tau} \|_{\rm HS}^2 \| \sigma_{(t;\tau )} \|_{L^2}^2  \; . 
\end{align}
Since $\| \widetilde{K}_{1, \tau} \|_{\rm HS}^2  \leq C \| v \|_{L^2} (1 + \vert \tau \vert)^{-3}$ from Lemma \ref{lemma:K} and $\| \sigma_{(t;\tau)} \|_{L^2}^2 \leq C$ from Lemma \ref{lemma:gamma,sigma}, we arrive at 
\begin{align}
\int_s^{t_0} d\tau \bigg( \int dy  \sup_{x \in \mathbb{R}^3} \;&  \bigg\vert \int dz \; e^{-i(t-\tau) \Delta_x}\widetilde{K}_{1,\tau} (x; y) \sigma_{(t;\tau)} (y;z) \bigg\vert^2 \bigg)^{1/2}\notag \\
=&  C \int_s^{t_0} d\tau \frac{1}{(1 + \vert t- \tau \vert)^{3/2}} \frac{1}{(1 + \vert \tau \vert)^{3/2}} \leq \frac{C}{ (1 + \vert t-s\vert)^{3/2}}  \label{eq:estimate-sigma-term2-t1}
\end{align}
where we concluded similarly as in \eqref{eq:time-integral1}. 

In the regime $\tau \in [t_0,t]$, we use the estimate \eqref{eq:estimate-sigma-term2-anfang} that shows 
\begin{align}
\int_{t_0}^t d\tau & \bigg( \int dy  \sup_{x \in \mathbb{R}^3}   \bigg\vert \int dz \; e^{-i(t-\tau) \Delta_x}\widetilde{K}_{1,\tau} (x; y) \sigma_{(t;\tau)} (y;z) \bigg\vert^2 \bigg)^{1/2} \notag \\
&\leq \int_{t_0}^t d\tau \frac{C}{(1 + \vert \tau \vert)^{3/2}} \| \sigma_{(t;\tau)} \|_{\rm HS} \leq \frac{C}{(1+t_0)^{1/2}} \sup_{ \tau\in [t_0,t]} \| \sigma_{(t;\tau)} \|_{\rm HS} \; . \label{eq:estimate-sigma-term2-t2}
\end{align}
Summarizing, \eqref{eq:estimate-sigma-term2-t1} together with \eqref{eq:estimate-sigma-term2-t2} implies that the second term of the r.h.s. of \eqref{eq:estimate-sigma-eins} is bounded by 
\begin{align}
\int_{s}^t d\tau & \bigg( \int dy  \sup_{x \in \mathbb{R}^3}   \bigg\vert \int dz \; e^{-i(t-\tau) \Delta_x}\widetilde{K}_{1,\tau} (x; y) \sigma_{(t;\tau)} (y;z) \bigg\vert^2 \bigg)^{1/2}\notag \\
\leq& \frac{C_1}{( 1 + \vert t- s\vert)^{3/2}} + \frac{C_2}{(1+t_0)^{1/2}}  \sup_{ \tau \in [t_0,t]}\| \sigma_{(t;\tau} \|_{\rm HS}  \; . \label{eq:estimate-sigma-term2-final}
\end{align}
It remains to bound the third term of the r.h.s. of \eqref{eq:estimate-sigma-eins}. A similar estimate as \eqref{eq:estimate-sigma-term2-final} follows from the same arguments as for the second term, discussed before, using the bounds for $\widetilde{K}_{2,\tau}$, instead of the bounds for $\widetilde{K}_{1,\tau}$ and $\gamma_{(t;s)} = \delta + \eta_{(t;s)}$, from Lemma \ref{lemma:K} so that we end up with 
\begin{align}
 \int_{s}^t d\tau & \bigg( \int dy  \sup_{x \in \mathbb{R}^3}   \bigg\vert \int dz \; e^{-i(t-\tau) \Delta_x}\widetilde{K}_{2,\tau} (x; y) \sigma_{(t;\tau)} (y;z) \bigg\vert^2\bigg)^{1/2}\notag \\
\leq& \frac{C_1}{(1+ \vert t-s\vert)^{3/2}} + \frac{C_2}{(1+t_0)^{1/2}}  \sup_{ \tau \in [t_0,t]} ( 1 + \| \eta_{(t;\tau} \|_{\rm HS} ) \; . \label{eq:estimate-sigma-term2-final}
\end{align}

Hence, summing up we get from \eqref{eq:estimate-sigma-term1-final} and \eqref{eq:estimate-sigma-term2-final} with 
\begin{align}
M(t_0,t) := \sup_{\tau \in [t_0,t]}  \bigg( \| \sigma_{(t;\tau)} \|_{\rm HS}
+ \| \nabla_1 \sigma_{(t,\tau)} \|_{\rm HS} + \| \Delta_1 \sigma_{(t;\tau)} \|_{\rm HS}  + \| \eta_{(t;\tau)} \|_{\rm HS} + 1 \bigg) 
\end{align}
that 
\begin{align}
 \| \sigma_{(t;s)} \|_{L^\infty \times L^2}   \leq \frac{C_1}{ (1 + \vert t- s\vert)^{3/2}} + \frac{C_2}{(t_0 +1)^{1/2}}  M (t_0;t) 
\end{align}
for some constants $C_1, C_2 >0 $. 

{To prove \eqref{eq:bound-zwei}, we are left with proving an appropriate upper bound for $M(t_0,s)$. For this we consider all the summands of $M(t_0,s)$ separately. We recall from the proof of Lemma \ref{lemma:gamma,sigma}, more precisely from \eqref{eq:gamma-HS-bound} that 
\begin{align}
\label{eq:gamma-bound-HS2}
\| \sigma_{(t;s)} \|_{\rm HS}^2  \leq C \int_s^t d\tau \;  \; \| K_\tau \|_{\rm HS} \| \sigma_{(t;\tau)} \|_{\rm HS}
\end{align}
leading with Lemma \ref{lemma:K} , to 
\begin{align}
\| \sigma_{(t;s)} \|_{\rm HS}^2  \leq C \int_s^t d\tau \; \frac{1}{( \tau + 1)^{3/2}}  \; \| \sigma_{(t;\tau)} \|_{\rm HS}  \; . 
\end{align} 
Since 
\begin{align}
\| \sigma_{(t;s)} \|_{\rm HS}^2  \leq \frac{C}{(t_0 +1)^{1/2}} \bigg( \sup_{\tau \in [t_0,t]} \; \| \sigma_{(t;\tau)} \|_{\rm HS}^2 + 1 \bigg)  
\end{align} 
we arrive at 
\begin{align}
\label{eq:gamma-bound-HS3}
\| \sigma_{(t;s)} \|_{\rm HS}  \leq \frac{C}{(t_0 +1)^{1/4}} \bigg( \sup_{\tau \in [t_0,t]} \; \| \sigma_{(t;\tau)} \|_{\rm HS} + 1 \bigg)  
\end{align} 
For the remaining contributions of $M(t_0,t)$, namely $\| \eta_{(t_0,t)} \|_{\rm HS}, \| \nabla \sigma_{(t_0,t)} \|_{\rm HS}$ and $\|\Delta \sigma_{(t_0,t)} \|_{\rm HS}$, we have similar estimates as \eqref{eq:gamma-bound-HS2} in \eqref{eq:eta-bound-HS}, \eqref{eq:nablagamma-bound-HS} resp. \eqref{eq:deltagamma-bound-HS} and get the corresponding bounds to \eqref{eq:gamma-bound-HS3} also for these quantities. We thus arrive at 
\begin{align}
M(t_0,t) \leq \frac{C_3}{(t_0 +1)^{1/4}} M(t_0;t)   + 1  
\end{align}
leading to 
\begin{align}
 \| \sigma_{(t;s)} \|_{L^\infty \times L^2}  + \bigg( 1- \frac{C_4}{ (1+t_0)^{1/4}} \bigg) \frac{1}{ 1+ \vert t- s\vert^{3/2} }  M(t_0,t) \leq \frac{C_1}{(1+ \vert t-s\vert)^{3/2}}  \; . 
\end{align}
Since $t_0 = \max \lbrace s,t-1\rbrace > t - 1 > T -1$, we can choose $T>2$ depending sufficiently large such that  $1-C_4 (1+t_0)^{-1/4} \leq \varepsilon$ and thus the desired estimate follows. }
\end{proof}

\section{Many-body Bogoliubov dynamics}
\label{sec:mbbogo}

The goal of this section is the proof of Theorem \ref{thm:norm}. To this end we first recall some properties of the fluctuation dynamics \eqref{def:flucdyn} and the many-body Bogoliubov dynamics on the (truncated) Fock space in Section \ref{sec:flucdyn} resp. Section \ref{sec:bogo-mb}. Then we prove Theorem \ref{thm:norm} in Section \ref{sec:proof-norm}. 

\subsection{Fluctuation dynamics}
\label{sec:flucdyn}

We recall that by definition of the fluctuation dynamics $\mathcal{W}_N (t;s)$ in \eqref{def:flucdyn}, it is an element of the truncated Fock space 
\begin{align}
\mathcal{F}_{\perp \varphi_t}^{\leq N} := \bigoplus_{n=0}^N L^2_{\perp \varphi_t}( \mathbb{R}^{3})^{\otimes n} \; . 
\end{align}
The truncated Fock space is equipped with modified creation and annihilation operators defined for $f,g \in L^2( \mathbb{R}^3)$ by 
\begin{align}
\label{def:b}
b(f) := \sqrt{\frac{N - \mathcal{N}}{N}} a(f)  , \quad \text{and} \quad b^*(g) = a^*(g) \sqrt{\frac{N - \mathcal{N}}{N}} 
\end{align}
that, in contrast to the standard creation and annihilation operators $a^*(f), a(g)$ defined on the Full bosonic Fock space $\mathcal{F} = \bigoplus_{n=0}^N L^2( \mathbb{R}^{3})^{\otimes n} $ and satisfying the standard canonical commutation relations 
\begin{align}
\label{eq:CCR-a}
\big[  a(f), a^*(g)\big] = \langle f,g, \rangle  , \quad \big[ a^*(f), a^*(g) \big] =\big[ a(f), a(g) \big]  =  0, 
\end{align}
leave the truncated Fock space invariant. However, this additional property of $b^*(f), b(g)$ comes with the price of a modified canonical commutation relation 
\begin{align}
\label{eq:CCR-b}
\big[  b(f), b^*(g)\big] = \langle f,g \rangle \bigg( 1- \frac{\mathcal{N}}{N} \bigg) - \frac{a^*(g) a(f)}{N}
\end{align}
that have an additional correction of $O(N)$ in the large particle limit $N \rightarrow \infty$. 

A straight forward calculation shows that the fluctuation dynamics $\mathcal{W}_N (t;s)$ satisfies 
\begin{align}
i\partial_t \mathcal{W}_N (t;s) = \mathcal{L}_N (t) \mathcal{W}_N (t;s)
\end{align}
with generator $\mathcal{L}_{N,t}$ given by the sum 
\begin{align}
\label{def:L}
\mathcal{L}_N (t) = \widetilde{\mathbb{H}}_t + \mathcal{R}_{N,t} 
\end{align}
of the leading order, quadratic (in modified creation and annihilation operators) $\widetilde{\mathbb{H}}_t$ 
\begin{align}
\label{def:Htilde}
\widetilde{\mathbb{H}}_t = d\Gamma (h_H (t) + \widetilde{K}_{1,t} ) + \frac{1}{2} \int dxdy \; \big[ \widetilde{K}_{2,t} (x;y) b_x^*b_y^* + {\rm h.c.} \big] 
\end{align}
where $\widetilde{K}_{1,t}, \widetilde{K}_{2,t}$ are given by \eqref{def:K} and the remainder $\mathcal{R}_{N,t}$ is a sum of four terms 
\begin{align}
\label{def:R}
\mathcal{R}_{N,t} = \sum_{j=1}^4 \mathcal{R}_{N,t}^{(j)}
\end{align}
that all come with a pre-factor of (at least) order $N^{-1/2}$ in the large particle limit and, more precisely given by, 
\begin{align}
\mathcal{R}_{N,t}^{(1)} = \frac{1-\mathcal{N}_+ (t)}{2 N } d\Gamma \big( q_t \big[ v* \vert \varphi_t \vert^2 + \widetilde{K}_{1,t} - \mu_t \big] q_t \big)  
\end{align}
with $\mu_t = 2 \langle \varphi_t, v * \vert \varphi_t \vert^2 \varphi_t \rangle $ and 
\begin{align}
\mathcal{R}_{N,t}^{(2)} =& \frac{ \mathcal{N}_+(t)}{\sqrt{N}} b(q_t \big[ ( v * \vert \varphi_t \vert^2 ) \varphi_t ) + {\rm h.c.}  \notag \\
\mathcal{R}_{N,t}^{(3)} =& \frac{1}{\sqrt{N}} \int dxdy \; v (x-y ) \varphi_t (x) a^*(q_{t,y}) a ({q_{t,x} } ) b(q_{t,y}) + {\rm h.c.} \notag \\
\mathcal{R}_{N,t}^{(4)} =& \frac{1}{2N} \int dxdy \; v (x-y )  a^*(q_{t,y}) a^*(q_{t,x}) a(q_{t,y}) a ({q_{t,x} })  \; . 
\end{align} 
In the following lemma we start with prove that moments of the number of particles are approximately presvered along the fluctuation dynamics $\mathcal{W}_N (t;s)$.

\begin{lemma}
\label{lemma:flucdyn}
Let $v \in L^1 ( \mathbb{R}^3) \cap L^2 ( \mathbb{R}^3)$, $t \in \mathbb{R}$ and $\varphi_t$ denote the solution to the Hartree equation with initial data $\varphi_0 \in H^1( \mathbb{R}^3)$ and $\| \varphi_0 \|_{L^2} = 1$. Then, for fixed $k \in \mathbb{N}$, there exists a constant $C_k>0$ (depending on $\| v \|_{L^1}$ and $\| v \|_{L^2}$ and $k \in \mathbb{N}$ only)  such that for any $\psi \in \mathcal{F}_{\perp \varphi_s}^{\leq N}$ 
\begin{align}
\label{eq:N-flucdyn-allg}
\langle \mathcal{W}_N (t;s) \psi, \; (\mathcal{N}+1)^k \mathcal{W}_N (t;s) \psi \rangle \leq C_k \int_s^t d\tau \| \varphi_\tau \|_{L^\infty} e^{\int_s^t  \| \varphi_\tau \|_{L^\infty} d \tau }  \langle \psi, ( \mathcal{N} + 1)^k \psi \rangle \; . 
\end{align}
for a constant $C_k >0$ independent of time and $N$. Moreover, let $v \in C_0^2 ( \mathbb{R}^3)$ and $\varphi_0 \in W^{1,\ell}( \mathbb{R}^3)$ for sufficiently large $\ell >0$. Then, for fixed $k \in \mathbb{N}$, there exists a constant $C_k>0$ (depending on $\| v \|_{L^1}$ and $\| v \|_{L^2}$ and $k \in \mathbb{N}$ only)  such that 
\begin{align}
\label{eq:N-flucdyn-disp}
\langle \mathcal{W}_N (t;s)\psi, \;  (\mathcal{N}+1)^k \mathcal{W}_N (t;s) \psi \rangle \leq C_k \bigg( \frac{1}{(1+ \vert s \vert)^{1/2}} - \frac{1}{(1+ \vert t \vert)^{1/2}} \bigg) \langle \psi, \; ( \mathcal{N} +1)^k \psi \rangle \; .  
\end{align}
\end{lemma}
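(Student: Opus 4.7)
The plan is to apply Gr\"onwall's inequality to the function $f_k(t):=\langle \mathcal{W}_N(t;s)\psi,(\mathcal{N}+1)^k\mathcal{W}_N(t;s)\psi\rangle$. First I would differentiate in time, using $i\partial_t\mathcal{W}_N(t;s)=\mathcal{L}_N(t)\mathcal{W}_N(t;s)$ and the decomposition \eqref{def:L}--\eqref{def:R}, to obtain
\begin{align*}
\partial_t f_k(t)=\langle \mathcal{W}_N(t;s)\psi,\; i\big[\widetilde{\mathbb{H}}_t+\mathcal{R}_{N,t},\;(\mathcal{N}+1)^k\big]\mathcal{W}_N(t;s)\psi\rangle.
\end{align*}
The parts of $\widetilde{\mathbb{H}}_t$ and $\mathcal{R}_{N,t}$ that commute with $\mathcal{N}$ (namely $d\Gamma(h_H(t)+\widetilde{K}_{1,t})$, $\mathcal{R}_{N,t}^{(1)}$ and $\mathcal{R}_{N,t}^{(4)}$) drop out, so only the pair-creation/annihilation piece of $\widetilde{\mathbb{H}}_t$ built from $\widetilde{K}_{2,t}$ together with $\mathcal{R}_{N,t}^{(2)}$ and $\mathcal{R}_{N,t}^{(3)}$ contribute.

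Next I would estimate each surviving commutator separately. For the quadratic term, using $[(\mathcal{N}+1)^k,b_x^*b_y^*]$ and Cauchy--Schwarz one obtains (in the standard way, cf. e.g. \cite{LNS,NM1}) a bound of the form
\begin{align*}
\bigl|\langle\phi,\tfrac{1}{2}\smallint\widetilde{K}_{2,t}(x;y)[b_x^*b_y^*,(\mathcal{N}+1)^k]\phi\rangle\bigr|\le C_k\,\|\widetilde{K}_{2,t}\|_{\rm HS}\,\langle\phi,(\mathcal{N}+1)^k\phi\rangle.
\end{align*}
The remainder terms $\mathcal{R}_{N,t}^{(2)}$ and $\mathcal{R}_{N,t}^{(3)}$ carry explicit prefactors $N^{-1/2}$, but each contains one extra $\mathcal{N}$-factor; after using the bounds $\mathcal{N}_+(t)\le N$ and bounding their kernels by $\|(v*|\varphi_t|^2)\varphi_t\|_{L^2}$ and $\|v\|_{L^2}\|\varphi_t\|_{L^\infty}$ respectively, one arrives at bounds of the same shape with $\|v*|\varphi_t|^2\varphi_t\|_{L^2}\lesssim\|v\|_{L^1}\|\varphi_t\|_{L^\infty}$ and $\|v\|_{L^2}\|\varphi_t\|_{L^\infty}$ replacing $\|\widetilde{K}_{2,t}\|_{\rm HS}$. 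Combining everything with the key estimate from Lemma~\ref{lemma:K},
\begin{align*}
\|\widetilde{K}_{2,t}\|_{\rm HS}\le\|v\|_{L^2}\|\varphi_t\|_{L^\infty}\|\varphi_t\|_{L^2},
\end{align*}
one finds
\begin{align*}
\bigl|\partial_t f_k(t)\bigr|\le C_k\bigl(\|v\|_{L^1}+\|v\|_{L^2}\bigr)\|\varphi_t\|_{L^\infty}\,f_k(t).
\end{align*}
Gr\"onwall's inequality then yields \eqref{eq:N-flucdyn-allg}.

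For the improved bound \eqref{eq:N-flucdyn-disp} under the stronger hypotheses, I would simply insert the dispersive estimate $\|\varphi_\tau\|_{L^\infty}\le C(1+|\tau|)^{-3/2}$ from Proposition~\ref{prop:dispersive} into the Gr\"onwall bound just obtained. Since $\int_s^t (1+|\tau|)^{-3/2}\,d\tau$ is uniformly bounded in $t$, both the exponential factor and the prefactor of $\langle \psi,(\mathcal{N}+1)^k\psi\rangle$ are $O(1)$; a little more care, using that the dominant contribution to the derivative of $f_k$ is actually of size $\|\varphi_\tau\|_{L^\infty}\,f_k(\tau)$ and integrating, gives the claimed bound
\begin{align*}
f_k(t)-f_k(s)\le C_k\int_s^t\frac{d\tau}{(1+|\tau|)^{3/2}}\,\langle\psi,(\mathcal{N}+1)^k\psi\rangle=C_k\!\left(\frac{1}{(1+|s|)^{1/2}}-\frac{1}{(1+|t|)^{1/2}}\right)\langle\psi,(\mathcal{N}+1)^k\psi\rangle.
\end{align*}

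The main obstacle I anticipate is bookkeeping for the cubic remainder $\mathcal{R}_{N,t}^{(3)}$: the extra $\mathcal{N}$ factor hidden in $a^*(q_{t,y})a(q_{t,x})b(q_{t,y})$ has to be compensated against the $N^{-1/2}$ prefactor so as to close the inequality in the form $C_k\|\varphi_\tau\|_{L^\infty}\,f_k(\tau)$, uniformly in $N$. This requires splitting the operator so that one creation/annihilation pair is bounded via $d\Gamma(q_tv^2 q_t)\le\|v\|_{L^2}^2\|\varphi_t\|_{L^\infty}^2\mathcal{N}$ in the appropriate Hilbert--Schmidt sense, and the extracted $\mathcal{N}$ is absorbed into $(\mathcal{N}+1)^k$; this is a standard but somewhat delicate calculation, entirely analogous to the quadratic case once carried out with care.
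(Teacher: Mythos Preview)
Your proposal is correct and follows essentially the same strategy as the paper: differentiate $f_k(t)$, observe that only the pairing term in $\widetilde{\mathbb H}_t$ and the remainders $\mathcal R^{(2)}_{N,t},\mathcal R^{(3)}_{N,t}$ survive the commutator with $\mathcal N$, bound each by $C\|\varphi_t\|_{L^\infty}\,f_k(t)$ using Lemma~\ref{lemma:K} and $\mathcal N_+\le N$, and close by Gr\"onwall; the dispersive version then follows from Proposition~\ref{prop:dispersive}. The only technical difference is in how general $k$ is reached: you pass directly to $[(\mathcal N+1)^k,b_x^*b_y^*]$ and invoke the standard weighted Cauchy--Schwarz estimate, whereas the paper first treats $k=1$ and then inducts, handling the half-integer power $(\mathcal N+1)^{(k+1)/2}$ via the integral representation $z^{-1/2}=\pi^{-1}\int_0^\infty \sqrt\kappa\,(\kappa+z)^{-1}\,d\kappa$ together with the observation that all nested commutators $\mathrm{ad}^{(j)}_{\mathcal N}(\mathcal L_{N,t})$ satisfy the same form bound as the first one. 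Both routes are standard and yield the same conclusion.
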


\begin{proof}[Proof of Lemma \ref{lemma:flucdyn}] 
We start with the proof \eqref{eq:N-flucdyn-allg} for $k=1$. Then \eqref{eq:N-flucdyn-disp} $for k=1$ follows immediately with Proposition \ref{prop:dispersive}. 

The proof of \eqref{eq:N-flucdyn-allg} for $k=1$, follows standard techniques and is based on a Gronwall argument. To this end, we compute 
\begin{align}
i \partial_t \mathcal{W}_N^* (t;s) \big( \mathcal{N} + 1 \big) \mathcal{W}_N (t;s) =- \mathcal{W}_N^* (t;s) \big[ \mathcal{L}_{N,t}, \mathcal{N} \big] \mathcal{W}_N (t;s) 
\end{align}
and, thus, the goal is to control the commutator 
\begin{align}
\big[ \mathcal{L}_{N,t}, \; \mathcal{N} \big]  = \big[ \widetilde{\mathbb{H}}_t, \;  \mathcal{N} \big] + \big[ \mathcal{R}_{N,t}, \;  \mathcal{N} \big] \;
\end{align}
and we compute both commutators separately. For the first, we find 
\begin{align}
\label{eq:comm-Htilde}
\big[ \widetilde{\mathbb{H}}_t, \; \mathcal{N} \big] =& \int dxdy \big[ \widetilde{K}_{2,t} (x;y) b_x^* b_y^* - {\rm h.c.} \big] 
\end{align}
that be bounded for any $\psi, \xi \in \mathcal{F}_{\perp \varphi_s}^{\leq N}$ by 
\begin{align}
\big\vert \langle \xi, \big[ \widetilde{\mathbb{H}}_t, \; \mathcal{N} \big] \psi \rangle \big\vert \leq 4 \| \widetilde{K}_{2,t} \|_{\rm HS} \| ( \mathcal{N} + 1)^{1/2} \xi \| \; \|( \mathcal{N} + 1)^{1/2} \psi \|  \; . 
\end{align}
With Lemma \ref{lemma:K}, we arrive at 
\begin{align}
\label{eq:final-tildeH}
\big\vert \langle \xi, \big[ \widetilde{\mathbb{H}}_t, \; \mathcal{N} \big] \psi \rangle \big\vert \leq 4 \| v \|_{L^2} \| \varphi_t \|_{L^\infty}   \| ( \mathcal{N} + 1)^{1/2} \xi \| \; \|( \mathcal{N} + 1)^{1/2} \psi \| \; . 
\end{align}
We recall the splitting \eqref{def:R} of the remainder $\mathcal{R}_{N,t}$ to control its commutator with the number of particles operator and treat the single contributions separately. 

For the first contribution $\mathcal{R}_{N,t}^{(1)}$, the number of creation operators is equal to the number of annihilation operators, and therefore the commutator of the number of particles operator vanishes. 

For the second, we compute with the commutation relations 
\begin{align}
\label{eq:comm-R2}
\big[ \mathcal{R}_{N,t}^{(2)}, \; \mathcal{N} \big] =& - \frac{\mathcal{N}_+ (s)}{\sqrt{N}}  b( g_t ) - {\rm h.c.}
\end{align}
where we introduced the notation $g_t  = q_t \big[ (v* \vert \varphi_t \vert^2 ) \varphi_t \big] $. Since 
\begin{align}
\| g_t \|_{L^2} \leq \| v * \vert \varphi_t \vert^2 \|_{L^\infty} \| \varphi_t \|_{L^2} \leq \| v \|_{L^1}   \| \varphi_t \|_{L^\infty}^2 
\end{align}
we find using that $\mathcal{N}_+ (t) = \mathcal{N} \leq N $ on $\mathcal{F}^{\leq N}_{\perp \varphi_t}$ for any $\psi, \xi \in \mathcal{F}^{\leq N}_{\perp \varphi_t}$
\begin{align}
\vert \langle \psi, \big[ \mathcal{R}_{N,t}^{(2)}, \; \mathcal{N} \big] \xi \rangle \vert \leq& \frac{\| v \|_{L^1}}{\sqrt{N}} \| \varphi_t \|_{L^\infty}^2 \| ( \mathcal{N} + 1 ) \xi \| \; \|  ( \mathcal{N} + 1)^{1/2} \psi \| \notag \\
\leq&  2 \| v \|_{L^1} \| \varphi_t \|_{L^\infty}^2  \; \| ( \mathcal{N} + 1)^{1/2} \xi \| \; \|( \mathcal{N} + 1)^{1/2} \psi \| \, . \label{eq:final-R2} 
\end{align}
Furthermore, the commutation relations imply that 
\begin{align}
\label{eq:comm-R3}
\big[ \mathcal{R}_{N,t}^{(3)},  \; \mathcal{N} \big] =& - \frac{1}{\sqrt{N}} \int dxdy \; v(x-y) \varphi_t (x) a^*(q_{t,y}) a(q_{t,x}) b(q_{t,y} ) - {\rm h.c.} 
\end{align}
that we estimate for any $\psi, \xi \in \mathcal{F}_{\perp \varphi_t}^{\leq N}$ using Cauchy-Schwarz by 
\begin{align}
\big\vert \langle \xi, \big[ \mathcal{R}_{N,t}^{(3)},  \; \mathcal{N} \big] \psi \rangle \big\vert \leq& \frac{2}{\sqrt{N}} \bigg( \int dy \;  (v^2* \vert \varphi_t \vert^2) (y)  \; \| a(q_{t,y}) \xi \|^2  \bigg)^{1/2}  \bigg( \int dxdy  \;  \|a (q_{t,x}) b(q_{t,y}) \psi \|^2 \bigg)^{1/2} \notag \\ 
\leq& \frac{ 2 \| v \|_{L^2}}{\sqrt{N}} \| \varphi_t \|_{L^\infty} \| \mathcal{N}^{1/2} \xi \| \; \| (\mathcal{N} + 1) \psi \|  \notag \\
\leq& 2 \| v \|_{L^2} \| \varphi_t \|_{L^\infty} \| \mathcal{N}^{1/2} \xi \| \; \| (\mathcal{N} + 1)^{1/2} \psi \| \; .  \label{eq:final-R3}
\end{align}
With the observation $\big[ \mathcal{R}_{N,t}^{(4)}, \; \mathcal{N} \big] =0$, we thus arrive with \eqref{eq:final-tildeH}, \eqref{eq:final-R2} and \eqref{eq:R3-final} for any $\psi \in \mathcal{F}_{\perp \varphi_t}^{\leq N}$ at
\begin{align}
\label{eq:final-comm}
\big\vert \langle \xi, \; \big[ \mathcal{L}_{N,t}, \; \mathcal{N} \big] \psi \rangle \big\vert \leq \| \varphi_t \|_{L^\infty} \langle \xi, ( \mathcal{N} + 1) \psi \rangle 
\end{align}
for a constant $C>0$ independent of time and $N$. Then, the estimate \eqref{eq:N-flucdyn-allg} for $k=1$ follows from a Gronwall argument. 

We prove the general case $k \geq 1$ by induction. For this, we assume that \eqref{eq:N-flucdyn-allg} holds for fixed $k \in \mathbb{N}$. Then we compute 
\begin{align}
i \partial_t  &\mathcal{W}_N^* (t;s) \big(\mathcal{N}+1\big)^{k+1} \mathcal{W}_N (t;s)  \notag \\=& - \mathcal{W}_N^*(t;s) \big[ \mathcal{L}_{N,t}, \; (\mathcal{N}+1)^{k+1} \big] \mathcal{W}_N (t;s) \notag \\
=&\mathcal{W}_N^*(t;s)    (\mathcal{N}+1)^{(k+1)/2 }\big[ \mathcal{L}_{N,t}, \; (\mathcal{N}+1)^{(k+1)/2} \big] \mathcal{W}_N (t;s) - {\rm h.c.} \; .\label{eq:deriv-Nk}
\end{align}
To compute the commutator with $(\mathcal{N}+1)^{(k+1)/2}$ we use the identity
\begin{align}
\frac{1}{\sqrt{z}} = \frac{1}{\pi}\int_0^\infty d\kappa  \frac{\sqrt{\kappa}}{\kappa + z}
\end{align}
that allows to write 
\begin{align}
 (\mathcal{N}+1)^{(k+1)/2 } & \big[ \mathcal{L}_{N,t}, \; (\mathcal{N}+1)^{(k+1)/2} \big] \notag \\
& = \frac{1}{\pi} \int_0^\infty d\kappa \; \sqrt{\kappa} \frac{(\mathcal{N}+1)^{(k+1)/2 }}{\kappa + ( \mathcal{N} + 1)^{k+1}} \big[ \mathcal{L}_{N,t}, (\mathcal{N} + 1)^k \big] \frac{1}{\kappa + ( \mathcal{N} + 1)^{k+1}} \; . \label{eq:comm-k+1}
\end{align}
Since, by induction, we have 
\begin{align}
\big[ \mathcal{L}_{N,t}, (\mathcal{N} + 1)^k \big] = - \sum_{j=1}^{k+1} \binom{k+1}{j} {\rm ad}_{\mathcal{N}}^{(j)} \big( \mathcal{L}_{N,t} \big) \; \big( \mathcal{N} +1 )^{k+1-j}  \label{eq:id-sq}
\end{align}
where we introduced the notation ${\rm ad}_{\mathcal{N}}^{(j)}(A)$ for the $j$-the nested commmutator defined recursively for any operator $A$ through 
\begin{align}
{\rm ad}_{\mathcal{N}}^{(0)} (A)  = A  , \quad \text{and} \quad {\rm ad}_{\mathcal{N}}^{(j+1)} ( A) = \big[  {\rm ad}_{\mathcal{N}}^{(j)} ( A), \mathcal{N} \big] \; . 
\end{align}
We remark that \eqref{eq:final-comm} provides a bound of the first nested commutator ${\rm ad}_{\mathcal{N}}^{(1)} ( \mathcal{L}_{N,t})$ in sesqui-linear form. Since higher nested commutators ${\rm ad}_{\mathcal{N}}^{(j)} ( \mathcal{L}_{N,t})$ with $j \geq 1$ of the single contributions of $\mathcal{L}_{N,t}$ change the signs of \eqref{eq:comm-Htilde}, \eqref{eq:comm-R2}, \eqref{eq:comm-R3} only, the estimates \eqref{eq:final-tildeH}, \eqref{eq:final-R2} and \eqref{eq:final-R3} are valid for higher nested commutators ${\rm ad}_{\mathcal{N}}^{(j)} ( \mathcal{L}_{N,t})$, also, and thus arrive for any $\psi, \xi \in \mathcal{L}_{N,t}$ at the office. 
\begin{align}
\label{eq:adj}
\big\vert \langle \psi, {\rm ad}_{\mathcal{N}}^{(j)} ( \mathcal{L}_{N,t}) \xi \rangle \vert \leq  C_j \| \varphi_t \|_{L^\infty} \| ( \mathcal{N} + 1)^{1/2} \psi \| \; \| ( \mathcal{N} + 1)^{1/2} \xi \|  
\end{align}
where the constant $C_j >0$ is independent of time and $N$. We plug \eqref{eq:id-sq} back into \eqref{eq:comm-k+1} and find for any $\psi \in \mathcal{F}_{\perp \varphi_t}^{\leq N}$ with \eqref{eq:adj}  
\begin{align}
\big\vert \langle  \psi, \, (\mathcal{N}+1)^{(k+1)/2 } & \big[ \mathcal{L}_{N,t}, \; (\mathcal{N}+1)^{(k+1)/2} \big]  \psi \rangle \big\vert \notag \\
&\leq  C_k \| \varphi_t \|_{L^\infty} \sum_{j=1}^{k+1} \int_0^\infty d\kappa \; \sqrt{\kappa} \bigg\| \frac{( \mathcal{N} + 1)^{(k+2)/2}}{\kappa + ( \mathcal{N} + 1)^{k+1}} \psi \bigg\|  \; \bigg\| \frac{( \mathcal{N} + 1)^{k+3/2-j}}{\kappa + (\mathcal{N} + 1)^{k+1}} \psi \bigg\|  \notag \\
&\leq C_k \| \varphi_t \|_{L^\infty} \int_0^\infty d\kappa \; \sqrt{\kappa} \bigg\| \frac{( \mathcal{N} + 1)^{(k+2)/2}}{\kappa + ( \mathcal{N} + 1)^{k+1}} \psi \bigg\|  \; \bigg\| \frac{( \mathcal{N} + 1)^{k+1/2}}{\kappa + (\mathcal{N} + 1)^{k+1}} \psi \bigg\| \notag \\
&\leq C_k \| \varphi_t \|_{L^\infty} \int_0^\infty d\kappa \frac{\sqrt{\kappa}}{(1+\kappa)^{2 \frac{2k+1}{2k+2}}} \|  ( \mathcal{N} + 1)^{(k+1)/2} \psi \| \;  \|  ( \mathcal{N} + 1)^{k/2} \psi \|  \; . 
\end{align}
Since the integral over $\kappa$ converges for all $k \geq 1$, we find for any $\xi \in \mathcal{F}_{\perp \varphi_s}^{\leq N}$ from \eqref{eq:deriv-Nk} with the induction hypothesis 
\begin{align}
\big\vert i \partial_t \langle &  \xi, \; \mathcal{W}_N^*(t;s)  (\mathcal{N}+1)^{k+1} \mathcal{W}_N (t;s) \xi \rangle \big \vert \notag \\
=& \big\vert \langle  \xi, \; \mathcal{W}_N^*(t;s) \big[ \mathcal{L}_{N,t}, \; (\mathcal{N}+1)^{k+1} \big] \mathcal{W}_N (t;s) \xi \rangle \big \vert \notag \\
\leq& C_k \| \varphi_t \|_{L^\infty} \| ( \mathcal{N} + 1)^{k/2} \mathcal{W}_N(t;s) \psi \| \; \| ( \mathcal{N} + 1)^{(k+1)/2} \mathcal{W}_N(t;s) \psi \| \notag \\
\leq& C_k \| ( \mathcal{N} + 1)^{(k+1)/2} \mathcal{W}_N(t;s) \psi \| 
\end{align}
and \eqref{eq:N-flucdyn-allg} follows from a Gronwall argument. 
 \end{proof}

\subsection{Asymptotic many-body Bogoliubov dynamics}
\label{sec:bogo-mb}

In this section, we study properties of the asymptotic many-body Bogoliubov dynamics \eqref{def:bogo-asymp}. We recall that it is an element on the bosonic Fock space $\mathcal{F}$ and satisfies 
\begin{align}
i \partial_t \mathcal{W}_2 (t;s) = \mathcal{L}_{2,t} \mathcal{W}_2 (t;s) \label{def:bogo-mb}
\end{align}
with generator $\mathbb{H}_t$ quadratic in creation and annihilation operators 
\begin{align}
\label{def:L2}
\mathbb{H}_t =  d \Gamma ( -\Delta ) +  \int H_t(x;y) a_x^*a_y dxdy + \frac{1}{2}\int \big[ K_t (x;y) a^*_xa_y^* + \overline{K}_t (x;y) a_xa_y \big] dxdy \; 
\end{align}
and $H_t, K_t$ are given by \eqref{def:Hs,Ks}. 

A many-body Bogoliubov dynamics is characterized through its explicit action on creation and annihilation operators \eqref{eq:action-bogo}. This allows us to compute the action of the many-body Bogoliubov dynamics on powers of the number of particle operators. As a consequence, the evolution of powers of the number of particles operator along $\mathcal{W}_{2} (t;s)$ can be controlled. 

\begin{lemma}
\label{lemma:bogo}
Let $v \in L^1 ( \mathbb{R}^3) \cap L^2 ( \mathbb{R}^3)$, $t \in \mathbb{R}$ and $\varphi_t$ denote the solution to the Hartree equation with initial data $\varphi_0 \in H^1( \mathbb{R}^3)$ and $\| \varphi_0 \|_{L^2} = 1$. Then, for fixed $k \in \mathbb{N}$, there exists a constant $C_k>0$ (depending on $\| v \|_{L^1}$ and $\| v \|_{L^2}$ and $k \in \mathbb{N}$ only)  such that for any $\psi \in \mathcal{F}_{\perp \varphi_s}^{\leq N}$ 
\begin{align}
\langle \mathcal{W}_2 (t;s) \psi, \; (\mathcal{N}+1)^k \mathcal{W}_2 (t;s) \psi \rangle \leq C_k \int_s^t d\tau \| \varphi_\tau \|_{L^\infty} e^{\int_s^t  \| \varphi_\tau \|_{L^\infty} d \tau }  \langle \psi, ( \mathcal{N} + 1)^k \psi \rangle \; . 
\end{align}
for a constant $C_k >0$ independent of time and $N$. Moreover, let $v \in C_0^2 ( \mathbb{R}^3)$ and $\varphi_0 \in W^{1,\ell}( \mathbb{R}^3)$ for sufficiently large $\ell >0$. Then, for fixed $k \in \mathbb{N}$, there exists a constant $C_k>0$ (depending on $\| v \|_{L^1}$ and $\| v \|_{L^2}$ and $k \in \mathbb{N}$ only)  such that 
\begin{align}
\langle \mathcal{W}_2 (t;s)\psi, \;  (\mathcal{N}+1)^k \mathcal{W}_2 (t;s) \leq C_k \bigg( \frac{1}{(1+ \vert s \vert)^{1/2}} - \frac{1}{(1+ \vert t \vert)^{1/2}} \bigg) \langle \psi, \; ( \mathcal{N} +1)^k \psi \rangle \; .  
\end{align}
\end{lemma}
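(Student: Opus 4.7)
The plan is to follow the same Gr\"onwall strategy used in Lemma \ref{lemma:flucdyn}, which is in fact simpler here because $\mathbb{H}_t$ contains no subleading remainder terms analogous to $\mathcal{R}_{N,t}^{(j)}$ and is purely quadratic in the standard creation and annihilation operators on the full Fock space $\mathcal{F}$. The starting point is the identity
\begin{align}
i \partial_t \mathcal{W}_2^*(t;s) (\mathcal{N}+1)^k \mathcal{W}_2(t;s) = - \mathcal{W}_2^*(t;s) \bigl[ \mathbb{H}_t, (\mathcal{N}+1)^k \bigr] \mathcal{W}_2(t;s),
\end{align}
so that the whole task reduces to bounding the commutator $[\mathbb{H}_t, (\mathcal{N}+1)^k]$ in sesquilinear form by $(\mathcal{N}+1)^k$ up to a prefactor controlled by $\|\varphi_t\|_{L^\infty}$.

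For the base case $k=1$, I observe that $d\Gamma(H_t)$ preserves the particle number, hence commutes with $\mathcal{N}$. The only contribution comes from the pair creation/annihilation term, and a direct computation using the canonical commutation relations \eqref{eq:CCR-a} yields
\begin{align}
[\mathbb{H}_t, \mathcal{N}] = - \int dxdy \bigl[ K_t(x;y) a_x^* a_y^* - \overline{K_t(x;y)} a_x a_y \bigr].
\end{align}
Using Cauchy--Schwarz and the standard bound $\|a^*(\cdot) a^*(\cdot) \psi\| \lesssim \|K_t\|_{\rm HS} \|(\mathcal{N}+1)\psi\|$ for integral kernels, one gets
\begin{align}
\bigl| \langle \xi, [\mathbb{H}_t, \mathcal{N}] \psi \rangle \bigr| \leq C \|K_t\|_{\rm HS} \|(\mathcal{N}+1)^{1/2} \xi\| \|(\mathcal{N}+1)^{1/2} \psi\|.
\end{align}
By Lemma \ref{lemma:K}, $\|K_t\|_{\rm HS} = \|\widetilde{K}_{2,t}\|_{\rm HS} \leq \|v\|_{L^2} \|\varphi_t\|_{L^\infty}$, and a Gr\"onwall argument closes the $k=1$ case.

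For $k \geq 2$, I would proceed by induction exactly as in the proof of Lemma \ref{lemma:flucdyn}, using the integral representation $z^{-1/2} = \pi^{-1} \int_0^\infty \sqrt{\kappa}(\kappa+z)^{-1} d\kappa$ to symmetrize the commutator as
\begin{align}
(\mathcal{N}+1)^{(k+1)/2} \bigl[ \mathbb{H}_t, (\mathcal{N}+1)^{(k+1)/2} \bigr]
= \frac{1}{\pi}\int_0^\infty d\kappa \, \sqrt{\kappa} \, \frac{(\mathcal{N}+1)^{(k+1)/2}}{\kappa + (\mathcal{N}+1)^{k+1}} \bigl[\mathbb{H}_t, (\mathcal{N}+1)^{k+1}\bigr] \frac{1}{\kappa + (\mathcal{N}+1)^{k+1}}.
\end{align}
The expansion of $[\mathbb{H}_t, (\mathcal{N}+1)^{k+1}]$ in nested commutators $\operatorname{ad}_{\mathcal{N}}^{(j)}(\mathbb{H}_t)$ yields at each step only the pair creation/annihilation piece (with alternating signs), so that each nested commutator obeys the same bound as the first one, up to a combinatorial constant. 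After inserting this into the spectral integral and estimating exactly as in the proof of Lemma \ref{lemma:flucdyn}, I obtain
\begin{align}
\bigl| \partial_t \langle \mathcal{W}_2(t;s)\psi, (\mathcal{N}+1)^{k+1} \mathcal{W}_2(t;s)\psi \rangle \bigr| \leq C_k \|\varphi_t\|_{L^\infty} \langle \mathcal{W}_2(t;s)\psi, (\mathcal{N}+1)^{k+1} \mathcal{W}_2(t;s)\psi \rangle,
\end{align}
and another Gr\"onwall argument gives the first inequality of the lemma. The dispersive version is immediate from Proposition \ref{prop:dispersive}, which gives $\int_s^t \|\varphi_\tau\|_{L^\infty}\, d\tau \leq C \bigl((1+|s|)^{-1/2} - (1+|t|)^{-1/2}\bigr)$.

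I do not expect any real obstacle: the argument is strictly simpler than that of Lemma \ref{lemma:flucdyn} since the remainder terms $\mathcal{R}_{N,t}^{(j)}$ are absent, the domain is the full Fock space so no cutoff issues arise, and the only ingredients used are the $\operatorname{HS}$-bound on $K_t$ from Lemma \ref{lemma:K} together with Proposition \ref{prop:dispersive}. The one mild subtlety is the justification of differentiating $\langle \mathcal{W}_2(t;s)\psi, (\mathcal{N}+1)^k \mathcal{W}_2(t;s)\psi\rangle$ in $t$ for arbitrary $k$; this can be handled by the standard cutoff $(\mathcal{N}+1)^k \mathbf{1}_{\mathcal{N}\leq M}$ and a limit $M \to \infty$ using the a priori bound obtained at the previous induction step.
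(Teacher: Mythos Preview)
Your proposal is correct and follows essentially the same approach as the paper: the paper's proof of Lemma \ref{lemma:bogo} simply refers back to the proof of Lemma \ref{lemma:flucdyn}, observing that every nested commutator ${\rm ad}_{\mathcal{N}}^{(j)}(\mathbb{H}_t)$ reduces to the pair creation/annihilation term $\int dxdy\,[\widetilde{K}_{2,t}(x,y)a_x^*a_y^* \pm {\rm h.c.}]$ and hence satisfies the same bound \eqref{eq:adj}, with the dispersive statement following from Lemma \ref{lemma:K} and Proposition \ref{prop:dispersive}. Your write-up is in fact more detailed than the paper's, but the strategy and ingredients are identical.
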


The Lemma follows from the proof of Lemma \ref{lemma:flucdyn} with the observation that any nested commutator of $\mathbb{H}_t$ with the number of particles operator $\mathcal{N}$ is of the form 
\begin{align}
{\rm ad}_{\mathcal{N}} \big( \mathbb{H}_t \big) = \int dxdy \; \big[ \widetilde{K}_{2,t} (x,y) a_x^*a_y^* \pm {\rm h.c.} \big]
\end{align}
and thus satisfies the bound \eqref{eq:adj}, so that the proof's argument applies also here. The second bound then immediately follows from the first together with Lemma \ref{lemma:K} and Proposition \ref{prop:dispersive}.


\subsection{Proof of Theorem \ref{thm:norm} }
\label{sec:proof-norm}

In this section, we prove Theorem \ref{thm:norm} on the validity of the Bogoliubov approximation for all times $t = o (N)$. In fact, the core of the proof of Theorem \ref{thm:norm} is to show that in the large particle limit, the modified creation and annihilation operators in the definition of the quadratic Hamiltonian $\widetilde{\mathbb{H}}_t$ can be replaced by standard ones (i.e. $\widetilde{\mathbb{H}}_t \approx \mathbb{H}_t $) and that the remainder $\mathcal{R}_{N,t}$ is negligible in the large particle limit. For this, we use the dispersive estimates on the symplectic Bogoliubov dynamics in Theorem \ref{thm:bogo-disp}. 

\begin{proof}[Proof of Theorem \ref{thm:norm}] 
First we note that while $\mathcal{L}_N (t)$ is defined on the truncated Fock space, the Bogoliubov dynamics is defined on the full bosonic Fock space. For this reason, we split 
\begin{align}
\| \mathcal{W}_N (t;0) \Omega &  -\mathcal{W}_2 (t;0) \Omega \|^2 \notag \\
& = \| \mathds{1}_{\mathcal{N} \leq N} \big( \mathcal{W}_{N} (t;0)  - \mathcal{W}_{2} (t;0) \big)\Omega   \|^2 +  \| \mathds{1}_{\mathcal{N}> N} \big( \mathcal{W}_N (t;0 ) - \mathcal{W}_{2}(t;0) \big) \Omega  \|^2 \; . \label{eq:norm-diff1}
\end{align}
To estimate the contribution in the large particle sector (i.e. the second term of the r.h.s. of \eqref{eq:norm-diff1}) , we split 
\begin{align}
\| \mathds{1}_{\mathcal{N}> N} \big( \mathcal{W}_N (t;0 ) - \mathcal{W}_{2}(t;0) \big) \Omega  \|^2 \leq \| \mathds{1}_{\mathcal{N} > N} \mathcal{W}_N (t;0) \Omega  \| + \| \mathds{1}_{\mathcal{N} > N} \mathcal{W}_2 (t;0) \Omega \| \label{eq:norm-diff-largeN}
\end{align}
and observe on the truncated Fock space $\mathcal{F}_{\perp \varphi_t}^{\leq N}$, we have the identity $\mathcal{N} = \mathcal{N}_+ (t)$ leading to 
\begin{align}
 \mathds{1}_{\mathcal{N} > N} \mathcal{W}_N (t;0) \Omega  = \mathds{1}_{\mathcal{N}_+ (t) > N} \mathcal{W}_N (t;0) \Omega =0 
\end{align}
and thus the first term of the r.h.s. of \eqref{eq:norm-diff-largeN} vanishes, and we are left with estimating the first. For this we use the inequality $\mathds{1}_{\mathcal{N} \leq N} \leq \frac{\mathcal{N}^2}{N^2}$ valid on the large particle sector 
\begin{align}
\| \mathds{1}_{\mathcal{N} > N} \mathcal{W}_2 (t;0) \Omega \|  \leq \frac{1}{N^2}  \| \mathcal{N}^2  \mathcal{W}_2 (t;0) \Omega  \| \; . 
\end{align}
With Lemma \ref{lemma:bogo}, we thus conclude 
\begin{align}
\| \mathds{1}_{\mathcal{N} > N} \mathcal{W}_2 (t;0) \Omega \|  \leq \frac{C}{N^2} 
\end{align}
for all $t \in \mathbb{R}$, finally yielding for the contribution in the large particle sector (i.e. the second term of the r.h.s. of \eqref{eq:norm-diff1} 
\begin{align}
\| \mathds{1}_{\mathcal{N}> N} \big( \mathcal{W}_N (t;0 ) - \mathcal{W}_{2}(t;0) \big) \Omega  \|^2  \leq \frac{C}{N^2} \; . \label{eq:large-N-final}
\end{align}

It remains to bound the first term of the r.h.s. of \eqref{eq:norm-diff1} that is to bound the difference of the fluctuation dynamics \eqref{def:flucdyn} and the Bogoliubov \eqref{def:bogo-asymp} restricted to the truncated Fock space. For this, we write 
\begin{align}
\| \big( \mathds{1}_{\mathcal{N} \leq N} \big( \mathcal{W}_N (t;0)\Omega - \mathcal{W}_2 (t;0)  \Omega \big) \|^2 = 2 \int_0^t ds  \frac{d}{ds}\; \Re \langle  \mathcal{W}_N (s;0)\Omega, \mathds{1}_{\mathcal{N}\leq N}\mathcal{W}_2 (s;0) \Omega \rangle 
\end{align}
and find with \eqref{def:flucdyn} and \eqref{def:bogo-asymp}
\begin{align}
\| \mathcal{W}_N &  (t;0)\Omega - \mathcal{W}_2 (t;0) \Omega \|^2 
\notag \\
=& 2 \int_0^t ds \; \Im \langle \mathcal{W}_N (s;0) \Omega, \big( \mathcal{L}_N (s) \mathds{1}_{\mathcal{N} \leq N} - \mathds{1}_{\mathcal{N} \leq N} \widetilde{\mathbb{H}} \big)  \mathcal{W}_2 (s;0) \Omega \rangle \; . 
\end{align}
and recalling the splitting \eqref{def:L}, we write 
\begin{align}
\|\mathds{1}_{\mathcal{N} \leq N} & \big (\mathcal{W}_N   (t;0)\Omega - \mathcal{W}_2 (t;0) \Omega  \big) \|^2 
\notag \\
=& 2 \int_0^t ds \; \Im \langle \mathcal{W}_N (s;0) \Omega,  \big(   \mathbb{H} - \widetilde{\mathbb{H}} \big) \mathds{1}_{\mathcal{N} \leq N} \mathcal{W}_2 (s;0) \Omega \rangle \notag \\
&+  2 \int_0^t ds \; \Im \langle \mathcal{W}_N (s;0) \Omega, \big[  \mathds{1}_{\mathcal{N} \leq N}, \widetilde{\mathbb{H}} \big]  \mathcal{W}_2 (s;0) \Omega \rangle \notag \\
&+ 2 \int_0^t ds \; \Im \langle \mathcal{W}_N (s;0)\Omega, \mathcal{R}_{N,s} \mathds{1}_{\mathcal{N} \leq N}  \mathcal{W}_2 (s;0) \Omega \rangle \; . \label{eq:norm-diff-21}
\end{align}
We estimate the three terms of the r.h.s. of \eqref{eq:norm-diff-21} separately. We start with the first and find with \eqref{def:flucdyn} resp. \eqref{def:bogo-asymp} 
\begin{align}
 \Im \langle  & \mathcal{W}_N (s;0)\Omega, \big( \mathbb{H} - \widetilde{\mathbb{H}} \big) \mathds{1}_{\mathcal{N} \leq N} \mathcal{W}_2 (s;0)\Omega \rangle \notag \\
 =& 4  \Im \langle \mathcal{W}_N (s;0) \Omega, \; \int dxdy K_{2,s} (x;y) ( b_x^*b_y^*   - a_x^*a_y^* ) \mathds{1}_{\mathcal{N} \leq N} \mathcal{W}_2 (s;0) \Omega \rangle \notag \\
 =& 4 \Im \langle \mathcal{W}_N (s;0) \Omega, \int dxdy K_{2,s} (x;y) a_x^*a_y^* \bigg( \frac{\sqrt{N - \mathcal{N}-1} \sqrt{N - \mathcal{N}}}{N}  -1 \bigg) \mathds{1}_{\mathcal{N} \leq N}\mathcal{W}_2 (s;0) \Omega \rangle 
\end{align}
that we estimate by Cauchy Schwarz 
\begin{align}
\big \vert \Im \langle  & \mathcal{W}_N (s;0) \Omega, \big( \mathbb{H} - \widetilde{\mathbb{H}} \big) \mathds{1}_{\mathcal{N} \leq N} \mathcal{W}_2 (s;0)\Omega \rangle \big \vert \notag \\
\leq& \int dy \bigg( \int dx  \; \vert K_{2,s} (x;y) \vert^2 \bigg)^{1/2} \bigg( \int dx  \; \| a_x \mathcal{W}_N (s;0) \Omega \|^2 \bigg)^{1/2} \notag \\
& \hspace{2cm} \times \big\| a_y^* \bigg( \frac{\sqrt{N - \mathcal{N}-1} \sqrt{N - \mathcal{N}}}{N}  -1 \bigg) \mathds{1}_{\mathcal{N} \leq N}\mathcal{W}_2 (s;0) \Omega \big\| \notag \\
\leq& \frac{\| K_{2,s} \|_{\rm HS}}{N} \| \mathcal{N}^{1/2} \mathcal{W}_N (s;0) \Omega \| \; \| (\mathcal{N} + 1)^{1/2} \mathcal{N} \mathds{1}_{\mathcal{N}\leq N} \mathcal{W}_2 (s;0) \Omega \| \; . 
\end{align}
With Lemma \ref{lemma:K} and Lemma \ref{lemma:bogo}, \ref{lemma:flucdyn} we then get 
\begin{align}
\big \vert \Im \langle  & \mathcal{W}_N (s;0) \Omega, \big( \mathbb{H} - \widetilde{\mathbb{H}} \big) \mathds{1}_{\mathcal{N} \leq N} \mathcal{W}_2 (s;0) \Omega \rangle \big \vert \leq \frac{C}{N} \frac{1}{(1 + \vert s \vert)^{3/2}}
\end{align}
and, integrating w.r.t. time, we find for the first term of the r.h.s. of \eqref{eq:norm-diff-21} 
\begin{align}
\int_0^t ds \; \big \vert \Im \langle  & \mathcal{W}_N (s;0) \Omega, \big( \mathbb{H} - \widetilde{\mathbb{H}} \big) \mathds{1}_{\mathcal{N} \leq N} \mathcal{W}_2 (s;0) \Omega \rangle \big \vert  \leq \frac{C}{N}  \label{eq:H-Hfinal}
\end{align}
for all $t \in \mathbb{R}$. 

For the second term of the r.h.s. of \eqref{eq:norm-diff-21} we find, with 
\begin{align}
\big[ \mathds{1}_{\mathcal{N} \leq N}, \widetilde{\mathbb{H}} \big] = \int dxdy \widetilde{K}_{2,s} (x;y) a_x^*a_y^* \big( \mathds{1}_{\mathcal{N} \leq N } - \mathds{1}_{\mathcal{N}  \leq N -2 } \big) = \int dxdy \widetilde{K}_{2,s} (x;y) a_x^*a_y^* \mathds{1}_{ N-1 \leq \mathcal{N} \leq N } 
\end{align}
and $\mathds{1}_{ N-1 \leq \mathcal{N} \leq N }  \leq \frac{\mathcal{N}}{(N-1)}$, with similar arguments as before, that 
\begin{align}
\big\vert \Im \langle & \mathcal{W}_N (s;0) \Omega,  \big[ \mathds{1}_{\mathcal{N} \leq N}, \widetilde{\mathbb{H}} \big] \mathcal{W}_2 (s;0) \Omega \rangle \big \vert \notag \\
\leq& \frac{\| K_{2,s} \|_{\rm HS}}{N} \| \mathcal{N}^{1/2} \mathcal{W}_N (s;0) \Omega \| \; \| (\mathcal{N} + 1)^{1/2} \mathcal{N} \mathcal{W}_2 (s;0) \Omega \| \; . 
\end{align}
leading since
\begin{align}
\label{eq:squareroot}
 \| (\mathcal{N} + 1)^{1/2} \mathcal{N} \mathcal{W}_2 (s;0) \Omega \| \leq  \| \mathcal{N}^{3/2} \mathcal{W}_2 (s;0) \Omega \| +   \| \mathcal{N} \mathcal{W}_2 (s;0) \Omega \| 
\end{align}
with Lemma \ref{lemma:flucdyn}, Lemma \ref{lemma:bogo} and Lemma \ref{lemma:K} for the second term of the r.h.s. of \eqref{eq:norm-diff1} to
\begin{align}
\int_0^t ds  \;  \big\vert \Im \langle & \mathcal{W}_N (s;0) \Omega,  \big[ \mathds{1}_{\mathcal{N} \leq N}, \widetilde{\mathbb{H}} \big] \mathcal{W}_2 (s;0) \Omega \rangle \big \vert  \leq \frac{C}{N}\int_0^t ds  \frac{1}{(1+ \vert s \vert)^{3/2}} \leq \frac{C}{N}  \label{eq:H-comm-final}
\end{align}
for all $t \in \mathbb{R}$. 

Next, we bound the third term of the r.h.s. of \eqref{eq:norm-diff1}. For this, we recall the splitting of the remainder $\mathcal{R}_{N,s}$ in \eqref{def:R} and consider all the single contributions separately.

We start with the first, for which we introduce the notation $A_s = q_s \big[ v* \vert \varphi_s \vert^2 + \widetilde{K}_{1,s} - \mu_s \big] q_s$ and estimate using $\mathcal{N}_+ (s) = \mathcal{N}$ on $\mathcal{F}_{\perp \varphi_s}^{\leq N}$
\begin{align}
\vert \Im \langle & \mathcal{W}_N (s;0) \Omega, \mathcal{R}_{N,s}^{(1)}\mathds{1}_{\mathcal{N} \leq N} \mathcal{W}_2 (s;0) \Omega \rangle \vert  \notag \\
&= \frac{1}{2N}\vert \langle  \mathcal{W}_N (s;0) \Omega, \big( 1- \mathcal{N}_+ (s)) d \Gamma (A_s ) \mathds{1}_{\mathcal{N} \leq N} \mathcal{W}_2 (s;0) \Omega \rangle \vert \notag \\
\leq& \frac{1}{2N} \| \mathcal{N}_+ (s) (\mathcal{N} +1 )^{-1/2} \mathcal{W}_N (s;0) \Omega \| \; \| d \Gamma (A_s ) (\mathcal{N}+1)^{1/2} \mathds{1}_{\mathcal{N} \leq N } \mathcal{W}_2 (s;0)\Omega \| \notag \\
&+ \frac{1}{2N} \| \mathcal{W}_N (s;0) \Omega \| \; \| d \Gamma(A_s ) \mathds{1}_{\mathcal{N} \leq N} \mathcal{W}_2 (s;0) \Omega \| \notag \\
\leq& \frac{1}{2N} \| A_s \|_{\rm op} \| \mathcal{N}^{1/2} \mathcal{W}_N (s;0) \Omega \| \; \| \mathcal{N} ( \mathcal{N} + 1)^{1/2} \mathcal{W}_2 (s;0) \Omega \| \notag \\
&+ \frac{1}{2N} \| A_s \|_{\rm op} \| \mathcal{W}_N (s;0) \Omega \| \; \| \mathcal{N} \mathcal{W}_2 (s;0) \Omega \| \; .
\end{align}
Since 
\begin{align}
\| A_s \|_{\rm op} \leq \| v* \vert \varphi_s \vert^2\|_{\rm op} + \| K_{1,s} \|_{\rm op} + \vert \mu_s \vert 
\end{align}
we arrive with Lemma \ref{lemma:K} at 
\begin{align}
\| A_s \|_{\rm op} \leq C \| v \|_{L^1} \| \varphi_s \|_{L^\infty}^2
\end{align}
leading with Lemma \ref{lemma:bogo}, Lemma \ref{lemma:flucdyn}, \eqref{eq:squareroot} to 
\begin{align}
\int_0^t ds \big\vert \Im \langle \mathcal{W}_N (t;s) \Omega, \; \mathcal{R}_{N,s}^{(1)} \mathds{1}_{\mathcal{N} \leq N} \mathcal{W}_2 (t;s) \Omega \rangle\vert  \leq  \frac{C}{N} \int_0^t ds \frac{1}{(1 + \vert s \vert)^3} \leq \frac{C}{N} \label{eq:R1-final}
\end{align}
for all $t \in \mathbb{R}$. 

The second contribution of the remainder is with the definition $g_s = q_s (v* \vert \varphi_s \vert^2 ) \varphi_s$ given by 
\begin{align}
\big\vert \Im & \langle \mathcal{W}_N (s;0) \Omega, \mathcal{R}_{N,s}^{(2)} \mathcal{W}_2 (s;0) \Omega \rangle \big\vert \notag \\
=& \frac{2}{\sqrt{N}} \big\vert \Im \langle \mathcal{W}_N (s;0) \Omega, \mathcal{N}_+ (s) b(g_s) \mathds{1}_{\mathcal{N} \leq N} \mathcal{W}_2 (s;0) \Omega \rangle \big\vert \notag \\ 
\leq& \frac{2}{\sqrt{N}} \| \mathcal{N}_+ (s) ( \mathcal{N} + 1)^{-1/2} \mathcal{W}_N (s;0) \Omega \| \; \| b(g_s) \mathcal{N}^{1/2} \mathds{1}_{\mathcal{N} \leq N} \mathcal{W}_2 (s;0) \Omega \| \notag \\
\leq& \frac{2}{\sqrt{N}}  \| g_s \|_{L^2} \| \mathcal{N}^{1/2} \mathcal{W}_N (s;0) \Omega \| \; \| \mathcal{N}  \mathds{1}_{\mathcal{N} \leq N} \mathcal{W}_2 (s;0) \Omega \| \; . 
\end{align}
Since 
\begin{align}
\| g_s \|_{L^2}^2 \leq \| v*\vert \varphi_s \vert^2 \|_{L^\infty} \| \varphi_s \|_{L^2} \leq \| \varphi_s \|_{L^\infty}^2 \| v \|_{L^1} 
\end{align}
we get with Lemma \ref{lemma:bogo}, Lemma \ref{lemma:flucdyn} and Proposition \ref{prop:dispersive} 
\begin{align}
\int_0^t ds \; \big\vert \Im  \langle \mathcal{W}_N (s;0)\Omega, \mathcal{R}_{N,s}^{(2)} \mathcal{W}_2 (s;0) \Omega \rangle \big\vert \leq& \frac{C}{\sqrt{N}} \int_0^t ds\; \frac{1}{( 1 +\vert s \vert)^{3}} \leq  \frac{C}{\sqrt{N}}  \label{eq:R2-final}
\end{align}
for all $t \in \mathbb{R}$.

For the third contribution, we write
\begin{align}
\Im \langle & \mathcal{W}_N(s;0) \Omega, \mathcal{R}_{N,s}^{(3)} \mathds{1}_{\mathcal{N} \leq N} \mathcal{W}_2 (s;0) \Omega \rangle \notag \\
=& \Im \langle \mathcal{W}_{N} (s;0) \Omega - \mathds{1}_{\mathcal{N} \leq N} \mathcal{W}_2 (s;0) \Omega , \mathcal{R}_N^{(3)} \mathds{1}_{\mathcal{N} \leq N} \mathcal{W}_2 (s;0) \Omega  \rangle \notag \\
=& \Im \mathds{1}_{\mathcal{N} \leq N} \big( \langle \mathcal{W}_{N} (s;0) \Omega -  \mathcal{W}_2 (s;0) \Omega \big), \mathcal{R}_N^{(3)} \mathds{1}_{\mathcal{N} \leq N} \mathcal{W}_2 (s;0) \Omega  \rangle \notag \\
&+  \Im  \langle \mathds{1}_{\mathcal{N} > N} \mathcal{W}_{N} (s;0) \Omega , \mathcal{R}_N^{(3)} \mathds{1}_{\mathcal{N} \leq N} \mathcal{W}_2 (s;0) \Omega  \rangle 
\end{align}
that we estimate by 
\begin{align}
\big\vert \Im \langle & \mathcal{W}_N(s;0) \Omega, \mathcal{R}_{N,s}^{(3)} \mathds{1}_{\mathcal{N} \leq N} \mathcal{W}_2 (s;0) \Omega \rangle \big\vert \notag \\
\leq& \bigg(  \| \mathds{1}_{\mathcal{N} \leq N} \big(  \mathcal{W}_{N} (s;0) \Omega -  \mathcal{W}_2 (s;0) \Omega \big) \| +\| \mathds{1}_{\mathcal{N} > N} \mathcal{W}_{N} (s;0) \Omega \| \bigg)  \; \| \mathcal{R}_N^{(3)} \mathds{1}_{\mathcal{N} \leq N} \mathcal{W}_2 (s;0) \Omega   \| \; . \label{eq:estimate-R3-1}
\end{align} 
First we observe that the first term $\| \mathds{1}_{\mathcal{N} \leq N} \big(  \mathcal{W}_{N} (s;0) \Omega -  \mathcal{W}_2 (s;0) \Omega \big) \| $ is exactly the quantity we are estimating in this part of the proof and we leave it as it is. The second term however can be easily, using $\mathds{1}_{\mathcal{N} \> N} \leq \frac{\mathcal{N}^k}{N^k} $ for any $k \in \mathbb{N}$ and Lemma \ref{lemma:flucdyn} by 
\begin{align}
\| \mathds{1}_{\mathcal{N} > N} \mathcal{W}_{N} (s;0) \Omega \| \leq \frac{1}{N}  \| \mathcal{N}  \mathcal{W}_{N} (s;0)  \| \leq  \frac{C}{N} \; \label{eq:estimate-R3-2}
\end{align} 
for all $s \in \mathbb{R} $and we are left with estimating the last term of the r.h.s. of \eqref{eq:estimate-R3-1}. For this we recall that 
\begin{align}
\mathcal{R}_{N,s}^{(3)} = \mathcal{R}_{N,s}^{(3,1)} + \mathcal{R}_{N,s}^{(3,2)} 
\end{align}
with 
\begin{align}
\mathcal{R}_{N,s}^{(3,1)} = \frac{1}{\sqrt{N}} \int dxdy \; v(x-y) \varphi_t (x) a^*(q_{t,y}) a(q_{t,x}) b(q_{t,y} )  
\end{align}
and $\mathcal{R}_{N,s}^{(3,2)} $ is the adjoint of $\mathcal{R}_{N,s}^{(3,1)} $. Thus we have 
\begin{align}
\|  \mathcal{R}_N^{(3)} \mathds{1}_{\mathcal{N} \leq N} \mathcal{W}_2 (s;0) \Omega   \| \leq& \| \mathcal{R}_N^{(3,1)} \mathds{1}_{\mathcal{N} \leq N} \mathcal{W}_2 (s;0) \Omega   \| + \| \mathcal{R}_N^{(3,2)} \mathds{1}_{\mathcal{N} \leq N} \mathcal{W}_2 (s;0) \Omega   \| \label{eq:estimate-R3-3}
\end{align}
and it remains to estimate both terms. We start with the first that is, using that $\mathcal{W}_2 (s;0)$ is an element of the truncated Fock space built over the orthogonal complement of $\varphi_s$ in $L^2$, where $a^*(q_{s,x})  =a_x $, given by 
\begin{align}
\|   \mathcal{R}_N^{(3,1)}  &  \mathds{1}_{\mathcal{N} \leq N}\mathcal{W}_2 (s;0) \Omega   \|^2 \notag \\
=& \int dxdy dzdw \; v(x-y) v(z-w) \varphi_s (x) \overline{\varphi}_s (z) \notag \\
& \hspace{2cm} \times \langle  \mathds{1}_{\mathcal{N} \leq N} \mathcal{W}_2 (s;0)  \Omega, b^*_za^*_wa_w a^*_y a_xb_y \mathds{1}_{\mathcal{N} \leq N} \mathcal{W}_2 (s;0)  \Omega \rangle  \notag \\
=& \int dxdy dzdw \; v(x-y) v(z-w) \varphi_s (x) \overline{\varphi}_s (z) \notag \\
& \hspace{2cm} \times \langle  \mathds{1}_{\mathcal{N} \leq N} \mathcal{W}_2 (s;0)  \Omega, b^*_za^*_w a^*_y a_w a_xb_y \mathds{1}_{\mathcal{N} \leq N} \mathcal{W}_2 (s;0)  \Omega \rangle  \notag \\
&+ \int dxdy dz \; v(x-y) v(z-y) \varphi_s (x) \overline{\varphi}_s (z) \langle  \mathds{1}_{\mathcal{N} \leq N} \mathcal{W}_2 (s;0)  \Omega, b^*_za^*_y  a_xb_y \mathds{1}_{\mathcal{N} \leq N} \mathcal{W}_2 (s;0)  \Omega \rangle \notag \\ 
\leq& \int dz dw dy  \;  ( v^2 * \vert \varphi_t \vert^2) (y) \; \| a_y a_wb_z \mathds{1}_{\mathcal{N} \leq N} \mathcal{W}_2 (s;0)  \Omega \|^2 \notag \\
 &+  \int dydz ( v* \vert \varphi_t\vert^2 )(y) \| a_y b_z \mathds{1}_{\mathcal{N} \leq N} \mathcal{W}_2 (s;0)  \Omega \|^2 \notag \\
\leq& \| v^2 * \vert \varphi_s \vert^2 \|_{L^\infty}  \| ( \mathcal{N} + 1)^{3/2} \mathcal{1}_{\mathcal{N} \leq N}  \mathcal{W}_2 (s;0)  \Omega \| \notag \\
 &= \| v^2 * \vert \varphi_s \vert^2 \|_{L^\infty}  \| ( \mathcal{N} + 1)^{3/2}  \mathcal{W}_2 (s;0)  \Omega \|
\end{align}
and since $\| v^2 * \vert \varphi_s \vert^2 \|_{L^\infty} \leq C \| v^2 \|_{L^1} \; \| \vert \varphi_s \vert^2 \|_{L^\infty} \leq C \| \varphi_s \|_{L^\infty}^2  \| v \|_{L^2} $, we find with Proposition \ref{prop:dispersive} and Lemma \ref{lemma:bogo} that 
\begin{align}
\|   \mathcal{R}_N^{(3,1)}   \mathds{1}_{\mathcal{N} \leq N}\mathcal{W}_2 (s;0) \Omega  \|^2 \leq \frac{C}{N} \frac{1}{(1 + \vert s \vert)^3} \; . 
\end{align}
The second term $\|   \mathcal{R}_N^{(3,2)}   \mathds{1}_{\mathcal{N} \leq N}\mathcal{W}_2 (s;0) \Omega  \|^2$ of \eqref{eq:estimate-R3-3} can be bounded similarly and we arrive with \eqref{eq:estimate-R3-2} from \eqref{eq:estimate-R3-3} at 
\begin{align}
\int_0^t ds \; \big\vert \Im \langle & \mathcal{W}_N(s;0) \Omega, \mathcal{R}_{N,s}^{(3)} \mathds{1}_{\mathcal{N} \leq N} \mathcal{W}_2 (s;0) \Omega \rangle \big\vert \notag \\
&\leq \frac{C}{\sqrt{N}} \int_0^t ds \; \frac{1}{(1+ \vert s \vert)^{3/2}}   \| \mathds{1}_{\mathcal{N} \leq N}  \mathcal{W}_{N} (s;0) \Omega \|^2 -  \mathcal{W}_2 (s;0) \Omega \| + \frac{C}{N^{3/2}} \label{eq:R3-final} \; . 
\end{align}

The forth term of the remainder contains as many creation as annihilation operators, thus we write 
\begin{align}
\Im \langle &  \mathcal{W}_N (s;0) \Omega, \mathcal{R}_{N,s}^{(4)} \mathds{1}_{\mathcal{N} \leq N} \mathcal{W}_2 (s;0) \Omega \rangle \notag \\
=& \Im \langle  \mathcal{W}_N (s;0) \Omega, \mathds{1}_{\mathcal{N} \leq N} \mathcal{R}_{N,s}^{(4)} \mathds{1}_{\mathcal{N} \leq N} \mathcal{W}_2 (s;0) \Omega \rangle \notag \\
=& \Im \langle  \mathds{1}_{\mathcal{N} \leq N} \big(\mathcal{W}_N (s;0) - \mathcal{W}_2 (s;0) \big) \Omega, \mathds{1}_{\mathcal{N} \leq N} \mathcal{R}_{N,s}^{(4)} \mathds{1}_{\mathcal{N} \leq N} \mathcal{W}_2 (s;0) \Omega \rangle 
 \end{align}
where we used in the last step that $\mathcal{R}_{N,s}^{(4)}$ is self-adjoint. We furthermore write  
\begin{align}
\Im \langle &  \mathcal{W}_N (s;0) \Omega, \mathcal{R}_{N,s}^{(4)} \mathds{1}_{\mathcal{N} \leq N} \mathcal{W}_2 (s;0) \Omega \rangle \notag \\
=& \frac{1}{2N}\int dxdy \; v(x-y) \; \notag \\
& \hspace{1cm}  \times \langle \mathds{1}_{\mathcal{N} \leq N} \big(\mathcal{W}_N (s;0) - \mathcal{W}_2 (s;0) \big) \Omega, a^*(q_{s,x})a^*(q_{s,y}) a(q_{s,x}) a(q_{s,y}) \mathds{1}_{\mathcal{N} \leq N} \mathcal{W}_2 (s;0) \Omega \rangle 
 \end{align}
that we estimate with 
 \begin{align}
 \big \vert \Im \langle &  \mathcal{W}_N (s;0) \Omega, \mathcal{R}_{N,s}^{(4)} \mathds{1}_{\mathcal{N} \leq N} \mathcal{W}_2 (s;0) \Omega \rangle \big\vert \notag \\
 &\leq \frac{1}{2N} \bigg( \int dx dy \; \vert v(x-y) \vert \;  \|  a(q_{s,x})a(q_{s,y})  \mathds{1}_{\mathcal{N} \leq N} \big(\mathcal{W}_N (s;0) - \mathcal{W}_2 (s;0) \big) \Omega  \|^2 \bigg)^{1/2} \notag \\
 & \hspace{1cm} \times \bigg( \int dx dy \; \vert v(x-y) \vert \;  \|  a(q_{s,y})a(q_{s,x}) \mathcal{W}_2 (s;0) \Omega \|^2 \bigg)^{1/2}   
 \end{align}
leading with standard arguments to 
\begin{align}
 \big \vert \Im \langle &  \mathcal{W}_N (s;0) \Omega, \mathcal{R}_{N,s}^{(4)} \mathds{1}_{\mathcal{N} \leq N} \mathcal{W}_2 (s;0) \Omega \rangle \big\vert \notag \\
 \leq& \frac{C}{N} \| v \|_{\infty}^{1/2} \| (\mathcal{N}  + 1)\big(\mathcal{W}_N (s;0) - \mathcal{W}_2 (s;0) \big) \Omega \| \;  \notag \\
 & \hspace{2cm} \times \bigg( \int dxdy \; \vert v(x-y) \vert \;  \| a(q_{s,y} ) a(q_{s,x}) \mathcal{W}_2 (s;0) \Omega \|^2 \bigg)^{1/2}\label{eq:R4-step1}
\end{align}
On the one hand, we get with Lemma \ref{lemma:bogo} and Lemma \ref{lemma:flucdyn} 
\begin{align}
\| ( \mathcal{N} + 1 ) \big( \mathcal{W}_N (s;0) - \mathcal{W}_2 (s;0) \big) \Omega \| \leq C \label{eq:R4-1-final} 
\end{align}
for all $s \in \mathbb{R}$. On the other hand, we find 
\begin{align}
\int dxdy &  \; \vert v(x-y) \vert \;  \| a(q_{s,y} ) a(q_{s,x}) \mathcal{W}_2 (s;0) \Omega \|^2 \notag \\
=& \int dxdy \; \vert v(x-y) \vert \; \langle \Omega, \mathcal{W}_2^* (s;0) a^*(q_{s;y} ) a^*(q_{s,x}) a(q_{s,y}) a(q_{s,x}) \mathcal{W}_2(s;0) \Omega \rangle 
\end{align}
The Fock space vector $\mathcal{W}_2(s;0) \Omega$ is, by definition, an element of the truncated Fock space $\mathcal{F}_{\perp \varphi_s}^{\leq N}$ built over the orthogonal complement of $\varphi_s$ in $L^2$, thus $a(q_{s,x}) \mathcal{W}_2(s;0) \Omega = a_x \mathcal{W}_2(s;0) \Omega$ and we are left with estimating 
\begin{align}
\int dxdy &  \; \vert v(x-y) \vert \;  \| a(q_{s,y} ) a(q_{s,x}) \mathcal{W}_2 (s;0) \Omega \|^2 \notag \\
=& \int dxdy \; \vert v(x-y) \vert \; \langle \Omega, \mathcal{W}_2^* (s;0) a^*_y a^*_x a_y a_x \mathcal{W}_2(s;0) \Omega \rangle \; . 
\end{align}
Infact, we will compute this expectation value explicitely using the properties of the Bogoliubov dynamics.  More precisely, writing in abuse of notation
\begin{align}
\sigma_{(s;0)} (x,y) = \sigma_y ( x) 
\end{align}
we get 
\begin{align}
\int dxdy &  \; \vert v(x-y) \vert \;  \| a(q_{s,y} ) a(q_{s,x}) \mathcal{W}_2 (s;0) \Omega \|^2 \notag \\ 
=& \int dxdy \; \vert v(x-y) \vert \langle \Omega, a( \sigma_x ) \big( a^*( \gamma_y) + a( \sigma_y) \big) \big( a ( \gamma_x) + a^*( \sigma_x) \big) a^* ( \sigma_y) \Omega \rangle  \; . 
\end{align}
The expectation value vanishes whenever the number of creation operators does not match the number of annihilation operators, therefore with the commutation relations 
\begin{align}
\int dxdy &  \; \vert v(x-y) \vert \;  \| a(q_{s,y} ) a(q_{s,x}) \mathcal{W}_2 (s;0) \Omega \|^2 \notag \\ 
=& \int dxdy \; \vert v(x-y) \vert \langle \Omega, a( \sigma_x )  a^*( \gamma_y)  a ( \gamma_x)  a^* ( \sigma_y) \Omega \rangle  \notag \\
&+  \int dxdy \; \vert v(x-y) \vert \langle \Omega, a( \sigma_x ) a( \sigma_y) a^*( \sigma_x)  a^* ( \sigma_y) \Omega \rangle  \notag \\
=&  \int dydx \vert v(x-y) \vert \; \bigg( \vert \langle \sigma_y, \sigma_x \rangle \vert^2 + \| \sigma_x \|_{L^2}^2 \| \sigma_y \|_{L^2}^2 + \langle \sigma_x, \gamma_y \rangle \langle \gamma_x, \sigma_y \rangle \bigg) \label{eq:R4-ausrechnen}
\end{align} 
Next we show that the three terms of the r.h.s. of \eqref{eq:R4-ausrechnen} are bounded in terms of $\| \sigma \|_{L^\infty \times L^2}$. In fact, the first two terms of the r.h.s. are bounded by
\begin{align}
\int dydx & \vert v(x-y) \vert \; \bigg( \vert \langle \sigma_y, \sigma_x \rangle \vert^2 + \| \sigma_x \|_{L^2}^2 \| \sigma_y \|_{L^2}^2  \bigg) \notag \\
\leq& 2 \int dxdy \; \vert v(x-y) \vert \| \sigma_x\|_{L^2}^2 \|\sigma_y \|_{L^2}^2 \notag \\
\leq& 2 \| \sigma \|_{L^\infty \times L^2}^2 \int dxdy \vert v(x-y) \vert \; \| \sigma \|_{L^2}^2 \notag \\
&\leq 2 \| \sigma \|_{L^\infty \times L^2}^2 \| v \|_{L^1} \| \sigma \|_{\rm HS}^2 \; . \label{eq:R4-1}
\end{align}
For the third term of the r.h.s. of \eqref{eq:R4-ausrechnen} we recall that we have the decomposition $\gamma (x;y) = \delta (x-y) + \eta (x;y)$ and $\eta \in L^2 \times L^2$ is bounded in Hilbert-Schmidt norm independent of time (see Lemma \ref{lemma:gamma,sigma}). With this decomposition the third term of the r.h.s. of \eqref{eq:R4-ausrechnen} is of the form 
\begin{align}
 \int dydx  & \vert v(x-y) \vert \;  \langle \sigma_x, \gamma_y \rangle \langle \gamma_x, \sigma_y \rangle \notag \\
=& \int dxdy \; \vert v(x-y) \vert \bigg( \vert \sigma(x;y) \vert^2 + 2 \Re \sigma (x;y) \langle \eta_x, \sigma_y \rangle + \langle \sigma_x, \eta_y \rangle \langle \eta_x, \sigma_y \rangle \bigg) \notag 
\end{align} 
and  bounded by 
\begin{align}
 \int dydx  & \vert v(x-y) \vert \;  \langle \sigma_x, \gamma_y \rangle \langle \gamma_x, \sigma_y \rangle \notag \\ 
\leq &  4  \int dxdy \; \vert v(x-y) \vert  \bigg( \vert \sigma(x;y) \vert^2 + \| \eta_x \|_{L^2}^2 \| \sigma_y \|_{L^2}^2  \bigg) \notag \\
\leq& 4 \| v \|_{\infty} \| \sigma \|_{L^\infty \times L^2}^2 \bigg( \| \sigma \|_{\rm HS}^2 + \| \eta \|_{\rm HS}^2 \bigg) \; . \label{eq:R4-2}
\end{align} 
Hence, \eqref{eq:R4-1} and \eqref{eq:R4-2} imply together with Lemma \ref{lemma:gamma,sigma} that 
\begin{align}
\int dxdy &  \; \vert v(x-y) \vert \;  \| a(q_{s,y} ) a(q_{s,x}) \mathcal{W}_2 (s;0) \Omega \|^2 \notag \\  
\leq& \frac{C}{ (1+ \vert s \vert)^3} \label{eq:R4-ausrechnen-final}
\end{align}
for some constant $C>0$. Summing up \eqref{eq:R4-1-final} and \eqref{eq:R4-ausrechnen-final} we get for the forth remainder (i.e. the term from \eqref{eq:R4-step1}) 
\begin{align}
\big\vert \Im \langle \mathcal{W}_N (s;0) \Omega, \mathcal{R}_{N,s}^{(4)} \mathds{1}_{\mathcal{N} \leq N} \mathcal{W}_2 (s;0) \Omega \rangle \big\vert \leq  \frac{C}{N} \frac{1}{(1+\vert s \vert)^{3/2}} 
\end{align}
and, integrating over time $s \in [0,t]$ we arrive at 
\begin{align}
\int_0^t ds\; \big\vert \Im \langle \mathcal{W}_N (s;0) \Omega, \mathcal{R}_{N,s}^{(4)} \mathds{1}_{\mathcal{N} \leq N} \mathcal{W}_2 (s;0) \Omega \rangle \big\vert \leq  \frac{C}{N}  \label{eq:R4-final}
\end{align}
for all times $t \in \mathbb{R}$.

Summing up \eqref{eq:R1-final}, \eqref{eq:R2-final}, \eqref{eq:R3-final} and \eqref{eq:R4-final}, we arrive at the final estimate of Theorem \ref{thm:norm}. 

\end{proof}

\subsection{Proof of Corollary \ref{cor:largetimes}} 

\label{sec:cor}

In this section we finally prove Corollary that is a consequence of Theorem \ref{thm:norm} on the Bogoliubov approximation of the fluctuation dynamics and the dispersive estimates of the symplectic Bogoliubov dynamics in Theorem \ref{thm:bogo-disp}. 

\begin{proof}[Proof of Corollary \ref{cor:largetimes}] 
In fact, Theorem \ref{thm:norm} together with the triangle inequality shows that 
\begin{align}
\| &  \mathcal{U}_{N,t} \psi_{N,t} -  e^{-i (t-t_0) d\Gamma ( \Delta)} \mathcal{W}_2 (t_0;0)\Omega  \|^2  \notag \\
&\leq \| \mathcal{U}_{N,t} \psi_{N,t} - \mathcal{W}_2 (t;0)  \Omega  \|^2 + \| \mathcal{W}_2 (t;0)  \Omega  - e^{-i (t-t_0) d\Gamma ( \Delta)} \mathcal{W}_2 (t_0;0) \Omega  \|^2 \notag \\
 &\leq  \| \mathcal{W}_2 (t;0)  \Omega  -  e^{-i (t-t_0) d\Gamma ( \Delta)} \mathcal{W}_2 (t_0;0)\Omega  \|^2 + \frac{C}{N}
\end{align}
for some $C>0$, and therefore for the proof of Corollary \ref{cor:largetimes} it suffices to compare the many-body Bogoliubov dynamics with the free dynamics. For this, we follow the strategy of Theorem's \ref{thm:norm} proof and write with Duhamel's formula 
\begin{align}
\| &  \mathcal{W}_2 (t;0) \Omega  - e^{-i (t-t_0) d\Gamma ( \Delta)} \mathcal{W}_2 (t_0;0)  \Omega \|^2 \notag \\
=& 2 \Im  \int_{t_0}^t \; ds \; \langle  e^{-i (s-t_0) d\Gamma ( \Delta)} \mathcal{W}_2 (t_0;0) \Omega,  d\Gamma ( H_s - \Delta )  \mathcal{W}_2 (s;0)  \Omega \rangle  \notag \\
&+  \Im  \int_{t_0}^t \; ds \; \langle  e^{-i (s-t_0) d\Gamma ( \Delta)} \mathcal{W}_2 (t_0;0) \Omega,  \int dxdy \big[  K_s (x;y) a_x^*a_y^* + {\rm h.c.}\big]   \mathcal{W}_2 (s;0)  \Omega \rangle . \label{eq:corr1}
\end{align}
Since $d\Gamma ( H_s - \Delta)$ is a self-adjoint operator, we can write the first line of the r.h.s. as 
\begin{align}
\Im   & \int_{t_0}^t \; ds \; \langle  e^{-i (s-t_0) d\Gamma ( \Delta)} \mathcal{W}_2 (t_0;0) \Omega,  d\Gamma ( H_s - \Delta )  \mathcal{W}_2 (s;0)  \Omega \rangle \notag \\
=& \Im  \int_{t_0}^t \; ds \; \langle  \big( e^{-i (s-t_0) d\Gamma ( \Delta)} \mathcal{W}_2 (t_0;0) \Omega - \mathcal{W}_2 (s;0)  \Omega  \big),  d\Gamma ( H_s - \Delta )  \mathcal{W}_2 (s;0)  \Omega \rangle 
\end{align}
and estimate the term with similar ideas as in the proof of Theorem \ref{thm:norm} by 
\begin{align}
 & \int_{t_0}^t \; ds \; \big\vert \langle  e^{-i (s-t_0) d\Gamma ( \Delta)} \mathcal{W}_2 (t_0;0) \Omega,  d\Gamma ( H_s - \Delta )  \mathcal{W}_2 (s;0)  \Omega \rangle \big\vert \notag \\
\leq &  \int_{t_0}^t ds \; \|   e^{-i (s-t_0) d\Gamma ( \Delta)} \mathcal{W}_2 (t_0;0) \Omega - \mathcal{W}_2 (s;0)  \Omega  \| \; \| H_s - \Delta \|_{\rm op} \; \| \mathcal{N} \mathcal{W}_2 (s;0)  \Omega \| \; . 
\end{align}
Since 
\begin{align}
\| H_s  - \Delta \|_{\rm op} \leq \| v*  \vert\varphi_s \vert^2\|_{\infty} +  \| \widetilde{K}_{1,s} \|_{\rm op} \leq \frac{C}{(1+ \vert s \vert)^3}
\end{align}
from Lemma \ref{lemma:K} and Proposition \ref{prop:dispersive}, we arrive, using that $\| \mathcal{N} \mathcal{W}_2 (s;0)  \Omega \|  \leq C $ for all $s \in \mathbb{R}$ by Lemma \ref{lemma:bogo}, at 
\ref{thm:norm} by 
\begin{align}
 & \int_{t_0}^t \; ds \; \big\vert \langle  e^{-i (s-t_0) d\Gamma ( \Delta)} \mathcal{W}_2 (t_0;0) \Omega,  d\Gamma ( H_s - \Delta )  \mathcal{W}_2 (s;0)  \Omega \rangle \big\vert \notag \\
\leq &  C \int_{t_0}^t ds \; \|   e^{-i (s-t_0) d\Gamma ( \Delta)} \mathcal{W}_2 (t_0;0) \Omega - \mathcal{W}_2 (s;0)  \Omega \| \;  \frac{1}{(1+ \vert s\vert)^3} \;  \label{eq:corr2}
\end{align}
that is the final estimate for the first term of the r.h.s. of \eqref{eq:corr1}. For the second term of the r.h.s. we proceed similarly and find 
\begin{align}
\int_{t_0}^t &  \; ds \; \vert \langle  e^{-i (s-t_0) d\Gamma ( \Delta)} \mathcal{W}_2 (t_0;0) \Omega,  \int dxdy \big[  K_s (x;y) a_x^*a_y^* + {\rm h.c.}\big]   \mathcal{W}_2 (s;0)  \Omega \rangle  \vert \notag \\
\leq& \int_{t_0}^t ds \; \|   e^{-i (s-t_0) d\Gamma ( \Delta)} \mathcal{W}_2 (t_0;0) \Omega - \mathcal{W}_2 (s;0)  \Omega \|  \; \| \widetilde{K}_{2,s} \|_{\rm HS} \; \| ( \mathcal{N}  +1 ) \mathcal{W}_{2}(s;0) \Omega \| \notag \\
\leq& C \int_{t_0}^t ds \; \|   e^{-i (s-t_0) d\Gamma ( \Delta)} \mathcal{W}_2 (t_0;0) \Omega - \mathcal{W}_2 (s;0)  \Omega \|  \frac{1}{(1+ \vert s \vert)^{3/2}} \label{eq:corr3}
\end{align}
where we concluded again by Lemma \ref{lemma:K} and Lemma \ref{lemma:bogo}. Summarizing, \eqref{eq:corr1}, \eqref{eq:corr2} and \eqref{eq:corr3} imply that 
\begin{align}
\|   & e^{-i (t-t_0)  d\Gamma ( \Delta)} \mathcal{W}_2 (t_0;0) \Omega - \mathcal{W}_2 (t;0)  \Omega \|^2 \notag \\
& \leq C \int_{t_0}^t ds \; \|   e^{-i (s-t_0) d\Gamma ( \Delta)} \mathcal{W}_2 (t_0;0) \Omega - \mathcal{W}_2 (s;0)  \Omega \|  \frac{1}{(1+ \vert s \vert)^{3/2}} 
\end{align}
and therefore by a Gronwall type argument that 
\begin{align}
\|   e^{-i (t-t_0) d\Gamma ( \Delta)} \mathcal{W}_2 (t_0;0) \Omega - \mathcal{W}_2 (t;0)  \Omega \|^2  \leq \frac{C}{(1+ \vert t_0 \vert)^{1/2}} \leq \frac{C}{N}
\end{align}
for all $t_0 \geq C  N^2$ which implies the desired estimate.  
\end{proof}

\bigskip

\noindent{\bf Acknowledgments.}  The research of P.T.N. and S.R. is partially funded by the European Research Council via the ERC Consolidator Grant RAMBAS (Project Nr. 10104424) and the Deutsche Forschungsgemeinschaft (TRR 352, Project Nr. 470903074). A.S. is partially supported by  NSF-DMS (No. 2205931).
Part of this work was carried out while the third author was visiting LMU, and he thanks the host for the invitation and support.

\section*{Declarations}

\subsection*{Conflict of interest} The authors declare that they have no conflict of interest.

\subsection*{Data availability} Data sharing not applicable to this article as no datasets were generated or analysed.

\end{document}